\documentclass[pra, twocolumn, nofootinbib, preprintnumbers, superscriptaddress]{revtex4-1}

% Packages
\usepackage{graphicx, color, graphpap}      % Include figure files
\usepackage{amsmath}
\usepackage{enumitem}
\usepackage{amssymb}
\usepackage{amsthm}
\usepackage{pstricks}
\usepackage{float}
\usepackage[colorlinks=true, citecolor=blue, linkcolor=red]{hyperref}
\usepackage[T1]{fontenc}
\usepackage{bbm}
\usepackage{dsfont}
\usepackage[linesnumbered,ruled,vlined]{algorithm2e}
\SetKwInput{kwInit}{Init}
\usepackage{mathtools}
\DeclarePairedDelimiter\ceil{\lceil}{\rceil}

\usepackage{tikz}
\usepackage{graphicx}
\usetikzlibrary{positioning}
\usepackage{graphicx}
\usepackage{xcolor}
\usepackage[caption=false]{subfig}
\usepackage[ruled,vlined]{algorithm2e}
\usepackage{siunitx}
\usepackage{qcircuit}
\usepackage{pifont} % http://ctan.org/pkg/pifont

\makeatletter
\setlength{\@fptop}{0pt}
\makeatother

% Defined commands

%\newcommand{\pat}[1]{\textcolor{red}{[P: #1]}}

\newcommand{\ket}[1]{| #1 \rangle}

\newcommand{\ketbra}[2]{| \hspace{1pt} #1 \rangle \langle #2 \hspace{1pt} |}

%\newcommand{\ket}[1]{|#1\rangle}               %ket
              %colon in math with less space
%\newcommand{\bra}[1]{\langle #1|}              %bra

        %dyad
      %abstract inner product
      %quantum inner product
 %matrix element

\newcommand\crule[3][black]{\textcolor{#1}{\rule{#2}{#3}}}

\newcommand{\E}{\mathcal{E}}

\newcommand{\Tr}{{\rm Tr}}

               %average
\renewcommand{\geq}{\geqslant}
\renewcommand{\leq}{\leqslant}

\renewcommand{\Re}{\text{Re}}

\renewcommand{\vec}[1]{\boldsymbol{#1}}  % Bold vectors instead of arrow vectors

 %Frobenius inner product

%Greek Letters

 %Latex \th = thor n

% Commands for data points

\newcommand{\rhox}{\rho_{\vec{x}}}
\newcommand{\rhoxi}{\rho_{\vec{x}_i}}
\newcommand{\rhotildex}{\tilde{\rho}_{\vec{x}}}
\newcommand{\rhotildexi}{\tilde{\rho}_{\vec{x}_i}}

\newcommand{\sigmax}{\sigma_{\vec{x}}}

\newcommand{\yhat}{\hat{y}}

% Commands for noise channels
\newcommand{\depo}{\E_p^{\text{depo}}}          % Depolarizing channel
\newcommand{\depon}{\E_p^{\text{GD}}}           % Global Depolarizing channel

\newcommand{\flip}{\E_p^{\text{BF}}}            % Bit-flip channel
\newcommand{\deph}{\E_p^{\text{deph}}}          % Dephasing channel
\newcommand{\paul}{\E_{\mathbf{p}}^{\text{P}}}  % Pauli channel
\newcommand{\damp}{\E_p^{\text{AD}}}            % Amplitude damping channel
                   % Measurement noise channel

% General useful commands

% Commands for Datasets
\newcommand{\moons}{{\fontfamily{cmtt}\selectfont moons}}
\newcommand{\vertical}{{\fontfamily{cmtt}\selectfont vertical}}
\newcommand{\diagonal}{{\fontfamily{cmtt}\selectfont diagonal}}

% Commands for Definitions, Theorems etc.
\newcommand{\defref}[1]{Definition \ref{#1}}
\newcommand{\secref}[1]{Section \ref{#1}}
\newcommand{\appref}[1]{Appendix \ref{#1}}
\newcommand{\theref}[1]{Theorem \ref{#1}}
\newcommand{\corrref}[1]{Corollary \ref{#1}}
\newcommand{\figref}[1]{Figure \ref{#1}}
\renewcommand{\eqref}[1]{(\ref{#1})}
% \renewcommand{\algref}[1]{Algorithm \ref{#1}}

% Theorems, Lemmas, etc.
\newtheoremstyle{example}{\topsep}{\topsep}%
{}%         Body font
{}%         Indent amount (empty = no indent, \parindent = para indent)
%{\parindent}%         Indent amount (empty = no indent, \parindent = para indent)
%{\itshape}% Thm head font
{\bfseries}% Thm head font
{:}%        Punctuation after thm head
{   }%     Space after thm head (\newline = linebreak)
{\thmname{#1}\thmnumber{ #2}}%\thmnote{ #3}}%         Thm head spec
\theoremstyle{example}
%[subsection]
%\theoremstyle{plain}
%\newtheorem*{thm3}{Lemma S1}
\newtheorem{theorem}{Theorem}
\newtheorem{lemma}{Lemma}
\newtheorem{corollary}{Corollary}

\theoremstyle{definition}
\newtheorem{definition}{Definition}

%\newtheorem{remark}[theorem]{Remark}

%\newenvironment{specialproof}{\paragraph{Proof:}}{\hfill$\square$}

% Commands for Encoding

\newtheorem*{theorem*}{Theorem}

\def\orcid#1{\kern -0.4em\href{https://orcid.org/#1}{\includegraphics[keepaspectratio,width=0.7em]{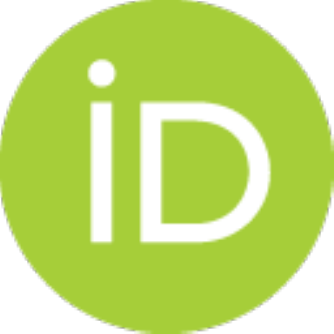}}}

\DeclareMathOperator*{\argmin}{arg\,min}

\renewcommand{\>}{\rangle}
\newcommand{\<}{\langle}

% check marks and x marks

\definecolor{ForestGreen}{RGB}{34, 139, 34}
% 0-191-255	

%\oddsidemargin 0 cm \evensidemargin 0 cm
%\textwidth 16.5 cm
%\topmargin -1.0 cm \textheight 23 cm
\long\def\ca#1\cb{} %Use for commenting out: \ca...\cb
\hbadness=99999

\begin{document}
\title{Robust data encodings for quantum classifiers}

\author{Ryan LaRose}
\affiliation{Department of Computational Mathematics, Science, and Engineering, Michigan State University, East Lansing, MI 48823, USA.}
%\affiliation{Department of Physics and Astronomy, Michigan State University, East Lansing, MI 48823, USA.}

\author{Brian Coyle  \orcid{0000-0002-3436-8458}}
\affiliation{School of Informatics, University of Edinburgh, 10 Crichton Street, United Kingdom.}
\email{brian.coyle@ed.ac.uk}

\begin{abstract}
    Data representation is crucial for the success of machine learning models. In the context of quantum machine learning with near-term quantum computers, equally important considerations of how to efficiently input (encode) data and effectively deal with noise arise. In this work, we study data encodings for binary quantum classification and investigate their properties both with and without noise. For the common classifier we consider, we show that encodings determine the classes of learnable decision boundaries as well as the set of points which retain the same classification in the presence of noise. After defining the notion of a robust data encoding, we prove several results on robustness for different channels, discuss the existence of robust encodings, and prove an upper bound on the number of robust points in terms of fidelities between noisy and noiseless states. Numerical results for several example implementations are provided to reinforce our findings. 
\end{abstract}

% Old abstract 
% \begin{abstract}
% The performance of trainable quantum circuits when used as machine learning models in the presence of quantum noise is a relatively understudied problem. In this work, we focus on the data encoding aspect of supervised learning: how to encode classical data in a quantum state to be processed by a quantum classification circuit, subject to noise. In particular, we study single qubit encoding strategies and show that quantum classifiers can be robust to several noise models in the prediction phase, such as depolarizing noise. Furthermore, there always exists a robust data encoding which preserves classification results. However, we demonstrate that there is a tradeoff between how well an encoding facilitates learning, and how robust it is to noise. We provide extensive numerical results supporting these conclusions and a variational algorithm for learning robust encodings. Finally, we extend this investigation to a two-qubit classifier using the Iris dataset. In particular, we relate the classification error accrued due to noise to fidelities of the underlying quantum states.
% \end{abstract}

% =============================================================================
% =============================================================================
\maketitle
% =============================================================================
% =============================================================================

% =============================================================================
% =============================================================================
% =============================================================================
% =============================================================================
\section[\color{blue} Introduction]{Introduction} \label{sec:intro}
% ============================================================================
% =============================================================================
% =============================================================================
% =============================================================================

% Fault tolerance, NISQ, and Quantum Supremacy
Fault-tolerant quantum computers which can efficiently simulate physics~\cite{feynman_simulating_1982} and factor prime numbers~\cite{shor_polynomial-time_1997} lie on an unclear timeline. Current quantum processors in laboratories or in the cloud~\cite{larose_overview_2019} contain fewer than one-hundred qubits with short lifetimes and noisy gate operations. Such devices cannot yet implement fault-tolerant procedures and are said to belong to the noisy, intermediate scale quantum (NISQ) era~\cite{preskill_quantum_2018}. While the gap between NISQ capabilities and required resources for, e.g., factoring 2048-bit integers is large~\cite{gidney_how_2019}, recent hardware advancements have led to the first quantum computation which classical computers cannot emulate~\cite{arute_quantum_2019}. This combination of improved hardware and unclear timeline for fault-tolerance makes the question ``What (useful) applications can NISQ computers implement?'' both interesting and important to consider. %Additionally, NISQ computers create a need for error mitigation techniques [CITE] or algorithms which exhibit robustness properties to particular types of noise.

% Variational quantum algorithms and applications
A leading candidate for NISQ applications is a class of algorithms known as variational quantum algorithms (VQAs)~\cite{mcclean_theory_2016}. VQAs use a NISQ computer to evaluate an objective function and a classical computer to adjust input parameters to optimize the function. VQAs have been proposed or used for many applications including quantum chemistry~\cite{peruzzo_variational_2014}, approximate optimization~\cite{farhi_quantum_2014}, quantum state diagonalization~\cite{larose_variational_2019}, quantum compilation~\cite{khatri_quantum-assisted_2019, jones_quest_2019}, quantum field theory simulation~\cite{klco_quantum-classical_2018, klco_digitization_2019}, linear systems of equations~\cite{bravo-prieto_variational_2019, xu_variational_2019, huang_near-term_2019}, and even quantum foundations~\cite{arrasmith_variational_2019}. More fundamental questions about the computational complexity~\cite{mcclean_theory_2016, biamonte_universal_2019}, trainability~\cite{mcclean_barren_2018, grant_initialization_2019, cerezo_cost-function-dependent_2020} and noise-resilience~\cite{sharma_noise_2020} of VQAs have also been considered. 

% Quantum machine learning
Variational quantum algorithms have strong overlap with machine learning algorithms, which seek to train a computer to recognize patterns by designing and minimizing a cost function defined over an input dataset~\cite{goodfellow_deep_2016}. In this context, VQAs can be considered quantum neural networks (QNNs)~\cite{farhi_classification_2018, benedetti_parameterized_2019}, and a multitude of applications from classical machine learning can be realized with (simulated) quantum computers. Such applications include generative modeling \cite{verdon_quantum_2017, benedetti_generative_2019, liu_differentiable_2018, coyle_born_2019, dallaire-demers_quantum_2018}, transfer learning~\cite{mari_transfer_2019}, and classification~\cite{farhi_classification_2018, schuld_circuit-centric_2018, schuld_supervised_2018, grant_hierarchical_2018, perez-salinas_data_2019, blank_quantum_2019, abbas_quantum_2020}. These QNN applications, along with additional techniques and applications based on quantum kernel methods \cite{schuld_quantum_2019, havlicek_supervised_2019, kubler_quantum_2019, suzuki_analysis_2019}, define the emerging field of quantum machine learning (QML)~\cite{wittek_quantum_2014, schuld_supervised_2018}\footnote{Although quantum machine learning is generally considered an emerging field, foundational ideas have been published more than two decades ago~\cite{behrman_quantum_1996}.}.  

% TODO: Add paragraph/mention about importance of good data representation in classical machine learning, plus a citation somewhere in the introduction

% Questions with/in quantum machine learning
Despite these applications, several fundamental questions lie at the forefront of QML. Perhaps the most pressing question is whether quantum models can provide any advantages over classical models. While some (generally) negative theoretical results have been shown for sampling complexity~\cite{arunachalam_optimal_2017} and information capacity~\cite{wright_capacity_2019}, other (generally) positive results have been shown for expressive power~\cite{du_expressive_2018, coyle_born_2019} and problem-specific sampling complexity~\cite{servedio_equivalences_2004, arunachalam_quantum_2020}, 
and the question is largely open. Any practical experiments toward demonstrating advantage give rise to additional crucial questions. One question, general to all NISQ applications, is how to deal with noise present in NISQ computers. A second question, specific to QML applications, is how to (efficiently) input or \textit{encode} data into a QML model.

% The two questions we deal with in particular
For the first question, several general-purpose strategies for dealing with noise such as dynamical decoupling~\cite{viola_dynamical_1999}, probabilistic error cancellation \& zero-noise extrapolation~\cite{temme_error_2017}, and quantum subspace expansion~\cite{mcclean_decoding_2020} have been proposed. However, the robustness (resilience) of particular VQAs in the presence of noise has not been thoroughly investigated, with the exception of preliminary studies in quantum compiling~\cite{khatri_quantum-assisted_2019,sharma_noise_2020} and approximate optimization~\cite{xue_effects_2019, alam_analysis_2019, marshall_characterizing_2020}.
%, and some numerical results relating to classification problems ~\cite{grant_hierarchical_2018, blank_quantum_2019}. 
Understanding robustness properties of VQAs is key to making progress towards practical NISQ implementations. For the second question on inputting data, most studies in QML focus on the design of the QNN while assuming a full wavefunction representation of arbitrary data~\cite{schuld_circuit-centric_2018,farhi_classification_2018, harrow_quantum_2009, kerenidis_quantum_2016, kerenidis_quantum_2018, dervovic_quantum_2018, zhao_smooth_2018}.
This assumption is not suitable for practical implementations as it is well-known that preparing an arbitrary quantum state requires a number of gates exponential in the number of qubits~\cite{Knill_Laflamme_Milburn_2001}.

% Summary of our results
In this paper, we study both of the above questions in the context of binary quantum classification. In particular, we define different methods of encoding data and analyze their properties both with and without noise. For the noiseless case, we demonstrate that different data encodings lead to different sets of learnable decision boundaries for the quantum classifier. For the noisy case, given a quantum channel, we define the notion of a robust point for the quantum classifier --- a generalization of fixed points for quantum operations. We completely characterize the set of robust points for example quantum channels, and discuss how encoding data into robust points is a type of problem-specific error mitigation for quantum classifiers.

% Structure of the paper
To these ends, the rest of the paper is organized as follows. Section~\ref{sec:definitions} presents definitions which are used in the remainder to prove our results, including a formal definition of a common binary quantum classifier we consider (Sec.~\ref{subsec:intro/quantum-classifiers}), definitions and examples of data encodings (Sec.~\ref{sec:data-encodings}), noise channels we consider (Sec.~\ref{sec:noise-models}), and definitions of robust points and robust encodings (Sec.~\ref{subsec:robustness-definition}). After this, we present analytic results and proofs for robustness in Section~\ref{sec:noise_robustness}. We begin by showing that different encodings lead to different classes of learnable decision boundaries in Sec.~\ref{ssec:classes_learnable_decision_boundaries}, then characterize the set of robust points for example quantum channels in Sec.~\ref{ssec:characterize_robust_points}. In Sec.~\ref{subsec:robustness-results}, we state and prove robustness results, and in Sec.~\ref{subsec:existence-of-robust-encodings} we discuss the existence of robust encodings. Finally, we prove an upper bound on the number of robust points in terms of fidelities between noisy and ideal states in Sec.~\ref{ssec:fidelity_bounds}. Last, we include several numerical results in Sec.~\ref{sec:numerical_results} that reinforce and extend our findings, and finally conclude in Sec.~\ref{sec:conclusions}.

% =============================================================================
% =============================================================================
% =============================================================================
% =============================================================================
\section{Preliminary Definitions} \label{sec:definitions}
% =============================================================================
% =============================================================================
% =============================================================================
% =============================================================================

% In this section, we formally define the quantum classifier we study in Sec.~\ref{subsec:intro/quantum-classifiers}, discuss the general problem of data encoding and define several examples in Sec.~\ref{sec:data-encodings}, give brief exposition into quantum noise and define common quantum channels we consider in Sec.~\ref{sec:noise-models}, and define robust points of quantum classifiers in Sec.~\ref{subsec:robustness-definition}. 

% =============================================================================
% =============================================================================
\subsection{Quantum Classifiers} \label{subsec:intro/quantum-classifiers}
% =============================================================================
% =============================================================================

% TODO: Attempt to shorten subsection

% Definition of classical classifiers
In classical machine learning, classification problems are a subclass of supervised learning problems in which the computer model is presented with labeled data and asked to learn some pattern. For binary classification, the input is a set of labeled feature vectors
\begin{equation} \label{eqn:labeled-data-for-classifier}
    \{(\vec{x}_i, y_i)\}_{i = 1}^{M}
\end{equation} 
where $\vec{x}_i \in \mathcal{X}$ is a feature vector, $\mathcal{X}$ is an arbitrary set%
\footnote{In practice we typically have $\mathcal{X} = \mathbb{R}^N$, but other sets --- e.g., $\mathcal{X} = \mathbb{Z}^N$ or $\mathcal{X} \in \mathbb{Z}_2^N$ --- are possible, so we write $\mathcal{X}$ for generality.}
, and ${y}_i \in \{0, 1\}$ is a binary label. Given this data, the goal of the learner is to output a rule $f: \mathcal{X} \rightarrow \{0, 1\}$ which accurately classifies the data and can be used to make predictions on new data. In practice, this is accomplished by defining a model (e.g., a neural network) and cost function, then minimizing this cost function by ``training'' the model over the input data. 

% Note about restricting to binary classification problems
In this work, we restrict to binary (quantum) classifiers, henceforth called simply (quantum) classifiers. We remark that multi-label classification problems can be reduced to binary classification by standard methods. % TODO: Add citation

% Figure of quantum circuit diagrams for the quantum classifier (n qubits, 1 qubit)
\begin{figure}
    \centering
    \includegraphics{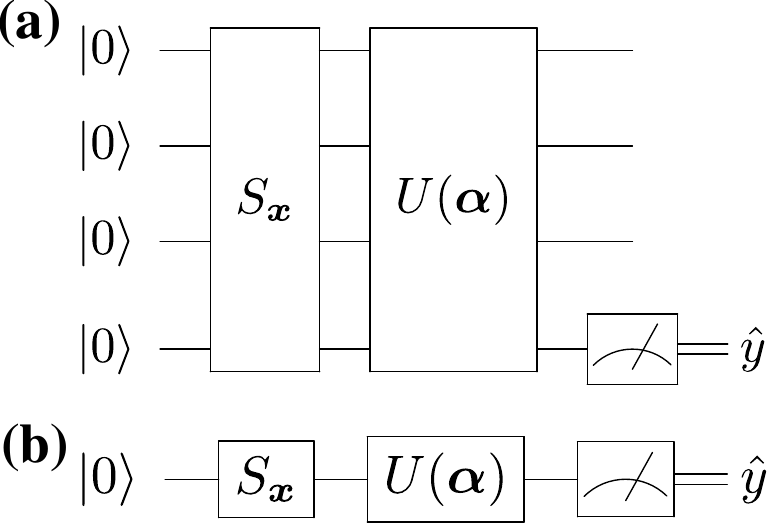}
    \caption{A common architecture for a binary quantum classifier that we study in this work. The general circuit structure is shown in \textbf{(a)} and the structure for a single qubit is highlighted in \textbf{(b)}. In both, a feature vector $\vec{x}$ is encoded into a quantum state $\rhox$ via a ``state preparation'' unitary $S_{\vec{x}}$. The encoded state $\rhox$ then evolves to $U \rhox U^\dagger =: \rhotildex$ where $U(\vec{\alpha})$ is a unitary ansatz with trainable parameters $\vec{\alpha}$. A single qubit of the evolved state $\rhotildex$ is measured to yield a predicted label $\yhat$ for the vector $\vec{x}$.}
    \label{fig:classifier}
\end{figure}

% Intro/overview of binary quantum classifiers
A quantum classifier is essentially the same as a classical one, the key difference being the model (ansatz) of the classifier. As mentioned in the Introduction, many QML architectures use a variational quantum algorithm, nominally consisting of parameterized one- and two-qubit gates, which is called a quantum neural network (QNN) in the context of machine learning. QNNs can be considered function approximators analogous to classical neural networks (e.g., feedforward neural networks \cite{fine_feedforward_1999}), and the procedure for training the QNN consists of adjusting gate parameters such that this function approximator outputs good predictions for the input data.

% Two key features (problems) that arise when using QNNs
While the use of QNNs as machine learning models may present the possibility of advantage for particular problems~\cite{du_expressive_2018, coyle_born_2019},
QNNs also present key challenges for machine learning. For nearly all QML problems, a pressing challenge is inputting (arbitrary) data to the model such that the QNN can process it. We refer to this input process as \textit{data encoding}, and discuss it in detail in Sec.~\ref{sec:data-encodings}. Another potential challenge with QNNs is outputting information, since the data propagated through the QNN is a quantum state. For machine learning applications, this means that the output feature vector (amplitudes of the quantum state) cannot be accessed efficiently. Rather, as is usual in quantum mechanics, quantities of the form $\Tr[\rho \hat{O}]$ where $\rho$ is the quantum state and $\hat{O}$ is some Hermitian operator can be efficiently evaluated.

% Literature examples that measure a single qubit to get a label
Fortunately for quantum classification, outputting information (predictions) can be done in a straightforward manner. As several authors have noted~\cite{farhi_classification_2018, schuld_circuit-centric_2018, schuld_supervised_2018, grant_hierarchical_2018, perez-salinas_data_2019, blank_quantum_2019}, it is natural to use the measurement outcome of a single qubit as a class prediction as produces a binary outcome. We adopt this strategy in our work.

% Definition/overview of the binary quantum classifier
Informally, we define a (binary) quantum classifier as a procedure for encoding data into a quantum circuit, processing it through trainable QNN, and outputting a (binary) predicted label. Given a feature vector $\vec{x} \in \mathcal{X}$, a concise description of such a classifier can be written
\begin{align}
    \vec{x} &\mapsto \rhox && \text{(encoding)} \label{eqn:encoding} \\
    &\mapsto \rhotildex  && \text{(processing)}  \label{eqn:processing}\\
    &\mapsto \yhat [ \rhotildex ] . && \text{(prediction)} \label{eqn:prediction}
\end{align}
%

% More detail on encoding
Several remarks are in order.
First, a given data point $\vec{x}$ in the training set~\eqref{eqn:labeled-data-for-classifier} is encoded in a quantum state $\rho_{\vec{x}} \in \mathcal{D}_n$ via a state preparation unitary $S_{\vec{x}}$ (see Fig.~\ref{fig:classifier}. Throughout the paper, we use $\mathcal{D}_n \subset \mathbb{C}^{2^n \times 2^n}$ to denote the set of density operators (matrices) on $n$ qubits. We remark that each $\vec{x}$ in the training set leads to a (unique) $S_{\vec{x}}$, so the state preparation unitary can be considered a parameterized family of unitary ans\"{a}tze. We discuss encodings in detail in Sec.~\ref{sec:data-encodings}. 

% More detail on unitary evolution
For the processing step~\eqref{eqn:processing}, there have been many proposed QNN architectures in recent literature, including quantum convolutional neural networks \cite{cong_quantum_2019, henderson_quanvolutional_2019}, strongly entangling ans\"{a}tze~\cite{schuld_circuit-centric_2018}, and more~\cite{stoudenmire_supervised_2016, grant_hierarchical_2018}. In this work, we allow for a general unitary evolution $U(\vec{\alpha})$ such that
\begin{equation} \label{eqn:unitary_evolution}
     \rhotildex =  U(\vec{\alpha}) \rhox U^\dagger(\vec{\alpha}) .
\end{equation}
We remark that some QNN architectures involve intermediate measurements and conditional processing (notably \cite{cong_quantum_2019}) and so do not immediately fit into~\eqref{eqn:unitary_evolution}. Our techniques for showing robustness could be naturally extended to such architectures, however, and so we consider~\eqref{eqn:unitary_evolution} as a simple yet general model. We also note that training the classifier via minimization of a well-defined cost function is an important task with interesting questions, but we primarily focus on data encodings and their properties in this work. For this reason we often suppress the trainable parameters $\vec{\alpha}s$ and write $U$ for $U(\vec{\alpha})$. 

% More details on extracting a prediction
% TODO: Part of me wants to shorten this and avoid redundancy
Finally, the remaining step is to extract information from the state $\rhotildex$ to obtain a predicted label. As mentioned, a natural method for doing this is to measure a single qubit which yields a binary outcome $0$ or $1$ taken as the predicted label $\yhat$. Since measurements are probabilistic, we measure $N_m$ times and take a ``majority vote.'' That is, if $0$ is measured $N_0$ times and $N_0 \geq N_m / 2$, we take $0$ as the class prediction, else $1$. Generalizing the finite statistics, this condition can be expressed analytically as
\begin{equation} \label{eqn:decision_rule}
    \yhat [\rhotildex] = \begin{cases}
        0 \qquad \text{if   } \Tr [ \Pi_0^c \rhotildex] \ge 1 / 2 \\
        1 \qquad \text{otherwise} 
    \end{cases} 
\end{equation}
where 
\begin{equation} \label{eqn:pi0-projector-definition}
    \Pi_0^c := |0\> \<0|_c \equiv |0\> \<0|_c \otimes I_{\bar{c}}
\end{equation}
is the projector onto the ground state of the classification qubit, labeled $c$, and the remaining qubits are labeled $\bar{c}$. For brevity we often omit these labels when it is clear from context.
Throughout the paper, we use $\yhat$ for predicted labels and $y$ for true labels, and we refer to~\eqref{eqn:decision_rule} as the \textit{decision rule} for the classifier. Equation~\eqref{eqn:decision_rule} is not the only choice for such a decision rule. In particular, one could choose a different ``weight'' $\lambda$ such that $\yhat = 0$ if $\Tr [ \Pi_0 \rhotildex] \ge \lambda$ as in Ref.~\cite{perez-salinas_data_2019}, add a bias to the classifier as in Ref.~\cite{schuld_circuit-centric_2018}, or measure the classification qubit in a different basis (e.g., the Hadamard basis instead of the computational basis). Our techniques for showing robustness (Sec.~\ref{subsec:robustness-results}) could be easily adapted for such alternate decision boundaries, and we consider~\eqref{eqn:decision_rule} as a simple yet general rule for the remainder.
% TODO:
% 1. Do they explicitly use \lambda in the perez paper? If so, we should use that. If not, I like w better for weight.
% 2. It could be interesting to add an appendix or even new section with results for different decision rules. This could be for a second arXiv version, journal submission, or update if we get rejected from PRA.

% Formal definition of a (binary) quantum classifier, for reference/summary
The preceding discussion is summarized in the following formal definition of a quantum classifier.

\begin{definition}[Quantum Classifier] \label{def:binary-quantum-classifier}
    A (binary) quantum classifier consists of three well-defined functions: \\
    (i) an encoding function
    \begin{align} 
         E: \mathcal{X} &\rightarrow \mathcal{D}_n \label{eqn:encoding-formal} \\
         E(\vec{x}) &= \rhox,
    \end{align}
    (ii) a function which evolves the state
    \begin{align}
         \mathcal{U}: \mathbb{C}^{2^n \times 2^n} &\rightarrow \mathbb{C}^{2^m \times 2^m} \label{eqn:evolution-formal} \\
         \mathcal{U}(\rhox) &= \rhotildex,
    \end{align}
    and (iii) a decision rule
    \begin{align} \label{eqn:decision-rule-formal}
         \hat{y}: \mathbb{C}^{2^m \times 2^m} \rightarrow \{0, 1\} .
    \end{align}
    For training, a quantum classifier is provided with labeled data~\eqref{eqn:labeled-data-for-classifier}, a cost function $C$, and an optimization routine for minimizing the cost function. % TODO: Give an example of a cost function, label it C
\end{definition}

Specification of the functions $E$, $\mathcal{U}$, and $\yhat$ --- along with training data, a cost function, and an optimization routine --- uniquely define a quantum classifier. In this work, we let $\mathcal{U}$ be a general unitary evolution~\eqref{eqn:unitary_evolution} and always take the decision rule $\yhat$ to be~\eqref{eqn:decision_rule}.
In the remainder, we study the effects of different encoding functions~\eqref{eqn:encoding-formal}, which we now discuss in more detail.

% =============================================================================
% =============================================================================
\subsection{Data Encodings} \label{sec:data-encodings}
% =============================================================================
% =============================================================================

% Figure showing a visual representation of data encoding for a single qubit
\begin{figure}
    \centering
    \includegraphics[width=\columnwidth]{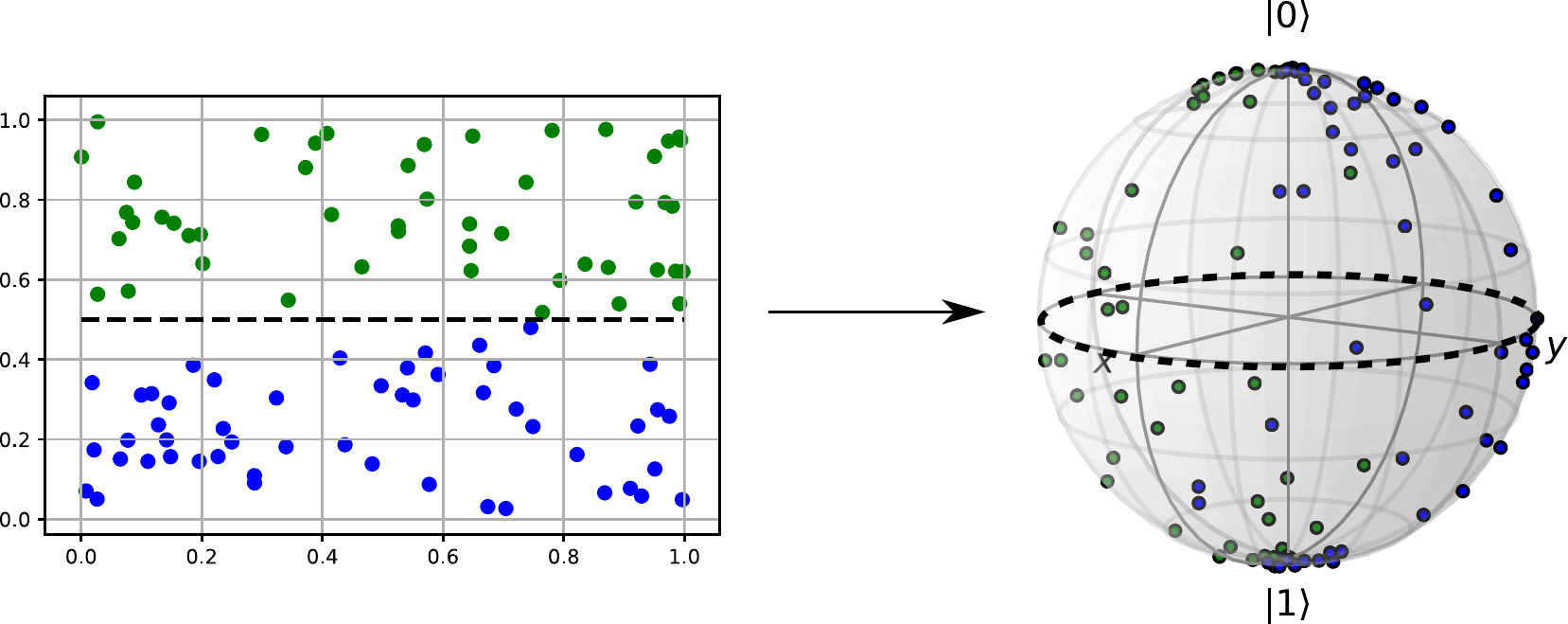}
    \caption{(Color online.) A visual representation of data encoding~\eqref{eqn:encoding-formal} for a single qubit. On the left is shown a set of randomly generated points $\{\vec{x}_i, y_i\}_{i = 1}^{M}$ normalized to lie within the unit square, separated by a true decision boundary shown by the dashed black line. A data encoding maps each $\vec{x}_i \in \mathbb{R}^2$ to a point on the Bloch sphere $\rho_{\vec{x}_i} \in \mathbb{C}^2$, here using the dense angle encoding~\eqref{eqn:dae-single-qubit}. The dashed black line on the Bloch sphere shows the initial decision boundary of the quantum classifier. During the training phase, unitary parameters are adjusted to rotate the dashed black line to correctly classify as many training points as possible. Different data encodings lead to different learnable decision boundaries and different robustness properties, as discussed in the main text.}
    \label{fig:data-encoding-schmatic-single-qubit}
\end{figure}

% Transition paragraph about state encoding unitary, connection to previous mention

% Intro paragraph on data encoding
% TODO: Include a formal definition of an encoding function for completeness
% UPDATE: It's kind of defined formally in Def. 1 (quantum classifier)
An encoding can be thought of as ``loading'' a data point $\vec{x} \in \mathcal{X}$ from memory into a quantum state so that it can be processed by a QNN. Unlike classical machine learning, this presents a unique challenge in QML. The ``loading'' is accomplished by an encoding~\eqref{eqn:encoding-formal} from the set $\mathcal{X}$ to $n$-qubit quantum states $\mathcal{D}_n$. As mentioned, many QML papers ~\cite{harrow_quantum_2009, kerenidis_quantum_2016, kerenidis_quantum_2018, dervovic_quantum_2018, zhao_smooth_2018}
assume a full wavefunction encoding with $n = \log_2 N$. This provides an exponential saving in ``space'' at the cost of an exponential increase in ``time.'' That is, a quantum state of $n = \log_2 N$ qubits can represent a data point with $N$ features, but in general such a quantum state takes time $O(2^n)$ to prepare~\cite{Knill_Laflamme_Milburn_2001}.
% TODO: Add note about sparse data vector preparation + dequantization

% State preparation circuit for a data encoding + desired properties
In practice, data is encoded via a state preparation circuit (unitary) $S_{\vec{x}}$ --- written in terms of one- and two-qubit gates --- which acts on an initial state $|\phi\>$, nominally the all zero state $|\phi\> = |0\>^{\otimes n}$. This realizes the encoding
\begin{equation} \label{eqn:encoding_unitary}
    \vec{x} \mapsto E(\vec{x}) = S_{\vec{x}}\ketbra{\phi}{\phi}S_{\vec{x}}^\dagger = |\vec{x} \> \< \vec{x} | =: \rhox .
\end{equation}
%
% TODO: Put desired properties into it's own paragraph?
% TODO: I don't like this paragraph still, poorly written
For $S_{\vec{x}}$ to be useful for a data encoding, it should have several desirable properties. First, $S_{\vec{x}}$ should have a number of gates which is at most polynomial in the number of qubits. For machine learning applications, we want the family of state preparation unitaries to have enough free parameters such that there is a unique quantum state $\rhox$ for each feature vector $\vec{x}$ --- i.e., such that the encoding function $E$ is bijective. % TODO: Omit the previous sentence?
Additionally, for NISQ applications, sub-polynomial depth is even more desirable, and we want $S_{\vec{x}}$ to be ``hardware efficient'' --- meaning that the one- and two-qubit gates comprising $S_{\vec{x}}$ can be realized without too much overhead due to, e.g., compiling into the computer's native gate set and implementing swap gates to connect disjoint qubits. 

% Literature on data encodings, introduce qubit encoding
Motivated by such NISQ limitations, some recent authors~\cite{schuld_supervised_2018, grant_hierarchical_2018, stoudenmire_supervised_2016, schuld_supervised_2018, cao_cost_2019} have considered a ``qubit encoding''
\begin{equation} \label{eqn:qubit_encoding_grant}
    |\vec{x}\> = \bigotimes_{i=1}^{N} \cos (x_i)\ket{0} + \sin (x_i)\ket{1} 
\end{equation}
for the feature vector $\vec{x} = [x_1, ..., x_N]^T \in \mathcal{X}^N$. (Note that for pure state encodings, we often write only the wavefunction $|\vec{x}\> \in \mathbb{C}^{2^n}$, from which the density matrix $\rhox = |\vec{x}\>\<\vec{x}| \in \mathcal{D}_n$ is implicit.) We will also refer to~\eqref{eqn:qubit_encoding_grant} as an ``angle encoding.'' The angle encoding uses $N$ qubits with a constant depth quantum circuit and is thus amenable to NISQ computers. The state preparation unitary is $S_{\vec{x}_j} = \bigotimes_{i = 1}^{N} U_i$ where
\begin{equation}
    U_i := \left[ 
        \begin{matrix}
            \cos (x_j^{(i)}) & -\sin(x_j^{(i)}) \\
            \sin(x_j^{(i)}) & \cos(x_j^{(i)})
        \end{matrix}
    \right] ,
\end{equation}
a strategy which encodes one feature per qubit.

% TODO: Keep definition of density?
% We refer to this as property an encoding's \textit{density}. 
% \begin{definition}[Density of an encoding]
%     For an encoding $E : \mathbb{R}^N \rightarrow C^{2^n \times 2^n}$, we define the density of the $d_E$ to be the number of features encoded into a single qubit, i.e.
%     %
%     \begin{equation}
%         d_E := N / n .
%     \end{equation}
% \end{definition}

% Definition of dense angle encoding
This encoding can be slightly generalized to encode two features per qubit by exploiting the relative phase degree of freedom. We refer to this as the ``dense angle encoding'' and include a definition below.

\begin{definition}[Dense Angle Encoding] \label{def:dae}
    Given a feature vector $\vec{x} = [x_1, ..., x_N]^T \in \mathbb{R}^N$, the dense angle encoding maps $\vec{x} \mapsto E(\vec{x})$ given by
    \begin{equation} \label{eqn:dae-general}
        |\vec{x}\> = \bigotimes_{i=1}^{\ceil*{N / 2}} \cos (\pi x_{2i -1})\ket{0} + e^{2 \pi i x_{2i}} \sin (\pi x_{2i - 1})\ket{1} .
    \end{equation}
\end{definition}

% Highlight dense angle encoding for a single qubit so we can refer to the equation
For some of our analytic and numerical results, we highlight the dense angle encoding for two-dimensional data $\vec{x} \in \mathbb{R}^2$ with a single qubit given by
\begin{equation} \label{eqn:dae-single-qubit}
    |\vec{x}\> = \cos (\pi x_{1}) \ket{0} + e^{2  \pi i x_{2}} \sin (\pi x_{1})\ket{1} ,
\end{equation}
which has density matrix
\begin{equation} \label{eqn:dae-density-matrix}
    \rhox =
    \left[ \begin{matrix}
        \cos^2 \pi x_1  & e^{ - 2 \pi i x_2} \cos \pi x_1 \sin \pi x_1 \\
        e^{  2 \pi i x_2} \cos \pi x_1\sin \pi x_1& \sin^2 \pi x_1 \\
    \end{matrix} \right] .
\end{equation}

% Definition of a general qubit encoding
Although the angle encoding~\eqref{eqn:qubit_encoding_grant} and dense angle encoding~\eqref{eqn:dae-general} use sinuosoids and exponentials, there is nothing special about these functions (other than, perhaps, they appear in common parameterizations of qubits and unitary matrices~\cite{nielsen_quantum_2010}). We can easily abstract these to a general class of qubit encodings which use arbitrary functions.
\begin{definition}[General Qubit Encoding]
    Given a feature vector $\vec{x} = [x_1, ..., x_N]^T \in \mathbb{R}^N$, the general qubit encoding maps $\vec{x} \mapsto E(\vec{x})$ given by
    \begin{equation} \label{eqn:qubit-encoding-general}
        |\vec{x}\> = \bigotimes_{i=1}^{\ceil*{N / 2}} f_i(x_{2i - 1}, x_{2i}) \ket{0} + g_i(x_{2i - 1}, x_{2i}) \ket{1} .
    \end{equation}
    where $f, g : \mathbb{R} \times \mathbb{R} \rightarrow \mathbb{C}$ are such that $|f_i|^2 + |g_i|^2 = 1 \ \forall i$. 
\end{definition}
%

% Definition of "amplitude type" encodings, including wavefunction encoding plus the generalization to "amplitude encoding"
We remark that a similar type of generalization was used in~\cite{perez-salinas_data_2019} with a single qubit classifier that allowed for repeated application of an arbitrary state preparation unitary. 
% TODO: Make a note that single qubit unitaries can be compiled into NISQ computers with only 3 physical gates or whatever, and thus this is still amenable to NISQ even though we're considering arbitrary encoding functions which lead to arbitrary unitaries.
While~\eqref{eqn:qubit-encoding-general} is the most general description of a qubit encoding --- and is the encoding we primarily focus on in this work --- it is of course not the most general encoding~\eqref{eqn:encoding-formal}. The previously mentioned wavefunction encoding maps $N$ features into $n = \log_2 N$ qubits as follows.

\begin{definition}[Wavefunction Encoding]
     The wavefunction encoding of a vector $\vec{x} \in \mathbb{R}^N$ is
    \begin{equation} \label{eqn:wavefunction_encoding_amplitude}
        |\vec{x}\> := \frac{1}{||\vec{x}||_2^2} \sum_{i = 1}^{N} x_i |i\> 
    \end{equation}
    where $x_i$ is the $i$th feature of $\vec{x}$. 
\end{definition}

As with the dense angle encoding, we highlight the wavefunction encoding for $\vec{x} = [x_1 \ x_2]^T$ with $||\vec{x}||_2 = 1$:
\begin{equation} \label{eqn:wavefunction-encoding-density-matrix-onequbit}
    \rhox = \left[ \begin{matrix}
        x_1^2  & x_1 x_2 \\
        x_1 x_2 & x_2^2 \\
    \end{matrix} \right] .
\end{equation}

While we do not consider it in this work, we note that the wavefunction encoding can be slightly generalized to allow for parameterizations of features (amplitudes).
\begin{definition}[Amplitude Encoding] \label{def:amplitude-encoding}
    For $\vec{x} \in \mathbb{R}^N$, the amplitude encoding maps $\vec{x} \mapsto E(\vec{x})$ given by
    \begin{equation} \label{eqn:amplitude_encoding_grant} % TODO: Ask Brian if we should cite something here/in sentence preceding the definition
        \ket{\vec{x}} = \sum\limits_{i=1}^{N} f_i(\vec{x})\ket{i}
    \end{equation}
    where $\sum_i |f_i|^2 = 1$. 
\end{definition}
The functions $f_i$ could only act on the $i$th feature, e.g. $f_i(\vec{x}) = \sin x_i$, or could be more complicated functions of several (or all) features.

% Importance of data encodings
Thus far, we have formally defined a data encoding~\eqref{eqn:qubit-encoding-general} and its role in a quantum classifier (Def.~\ref{def:binary-quantum-classifier}), and we have given several examples. While we have discussed different properties of state preparation circuits which implement data encodings (depth, overhead, etc.), we have not yet discussed the two main properties of data encodings we consider in this work: \textit{learnability} and \textit{robustness}. For the first property, we show in Sec.~\ref{ssec:classes_learnable_decision_boundaries} that different data encodings lead to different classes of learnable decision boundaries. For the second property, we show that different data encodings lead to different sets of robust points (to be defined) in Sec.~\ref{ssec:characterize_robust_points} --- Sec.~\ref{subsec:existence-of-robust-encodings}. % TODO: Add final section reference
For the latter results which constitute the bulk of our work, we first need to introduce the noise channels we consider and define the notion of a robust point, which we do in the following two sections.

% =============================================================================
% =============================================================================
\subsection{Noise in Quantum Systems} \label{sec:noise-models}
% =============================================================================
% =============================================================================

% TODO: Move to appendix (?)

% Overview of this section
In this section, we introduce our notation for the common quantum channels we use in this work. While we provide brief exposition on quantum noise, we refer the reader desiring more background to the standard references~\cite{nielsen_quantum_2010, john_preskill_quantum_1998, Watrous_2018}.

% Very brief background on noise
Noise occurs in quantum systems due to interactions with the environment. Letting $\rho$ denote the quantum state of interest and $\rho_{\text{env}}$ the environment, noise can be characterized physically by the process
\begin{equation}
    \rho \mapsto \Tr_{\text{env}} \left[ U \left( \rho \otimes \rho_{\text{env}} \right) U^\dagger \right]
\end{equation}
where $U$ is a unitary on the composite Hilbert space. This can be written in the equivalent, often more convenient, operator-sum representation
\begin{equation} \label{eqn:quantum-operation}
    \rho \mapsto \sum_{k = 1}^{K} E_k \rho E_k^\dagger 
\end{equation}
where the Kraus operators $E_k$ satisfy the completeness relation
\begin{equation}
    \sum_{k = 1}^{K} E_k^\dagger E_k = I .
\end{equation}
Equation~\eqref{eqn:quantum-operation} is known as a quantum operation or quantum channel. Physically, it can be interpreted as randomly replacing the state $\rho$ by the (properly normalized) state $E_k \rho E_k^\dagger$ with probability $\Tr [ E_k \rho E_k ^\dagger ]$.

% Defining the noise models we consider
The quantum channels we study here are standard and often used in theoretical work as reasonable noise models~\cite{nielsen_quantum_2010}. % TODO: Add more citations? (to papers, not books)
For readers familiar with these channels, the following definitions are solely to introduce our notation.

A widely used noise model is the Pauli channel.

\begin{definition} \label{def:pauli-channel}
    The Pauli channel maps a single qubit state $\rho$ to $\paul (\rho)$ defined by
    \begin{equation} \label{eqn:pauli-channel}
        \paul (\rho) := p_I \rho + p_X X \rho X + p_Y Y \rho Y + p_Z Z \rho Z
    \end{equation}
    where $p_I + p_X + p_Y + p_Z = 1$. 
\end{definition}
\noindent
While the Pauli channel acts on a single qubit, it can be generalized to a $d$-dimensional Hilbert space via the \textit{Weyl channel}
% TODO: Consider robustness resulting from Weyl-covariant channels as described in https://cs.uwaterloo.ca/~watrous/TQI/TQI.4.pdf page 14 def 4.13 onward.
% Also of potential use: "every unital qubit channel is mixed unitary" from page 225 just above Lemma 4.22
%
\begin{equation} \label{eqn:weyl-channel-def}
    \mathcal{E}_{p}^{\text{W}} (\rho) := \sum_{k, l = 0}^{d - 1} p_{kl} W_{kl} \rho W_{kl}^\dagger
\end{equation}
where $p_{kl}$ are probabilities and the Weyl operators are
\begin{equation}
    W_{kl} := \sum_{m = 0}^{d - 1} e^{2 \pi i m k / d} |m\>\< m + 1| .
\end{equation}
For $d = 2$, Eqn.~\eqref{eqn:weyl-channel-def} reduces to Eqn.~\eqref{eqn:pauli-channel}.

Two special cases of the Pauli channel are the bit-flip and phase-flip (dephasing) channel.

\begin{definition} \label{def:bit-flip-channel}
    The bit-flip channel maps a single qubit state $\rho$ to $\deph (\rho)$ defined by
    \begin{equation} \label{eqn:bitflip-channel}
        \flip (\rho) := (1 - p) \rho + p X \rho X 
    \end{equation}
    where $0 \le p \le 1$. 
\end{definition}
\noindent
While a bit-flip channel flips the computational basis state with probability $p$, the phase-flip channel introduces a relative phase with probability $p$.
\begin{definition} \label{def:dephasing-channel}
    The phase-flip (dephasing) channel maps a single qubit state $\rho$ to $\deph (\rho)$ defined by
    \begin{equation} \label{eqn:dephasing-channel}
        \deph (\rho) := ( 1 - p) \rho + p Z \rho Z 
    \end{equation}
    where $0 \le p \le 1$. 
\end{definition}

Another special case of the Pauli channel is the depolarizing channel which occurs when each Pauli is equiprobable $p_X = p_Y = p_Z = p$ and $p_I = 1-3p$.
This channel can be equivalently thought of as replacing the state $\rho$ by the maximally mixed state $I / 2$ with probability $p$.
%
% TODO: Decide on I vs. \mathds{I}
\begin{definition} \label{def:depolarizing-channel}
    The depolarizing channel maps a single qubit state $\rho$ to $\depo (\rho)$ defined by
    \begin{equation} \label{eqn:depolarizing-channel}
        \depo (\rho) := (1 - p) \rho + p I / 2 
    \end{equation}
    where $0 \le p \le 1$. 
\end{definition}
\noindent
The $d = 2^n$-dimensional generalization of Def.~\ref{def:depolarizing-channel} is straightforward.
\begin{definition} \label{def:global-depolarizing-channel}
    The global depolarizing channel maps an $n$-qubit state $\rho$ to $\depon (\rho)$ defined by
    \begin{equation} \label{eqn:global-depolarizing-channel}
        \depon (\rho) := (1 - p) \rho + p I / d
    \end{equation}
    where $0 \le p \le 1$, $d = 2^n$, and $I \equiv I_d$ is the $d$-dimensional identity.  
\end{definition}

Finally, we consider amplitude damping noise which models decay from the excited state to the ground state via spontaneous emission of a photon.
\begin{definition} \label{def:amp-damp-channel}
    The amplitude damping channel maps a single qubit state $\rho$ to $\damp (\rho)$ defined by
    \begin{equation} \label{eqn:amp-damp-channel}
        \damp (\rho) := \left[ \begin{matrix}
            \rho_{00} + p \rho_{11}     & \sqrt{1 - p} \rho_{01} \\
            \sqrt{1 - p} \rho_{10}      & (1 - p) \rho_{11} \\
        \end{matrix} \right]
    \end{equation}
    where $0 \le p \le 1$. 
\end{definition}

Now that we have introduced quantum noise and several common channels, we can define robust data encodings for quantum classifiers. We remark that we also consider measurement noise in Appendix~\ref{app:further_proofs} but omit the definition and results from the main text for brevity.

% =============================================================================
% =============================================================================
\subsection{Robust Data Encodings} \label{subsec:robustness-definition}
% =============================================================================
% =============================================================================

\begin{figure*}[t!]
    \centering
    \includegraphics[width=1\textwidth, height =0.2\textwidth]{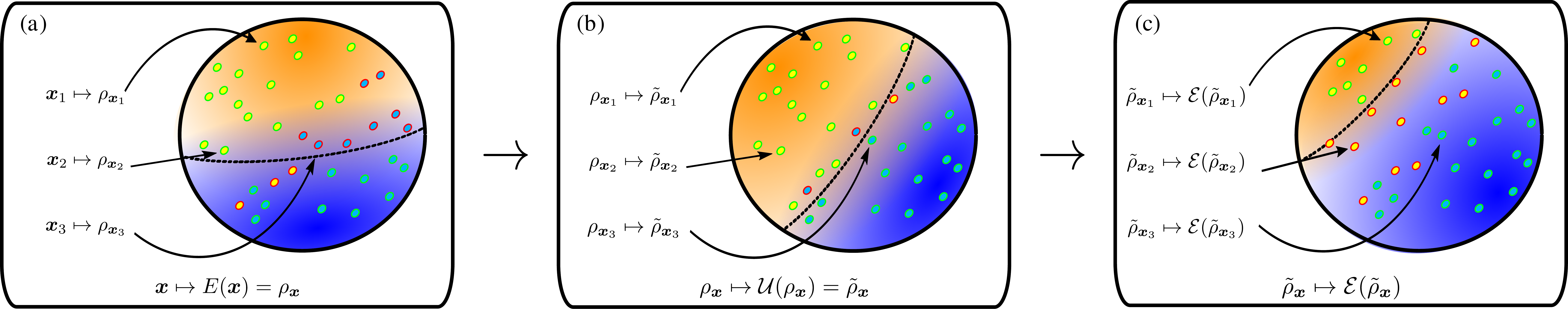}
    \caption{(Color online.) Cartoon illustration of robust points for a single qubit classifier. In panel (a), input training data points $\vec{x}_i$ with classes $y_i \in \{ \text{yellow}, \text{blue} \}$ are mapped into quantum states $\rhoxi$ according to some encoding function $E$. The dashed line through the Bloch sphere indicates the initial decision boundary. Points with a green outline are classified correctly, while points with a red outline are misclassified. In (b), data points are processed by the QNN with optimal unitary parameters (after minimizing a cost function to find such parameters). For clarity, we keep data points fixed and adjust the location of the decision boundary, which is now rotated to correctly classify more points (fewer points with red outlines). In (c), a noise process $\E$ occurs which shifts the location of the final processed points (or location of decision boundary), causing some points to be misclassified. The set of points which maintain the same classification in (b) and (c) are the robust points. Example points $\vec{x}_1$ and $\vec{x}_2$ are correctly classified in (a) and (b) then misclassified in (c) due to the noise. Example point $\vec{x}_3$ is incorrectly classified in (a), correctly classified in (b) after propagating through the QNN, and remains correctly classified in (c).}
    \label{fig:single_qubit_binary_classifier_with_noise}
    % TODO: Change the x_1 point to be robust by changing the arrow in the figure
\end{figure*}

% Intro paragraph, intuition
In this section, we define robust points and robust data encodings of quantum classifiers. Informally, the intuition is as follows: the quantum classifier with decision rule~\eqref{eqn:decision_rule} requires only a ``coarse-grained'' measurement to extract a predicted label. For example, with a single qubit classifier, all points in the ``top'' hemisphere of the Bloch sphere are predicted to have label $0$, while all points in the ``bottom'' hemisphere are predicted to have label $1$. The effect of noise is to shift points on the Bloch sphere, but certain points can get shifted such that they get assigned the same labels they would without noise.
This is the idea of robustness, represented schematically in Fig.~\ref{fig:single_qubit_binary_classifier_with_noise}. For classification purposes, we do not require completely precise measurements, only that the point remain ``in the same hemisphere'' in order to get the same predicted label.

% Definition of robust point
Formally, we define a robust point as follows.
\begin{definition}[Robust Point] \label{def:robust-point}
    Let $\mathcal{E}$ be a quantum channel, and consider a (binary) quantum classifier with decision rule $\yhat$ as defined in~\eqref{eqn:decision_rule}. We say that the state $\rhox \in \mathcal{D}_n$ encoding a data point $\vec{x} \in \mathcal{X}$ is a \textit{robust point} of the quantum classifier if and only if
    \begin{equation} \label{eqn:robust-point-definition}
        \yhat [ \mathcal{E} ( \rhotildex )] = \yhat[\rhotildex] 
    \end{equation}
    where $\rhotildex$ is the processed state via~\eqref{eqn:evolution-formal}. 
\end{definition}
%

% Discussion of Robust Point Definition
As mentioned, for the purpose of classification, Eqn.~\eqref{eqn:robust-point-definition} is a well-motivated and reasonable definition of robustness. We remark that~\eqref{eqn:robust-point-definition} is expressed in terms of probability; in practice, additional measurements may be required to reliably determine robustness%
\footnote{\label{footnote:shots}Recall that expectation values evaluated with $N_m$ measurements have variance $1 / \sqrt{N_m}$~\cite{mcclean_theory_2016}. If $\Tr[\Pi_0 \rhotildex] \le 1/2$ and $\Tr[\Pi_0 \mathcal{E} ( \rhotildex )] = 1/2 - \epsilon$ where $\epsilon > 0$, then at least $N_m = 1 / \epsilon^2$ measurements are required to determine robustness. Intuitively, this means that points ``on the border'' between classes require more measurements to distinguish.}%
. Further, we note that~\eqref{eqn:robust-point-definition} assumes that noise occurs only after the evolution $\rhox \mapsto \rhotildex$. While this may be a useful theoretical assumption, in practice noise happens throughout a quantum circuit. We can therefore consider robustness for an \textit{ideal data encoding} as in Def.~\ref{def:robust-point}, or for a \textit{noisy data encoding} in which some noise process $\mathcal{E}_1$ occurs after encoding and another noise process $\mathcal{E}_2$ occurs after evolution:
\begin{equation} \label{eqn:robust-point-noisy-encoding}
    \yhat [ \mathcal{E}_2 (  \mathcal{U} ( \mathcal{E}_1 ( \rhox) ) ) ] =
    \yhat [ \rhotildex ] .
\end{equation}
For our results, we primarily consider~\eqref{eqn:robust-point-definition}, although we show robustness for~\eqref{eqn:robust-point-noisy-encoding} in some cases. 

% Relationship to density operator fixed points of quantum channels
Robust points~\eqref{eqn:robust-point-definition} are related but not equivalent to (density operator) fixed points of a quantum channel, and can be considered an application-specific generalization of fixed points. In Sec.~\ref{ssec:characterize_robust_points}, we characterize the set of robust points for example channels, and in Sec.~\ref{subsec:existence-of-robust-encodings} we use this connection to prove the existence of robust data encodings.  % TODO: Add "for trace-preserving quantum operations" [?]

% Definition of the set of robust points
For classification, we are concerned with not just one data point, but rather a set of points (e.g., the set $\mathcal{X}$ or training set~\eqref{eqn:labeled-data-for-classifier}). We therefore define the set of robust points, or robust set, in the following natural way.
\begin{definition}[Robust Set] \label{def:robust-set}
    Consider a (binary) quantum classifier with encoding $E : \mathcal{X} \rightarrow \mathcal{D}_n$ and decision rule $\yhat$ as defined in~\eqref{eqn:decision_rule}. Let $\mathcal{E}$ be a quantum channel.
    The set of robust points, or simply \textit{robust set}, is
    \begin{equation} \label{eqn:robust-point-set-definition}
        \mathcal{R} (\mathcal{E}, E, \yhat) := \left\{ \vec{x} \in \mathcal{X}: \yhat [ \mathcal{E} ( \rhotildex )] = \yhat[\rhotildex] \right\}
    \end{equation}
    where $\rhotildex$ is the processed state via~\eqref{eqn:evolution-formal} and $\rho_{\vec{x}} = E(\vec{x})$.
\end{definition}
%

% Define completely robust data encodings
While the robust set generally depends on the encoding $E$, there are cases in which $\mathcal{R}$ is independent of $E$. % TODO: Reference section or theorem
In this scenario, we say all encodings are robust to this channel.
Otherwise, the size of the robust set (i.e., number of robust points) can vary based on the encoding, and we distinguish between two cases. If the robust set is the set of all possible points,
we say that the encoding is \textit{completely robust} to the given noise channel. 

\begin{definition}[Completely Robust Data Encoding] \label{def:complete_noise_robustness}
    Consider a (binary) quantum classifier with encoding $E$ and decision rule $\yhat$ as defined in~\eqref{eqn:decision_rule}. Let $\vec{x} \in \mathcal{X}$ and let $\mathcal{E}$ be a quantum channel. We say that $E$ is a \textit{completely robust data encoding} for the quantum classifier if and only if
    \begin{equation} \label{eqn:completely-robust-encoding-condition}
        \mathcal{R}(\mathcal{E}, E, \yhat) = \mathcal{X} .
    \end{equation}
\end{definition}

We note that in practice (e.g., for numerical results), complete robustness is determined relative to the training set~\eqref{eqn:labeled-data-for-classifier}. That is, we empirically observe that $E$ is a completely robust data encoding if and only if
    \begin{equation}
        \mathcal{R}(\mathcal{E}, E, \yhat) = \{\vec{x}_i\}_{i = 1}^{M} .
    \end{equation}

Complete robustness can be a strong condition, so we also consider a partially robust data encoding, defined as follows.

% TODO: Do we really need this definition? We can keep the discussion, which I think is good to note, but maybe remove the formal definition
\begin{definition}[Partially Robust Data Encoding]
    \label{def:partial_noise_robustness}
    Consider a (binary) quantum classifier with encoding $E$ and decision rule $\yhat$ as defined in~\eqref{eqn:decision_rule}. Let $\vec{x} \in \mathcal{X}$ and let $\mathcal{E}$ be a quantum channel. We say that $E$ is a \textit{partially robust data encoding} for the quantum classifier if and only if
    \begin{equation} \label{eqn:partially-robust-encoding-condition}
        \mathcal{R}(\mathcal{E}, E, \hat{y}) \subsetneq \mathcal{X} .
    \end{equation}
\end{definition}

Similar to complete robustness, partial robustness is determined in practice relative to the training set. For $0 \le \delta \le 1$, we say that $E$ is a $\delta$-robust data encoding if and only if
\begin{equation}\label{eqn:delta_robust_encoding_condition}
    | \mathcal{R}(\mathcal{E}, E, \hat{y})  | = \delta M 
\end{equation}
where $|\cdot|$ denotes cardinality so that $ | \mathcal{R}(\mathcal{E}, E, \hat{y}) | \in [M]$.

% =============================================================================
% =============================================================================
% =============================================================================
% =============================================================================
\section{Analytic Results} \label{sec:noise_robustness}
% =============================================================================
% =============================================================================
% =============================================================================
% =============================================================================

% Overview of section
Using the definitions from Sec.~\ref{sec:definitions}, we now state and prove results about data encodings. First, we show that different encodings lead to different classes of decision boundaries in Sec.~\ref{ssec:classes_learnable_decision_boundaries}. Next, we characterize the set of robust points for example quantum channels in Sec.~\ref{ssec:characterize_robust_points}. In Sec.~\ref{subsec:robustness-results}, we prove several robustness results for different quantum channels, and in Sec.~\ref{subsec:existence-of-robust-encodings} we discuss the existence of robust encodings as well as an observed tradeoff between learnability and robustness.
Finally, in Sec.~\ref{ssec:fidelity_bounds}, we prove an upper bound on the number of robust points in terms of fidelities between noisy and noiseless states.

% =============================================================================
% =============================================================================
\subsection{Classes of Learnable Decision Boundaries} \label{ssec:classes_learnable_decision_boundaries}
% =============================================================================
% =============================================================================

% Intro paragraph, connection to previous discussions
We defined several different encodings in Sec.~\ref{sec:data-encodings} and discussed differences in the state preparation circuits which realize the encodings. Here, we show that different encodings lead to different sets of decision boundaries for the quantum classifier, thereby demonstrating that the success of the quantum classifier in Def.~\ref{def:binary-quantum-classifier} depends crucially on the data encoding~\eqref{eqn:qubit-encoding-general}. 

% The decision boundary for arbitrary f, g qubit encoding functions
The decision boundary according to the decision rule~\eqref{eqn:decision_rule} is implicitly defined by
\begin{equation} \label{eqn:decision_boundary_defining rule}
     \Tr [ \Pi_0 \rhotildex] = 1 / 2 .
\end{equation}
Consider a single qubit encoding~\eqref{eqn:qubit-encoding-general} so that
\begin{equation}
    \rhox = \left[ \begin{matrix}
        f(x_1, x_2)^2 & f(x_1, x_2) g(x_1, x_2)^* \\
        f(x_1, x_2) g(x_1, x_2) & |g(x_1, x_2)|^2 
    \end{matrix} \right]
\end{equation}
where we assumed without loss of generality that $f$ is real valued.
Let the unitary $U$ such that $\rhotildex = U \rhox U^\dagger$ have matrix elements $U_{ij}$. Then, one can write the decision boundary~\eqref{eqn:decision_boundary_defining rule} as (see~\eqref{eqn:useful-00mtx-elt})
\begin{equation}\label{eqn:projector_expanded}
   |U_{00}|^2 f^2 + 2 \Re [ U_{00}^* U_{01} f g] + |U_{01}|^2 |g|^2 = 1 / 2
\end{equation}
where we have let $f = f(x_1, x_2)$ and $g = g(x_1, x_2)$ for brevity. Eqn.~\eqref{eqn:projector_expanded} implicitly defines the decision boundary in terms of the data encoding $f$ and $g$. The unitary matrix elements $U_{ij}$ act as hyperparameters to define a class of learnable decision boundaries. 

Eqn.~\eqref{eqn:projector_expanded} can be solved numerically for different encodings, and we do so in Sec.~\ref{ssec:classes-decision-boundaries-numerical} (Fig.~\ref{fig:random_decision_boundaries_encodings}) to visualize decision boundaries for single qubit classifiers. At present, we can proceed further analytically with a few inconsequential assumptions to simplify the equations.

For the wavefunction encoding, we have $f(x_1, x_2) = x_1$ and $g(x_1, x_2) = x_2$. Suppose for simplicity that matrix elements $U_{00} \equiv a$ and $U_{01} \equiv b$ are real. Then, Eqn.~\eqref{eqn:projector_expanded} can be written
\begin{equation} \label{eqn:learnable-decision-boundary-wavefunction}
    (a x_1 + b x_2)^2 = 1/2 ,
\end{equation}
which defines a line $x_2 = x_2(x_1)$ with slope $-a / b$ and intercept $1 / \sqrt{2} b$. Thus, a single qubit classifier in Def.~\ref{def:binary-quantum-classifier} which uses the wavefunction encoding~\eqref{eqn:wavefunction_encoding_amplitude} can learn decision boundaries that are straight lines. 

Now consider the dense angle encoding~\eqref{eqn:dae-general} on a single qubit, for which $f(x_1, x_2) = \cos(\pi x_1)$ and $g(x_1, x_2) = e^{2 \pi i x_2} \sin (\pi x_1)$. Supposing again that matrix elements $U_{00} \equiv a$ and $U_{01} \equiv b$ are real, we can write~\eqref{eqn:projector_expanded} as
\begin{multline}
    a^2 \cos^2 \pi x_1 + 2 a b \cos \pi x_1 \sin \pi x_1 \cos 2 \pi x_2 \\+ b^2 \sin^2 \pi x_1 = 1/2 .
\end{multline}
This can be rearranged to
%
% \begin{equation} \label{eqn:learable_decision_boundary_dae}
%     \cos 2 \pi x_2 = \frac{1 - 2 a^2 \cos^2 \pi x_1 - 2 b^2 \sin^2 \pi x_1}{ a b \sin 2 \pi x_1} ,
% \end{equation}
\begin{equation} \label{eqn:learable_decision_boundary_dae}
    \cos 2 \pi x_2 = \frac{1 - 2 a^2  + (2a^2 -  2b^2) \sin^2 \pi x_1}{a b \sin 2 \pi x_1} ,
\end{equation}
which defines a class of sinusoidal functions $x_2 = x_2(x_1)$. (See Sec.~\eqref{ssec:classes-decision-boundaries-numerical} and Fig.~\ref{fig:random_decision_boundaries_encodings}.) 

% Discussion of importance
The different decision boundaries defined by~\eqref{eqn:learnable-decision-boundary-wavefunction} and~\eqref{eqn:learable_decision_boundary_dae} emphasize the effect that encoding has on learnability. A classifier may have poor performance due to its encoding, and switching the encoding may lead to better results. We note that a similar phenomenon occurs in classical machine learning --- a standard example being that a dot product kernel cannot separate data on a spiral, but a Gaussian kernel can. % TODO: Add citation
It may not be clear \textit{a priori} what encoding to use (similarly in classical machine learning with kernels), but different properties of the data may lead to educated guesses. We note that Lloyd \textit{et al.}~\cite{lloyd_quantum_2020} consider training over hyperparameters to find good encodings, and we introduce a similar idea in \secref{ssec:encoding_learn_alg} to find good \emph{robust} encodings.

% Closing paragraph, mention later numerical results
In Sec.~\ref{ssec:classes-decision-boundaries-numerical}, we numerically evaluate decision boundaries for additional single-qubit encodings, as well as two-qubit encodings, to further illustrate the differences that arise from different encodings.

% TODO: Do the decision boundary for one re-upload
% TODO: Compare this to the argument in the data re-uploading paper, see if we get higher order polynomials

% =============================================================================
% =============================================================================
\subsection{Characterization of Robust Points} \label{ssec:characterize_robust_points}
% =============================================================================
% =============================================================================

% Intro paragraph
For a given quantum channel $\mathcal{E}$, it is a standard exercise to characterize the set of density operator fixed points, i.e., states $\rho \in \mathcal{D}_n$ such that
\begin{equation} \label{eqn:fixed-point-definition}
    \mathcal{E}(\rho) = \rho .
\end{equation}
In this section, we characterize the set of robust points for example quantum channels. This demonstrates the relationship between robust points and fixed points which we further elaborate on in Sec.~\ref{subsec:existence-of-robust-encodings}. We remark that the characterizations similar to the ones in this Section may be of independent interest from a purely theoretical perspective, as robust points can be considered a type of generalized fixed point, or symmetry, of quantum channels. 

% First example characterization -- dephasing channel
The pure states which are fixed points of the dephasing channel~\eqref{eqn:dephasing-channel} are $\Pi_0 := |0\>\<0|$ and $\Pi_1 := |1\>\<1|$, and
\begin{equation} \label{eqn:fixed-points-of-dephasing-channel}
    \rho = a \Pi_0 + b \Pi_1 
\end{equation}
with $a + b = 1$ is the general mixed-state density operator fixed point. In contrast, let us now consider the robust points of the same dephasing channel, which satisfy
\begin{equation} \label{eqn:robust-condition-for-dephasing}
    \yhat[ \deph (\rho) ] = \yhat [\rho]
\end{equation}
instead of~\eqref{eqn:fixed-point-definition}. Certainly the state in Eqn.~\eqref{eqn:fixed-points-of-dephasing-channel} will satisfy~\eqref{eqn:robust-condition-for-dephasing} --- i.e., any fixed point is a robust point --- but the set of robust points may contain more elements. To completely characterize the robust set, we seek the set of $\rho \in \mathcal{D}_2$ such that
\begin{equation} \label{eqn:robust-condition-for-dephasing-inequality1}
    \Tr [ \Pi_0 \rho ] \ge 1/2 \implies \Tr[ \Pi_0 \deph (\rho)] \ge 1 / 2
\end{equation}
and
\begin{equation} \label{eqn:robust-condition-for-dephasing-inequality2}
    \Tr [ \Pi_0 \rho ] < 1/2 \implies \Tr[ \Pi_0 \deph (\rho)] < 1 / 2 .
\end{equation}
Using simple properties of the trace and Pauli matrices (see Appendix~\ref{app:useful-formulae} if desired), we can write
\begin{equation}
    \Tr[\Pi_0 \deph(\rho)] = (1 - p) \Tr[\Pi_0 \rho] + p \Tr[\Pi_0 Z \rho Z] = \Tr[ \Pi_0 \rho ]. 
\end{equation}
Thus~\eqref{eqn:robust-condition-for-dephasing-inequality1} and~\eqref{eqn:robust-condition-for-dephasing-inequality2} are satisfied for all density operators $\rho \in \mathcal{D}_{2}$.
That is, every data point $\vec{x} \in \mathcal{X}$ is a robust point of the dephasing channel (independent of the encoding) for the quantum classifier in Def.~\ref{def:binary-quantum-classifier}. 

% Second example characterization
Consider now an amplitude damping channel~\eqref{eqn:amp-damp-channel} with $p = 1$, for which the only fixed point is the pure state $\Pi_0$. By evaluating
\begin{equation}
    \Tr [ \Pi_0 \damp (\rho) ] = (1 - p) \Tr [ \Pi_0 \rho ] + p ,
\end{equation}
we see that a robust point $\sigma$ must satisfy $\Tr[\Pi_0 \sigma] = 1$. That is, the only robust point is $\Pi_0$, and in this case the set of robust points is identical to the set of fixed points.% Returning to \eqref{eqn:qubit-encoding-general} for a single qubit, we can enforce robustness in the encoding strategy by choosing $f(x_1, x_2) = 1$ and $g(x_1, x_2) = 0 ~ \forall \vec{x} = [x_1, x_2]^T$.%, and encoding every data point directly into the $\Pi_0$ state.
% TODO: We have to mention the tradeoff between robustness and learnability here, as any reviewer would immediately note this is ridiculous from the learnability perspective. Or, this sentence could be removed.

% TODO: Incorporate into this section somehow
% Options: (1) Add as more general result after amplitude damping discussion
% (2) Remove amp damp discussion in favor of this
% (3) Put in appendix
% A useful property of quantum channels we can exploit is that of \emph{strict contractivity}. Recall that a trace-preserving channel $\E$ is strictly contractive if
% %
% \begin{equation}\label{eqn:strictly-contractive}
%     D(\E(\rho), \E(\sigma)) < D(\rho, \sigma)
% \end{equation}
% %
% for all $\rho, \sigma \in \mathcal{D}_n$ where $D$ is the trace distance
% %
% \begin{equation}\label{eqn:trace_distance}
%     D(\rho, \sigma) = \frac{1}{2}||\rho - \sigma||_1    
% \end{equation}
% %
% where $||\cdot||_1$ denotes the trace norm.

% This allows us to generalize the discussion of the amplitude damping channel, which had only a single fixed point $\Pi_0$, in \secref{ssec:characterize_robust_points}, to all channels which are strictly contractive. 

% TODO: Add example for depolarizing noise of any strength or mention the result

% Closing paragraph
The previous two examples illustrate how to find the robust points of a quantum channel, and the relationship between robust points and fixed points for the given channels. As expected from~\eqref{eqn:robust-point-definition} and~\eqref{eqn:fixed-point-definition}, these examples confirm that
\begin{equation} \label{eqn:fixed-points-are-subset-of-robust-points}
    \mathcal{F}(\mathcal{E}) \subseteq \mathcal{R}(E, \mathcal{E}, \hat{y}) 
\end{equation}
where $\mathcal{F}(\mathcal{E})$ denotes the set of fixed points of $\mathcal{E}$. In Sec.~\ref{subsec:existence-of-robust-encodings}, we use this connection to generalize the above discussion and prove the existence of robust data encodings.
% NOTE: I rewrote existence section to agree with this section

% =============================================================================
% =============================================================================
\subsection{Robustness Results} \label{subsec:robustness-results}
% =============================================================================
% =============================================================================

% TODO: Add short overview sentence/paragraph

In this section, we state and prove results on robust encodings. In particular, we prove robustness results for Pauli, depolarizing and amplitude damping channels, given certain conditions on the noise parameters in each channel.

% Because of this assumption, results apply to any encoding.
% The previous sen
First, we consider when robustness can be achieved for a Pauli channel.

\begin{theorem} \label{thm:robustness_pauli_noise_xy}
    Let $\paul$ be a Pauli channel~\eqref{eqn:pauli-channel} and consider a quantum classifier on data from the set $\mathcal{X}$. Then, for any encoding $E: \mathcal{X} \rightarrow \mathcal{D}_2$, we have complete robustness
    \begin{equation}
        \mathcal{R} (\paul, E, \yhat) = \mathcal{X}
    \end{equation}
    if $p_X + p_Y \le 1/2$. (Recall that $\vec{p} = [p_I, p_X, p_Y, p_Z]$.)
\end{theorem}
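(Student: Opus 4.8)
The plan is to reduce complete robustness to a single scalar inequality and then settle it with a short conjugation computation. By Definition~\ref{def:complete_noise_robustness}, the claim $\mathcal{R}(\paul, E, \yhat) = \mathcal{X}$ is equivalent to $\yhat[\paul(\rhotildex)] = \yhat[\rhotildex]$ holding for every $\vec{x} \in \mathcal{X}$, and by the decision rule~\eqref{eqn:decision_rule} this is exactly the statement that $\Tr[\Pi_0 \rhotildex]$ and $\Tr[\Pi_0 \paul(\rhotildex)]$ lie on the same side of the threshold $1/2$. So the entire problem reduces to tracking how the Pauli channel transforms the single scalar $\Tr[\Pi_0 \rhotildex]$ and showing that the threshold at $1/2$ is preserved whenever $p_X + p_Y \le 1/2$.

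First I would compute $\Tr[\Pi_0 \paul(\rho)]$ for arbitrary $\rho$ using cyclicity of the trace together with the conjugation identities $X \Pi_0 X = Y \Pi_0 Y = \Pi_1$ and $Z \Pi_0 Z = \Pi_0$, which give
\begin{equation}
    \Tr[\Pi_0 \paul(\rho)] = (p_I + p_Z)\Tr[\Pi_0 \rho] + (p_X + p_Y)\Tr[\Pi_1 \rho].
\end{equation}
Writing $q := p_X + p_Y$ and substituting both $\Tr[\Pi_1 \rho] = 1 - \Tr[\Pi_0 \rho]$ and $p_I + p_Z = 1 - q$, the right-hand side collapses to the affine map $t \mapsto (1 - 2q)t + q$ in the variable $t := \Tr[\Pi_0\rho]$. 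The cleanest way to finish is to observe that this map fixes the threshold value $1/2$, so that it rewrites in the centered form
\begin{equation} \label{eqn:pauli-robust-identity}
    \Tr[\Pi_0 \paul(\rho)] - \tfrac{1}{2} = \left(1 - 2q\right)\left(\Tr[\Pi_0 \rho] - \tfrac{1}{2}\right).
\end{equation}

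With~\eqref{eqn:pauli-robust-identity} in hand the conclusion is immediate: under the hypothesis $q \le 1/2$ the prefactor $1 - 2q$ is nonnegative, so $\Tr[\Pi_0 \paul(\rhotildex)] - 1/2$ carries the same sign as $\Tr[\Pi_0 \rhotildex] - 1/2$ (strictly so when $q < 1/2$). Hence $\yhat[\paul(\rhotildex)] = \yhat[\rhotildex]$ for every $\vec{x}$, independently of the encoding $E$, which is exactly complete robustness. I would note that~\eqref{eqn:pauli-robust-identity} also makes the geometry transparent: increasing $q$ simply contracts every point toward the decision boundary along the $\Tr[\Pi_0 \cdot]$ axis without letting it cross, which is why no point changes its label.

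The computation is routine, so I do not anticipate a genuine obstacle; the two places that deserve care are (i) the reduction when $\paul$ acts only on the classification qubit $c$ of a multi-qubit processed state, where one first traces out the remaining qubits $\bar{c}$ so that~\eqref{eqn:pauli-robust-identity} applies verbatim to the reduced state with $t = \Tr[\Pi_0^c \rhotildex]$, and (ii) the degenerate boundary $q = 1/2$, where the prefactor vanishes and $\paul$ maps every state exactly onto the threshold, $\Tr[\Pi_0 \paul(\rho)] = 1/2$. At that boundary the tie-breaking convention of~\eqref{eqn:decision_rule} (assigning ties to label $0$) must be invoked explicitly, and it is worth flagging that this is precisely where robustness of the label-$1$ class becomes marginal.
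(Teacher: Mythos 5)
Your proposal is correct and follows essentially the same route as the paper's proof: both reduce the claim to the affine map $t \mapsto (1-2q)t + q$ on $t = \Tr[\Pi_0\rhotildex]$ via the conjugation identities $X\Pi_0 X = Y\Pi_0 Y = \Pi_1$, $Z\Pi_0 Z = \Pi_0$ and resolution of the identity; your centered form~\eqref{eqn:pauli-robust-identity} is only a cosmetic repackaging. Your flag on the boundary $q = 1/2$ is in fact sharper than the paper's own argument, which asserts the strict inequality $\Tr[\Pi_0\paul(\rhotildex)] < 1/2$ for the $\yhat = 1$ case even though at $q = 1/2$ every state is sent exactly to the threshold and the tie-breaking rule in~\eqref{eqn:decision_rule} then relabels such points as $0$, so complete robustness strictly requires $p_X + p_Y < 1/2$ for the label-$1$ class.
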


% TODO: (BIG TODO!!!!!!!!!!!!!!!!!!!!!!!!!!!!!!!!!!!!!!!!!!!!!!!!!!!!!!!!!!!!!!!!!!!!!!!!!)
% Make a note/paragraph even that if the condition p_X + p_Y \le 1/2 is not satisfied, not every encoding is robust. Indeed, different encodings have different sets of robust points. Reference numerical section. This is a MUST TODO before arxiv submission.

% The proof of Theorem~\ref{thm:robustness_pauli_noise_xy} is straightforward and included below.

\begin{proof}
    The predicted label in the noisy case is identical to~\eqref{eqn:decision_rule} with $\rhotildex$ replaced by $\paul ( \rhotildex )$. That is,
    \begin{equation} \label{eqn:classification-scheme-pauli-noise}
        \yhat [ \paul (\rhotildex) ] = \begin{cases}
            0 \qquad \text{if   } \Tr [ \Pi_0 \paul( \rhotildex )] \ge 1 / 2 \\
            1 \qquad \text{otherwise} 
        \end{cases} .
    \end{equation}
    By definition~\eqref{eqn:pauli-channel}, we have
    \begin{equation} \label{eqn:pauli-proof1}
        \begin{split}
            \Tr [ \Pi_0  \paul( \rhotildex )] = p_I & \Tr[ \Pi_0 \rhotildex] + p_X \Tr [ \Pi_0 X \rhotildex X] \\
            + \ p_Y \Tr& [ \Pi_0 Y \rhotildex Y] + p_Z \Tr [ \Pi_0 Z \rhotildex Z] .
        \end{split}
    \end{equation}
    Using straightforward substitutions (Appendix~\ref{app:useful-formulae}),
    we may write~\eqref{eqn:pauli-proof1} as
    \begin{equation}
        \Tr [ \Pi_0  \paul( \rhotildex )] = (p_I + p_Z) \Tr [ \Pi_0 \rhotildex ] + (p_X + p_Y) \Tr[ \Pi_1 \rhotildex ] .
    \end{equation}
    By resolution of the identity
    \begin{equation}
        1 = \Tr [ \rhotildex ] = \Tr[\Pi_0 \rhotildex] + \Tr[ \Pi_1 \rhotildex ] ,
    \end{equation}
    we come to the simplified expression
    \begin{equation}
        \Tr [ \Pi_0  \paul( \rhotildex )] = \left[ 1 - 2 \nu \right] \Tr [ \Pi_0 \rhotildex ] + \nu .
    \end{equation}
    where $\nu := p_X + p_Y$. 
    
    Suppose the noiseless classification is $\yhat = 0$ so that $\Tr[ \Pi_0 \rhotildex ] \ge 1 / 2$. Since $\nu \le 1/2$, we have
    \begin{equation}
        \Tr [ \Pi_0  \paul( \rhotildex )] \ge \left[ 1 - 2 \nu \right] \frac{1}{2} + \nu  = \frac{1}{2}
    \end{equation}
    Hence, classification of data points with label $\yhat = 0$ is robust for any encoding. % Note that if $\nu > 1/2$, the prediction by the quantum classifier is incorrect ($1$ instead of $0$).
    
    Suppose the noiseless classification is $\yhat = 1$ so that $\Tr[ \Pi_0 \rhotildex ] < 1 / 2$. Since $\nu \le 1/2$, we have
    \begin{equation}
        \Tr [ \Pi_0  \paul( \rhotildex )] < \left[ 1 - 2 \nu \right] \frac{1}{2} + \nu = \frac{1}{2}.
    \end{equation}
    Hence, classification of data points with label $\yhat = 1$ is also robust for any encoding. % Note that if $\nu > 1/2$, the prediction by the quantum classifier is incorrect ($0$ instead of $1$).
\end{proof}

Returning to the condition, $p_X + p_Y \le 1/2$, one can imagine a NISQ computer in which either $p_X$ or $p_Y$ were large enough such that this condition is not satisfied. In this regard, we note two things. The first is that if this condition is \emph{not} satisfied, then not every encoding strategy will be robust to the Pauli channel in this model. In particular, the set of robust points will now be \textit{dependent} on the encoding strategy. This is similar to the behavior of the amplitude damping channel (which we demonstrate shortly), and we illustrate in \secref{sec:numerical_results}. Secondly, the requirement $p_X + p_Y \le 1/2$ appears because the decision rule uses a measurement in the computational basis. In this case, we can still achieve robustness by using a modified decision rule which measures in a different basis. 

\begin{corollary} \label{corr:pauli_x_robustness}
    Consider a quantum classifier on data from the set $\mathcal{X}$ with modified decision rule
    \begin{equation} \label{eqn:classification-scheme-measure-hadamard-basis}
        \hat{z} [ \rhotildex ] = \begin{cases}
            0 \qquad \text{if   } \Tr [ \Pi_+ \rhotildex ] \ge 1 / 2 \\
            1 \qquad \text{otherwise} 
        \end{cases} .
    \end{equation}
    Here, $\Pi_+ := |+\> \< + |$ is the projector onto the $+1$ eigenstate $|+\>$ of Pauli $X$. Then, for any $E: \mathcal{X} \rightarrow \mathcal{D}_2$,
    \begin{equation}
        \mathcal{R} (\paul, E, \hat{z}) = \mathcal{X}
    \end{equation}
    for a Pauli channel $\paul$ such that $p_Y + p_Z \le 1/2$ .
\end{corollary}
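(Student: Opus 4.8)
The plan is to mirror the proof of \theref{thm:robustness_pauli_noise_xy} almost verbatim, with the computational-basis projector $\Pi_0$ replaced throughout by the Hadamard-basis projector $\Pi_+$. The only genuinely new ingredient is bookkeeping the way the Pauli operators conjugate $\Pi_+$ rather than $\Pi_0$, and this is precisely what causes the threshold to shift from $p_X + p_Y \le 1/2$ to $p_Y + p_Z \le 1/2$.

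First I would expand, using linearity of the trace and definition~\eqref{eqn:pauli-channel},
\begin{equation}
\begin{split}
\Tr[\Pi_+ \paul(\rhotildex)] = p_I \Tr[\Pi_+ \rhotildex] + p_X \Tr[\Pi_+ X\rhotildex X] \\
+ \ p_Y \Tr[\Pi_+ Y\rhotildex Y] + p_Z \Tr[\Pi_+ Z\rhotildex Z].
\end{split}
\end{equation}
Then I would record the conjugation relations $X\Pi_+ X = \Pi_+$, $Y\Pi_+ Y = \Pi_-$, and $Z\Pi_+ Z = \Pi_-$, where $\Pi_- := \ketbra{-}{-}$; these follow from $X\ket{+} = \ket{+}$, $Y\ket{+} = -i\ket{-}$, and $Z\ket{+} = \ket{-}$. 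Cycling the trace gives $\Tr[\Pi_+ X\rhotildex X] = \Tr[\Pi_+ \rhotildex]$ and $\Tr[\Pi_+ Y\rhotildex Y] = \Tr[\Pi_+ Z\rhotildex Z] = \Tr[\Pi_- \rhotildex]$, so that
\begin{equation}
\Tr[\Pi_+ \paul(\rhotildex)] = (p_I + p_X)\Tr[\Pi_+ \rhotildex] + (p_Y + p_Z)\Tr[\Pi_- \rhotildex].
\end{equation}

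Next I would invoke the resolution of identity in the Hadamard basis, $\Tr[\Pi_+ \rhotildex] + \Tr[\Pi_- \rhotildex] = 1$, together with $p_I + p_X = 1 - (p_Y + p_Z)$, to collapse this to the single affine form
\begin{equation}
\Tr[\Pi_+ \paul(\rhotildex)] = (1 - 2\mu)\Tr[\Pi_+ \rhotildex] + \mu,
\end{equation}
where $\mu := p_Y + p_Z$. This is structurally identical to the expression obtained for $\Pi_0$ in \theref{thm:robustness_pauli_noise_xy} with $\nu$ replaced by $\mu$, so the rest of the argument transfers unchanged: assuming $\mu \le 1/2$, a noiseless label $\hat{z} = 0$ (i.e. $\Tr[\Pi_+ \rhotildex] \ge 1/2$) yields $\Tr[\Pi_+ \paul(\rhotildex)] \ge (1 - 2\mu)/2 + \mu = 1/2$, and a noiseless label $\hat{z} = 1$ (i.e. $\Tr[\Pi_+ \rhotildex] < 1/2$) yields $\Tr[\Pi_+ \paul(\rhotildex)] < 1/2$. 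In both cases the label is preserved for every $\vec{x}$, giving $\mathcal{R}(\paul, E, \hat{z}) = \mathcal{X}$ for any encoding $E$.

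I do not anticipate a serious obstacle, since the proof is essentially a transcription of \theref{thm:robustness_pauli_noise_xy}. The one step that requires care — and the conceptual content of the corollary — is the Pauli conjugation calculation: because $X$ stabilizes $\ket{+}$ while $Y$ and $Z$ each send $\ket{+}$ to (a phase times) $\ket{-}$, the ``error weight'' that must stay below $1/2$ becomes $p_Y + p_Z$ rather than $p_X + p_Y$. This is the expected consequence of measuring along the $X$-axis of the Bloch sphere instead of the $Z$-axis: the channel is benign in the direction of the measurement, and the relevant bound involves exactly the two Paulis transverse to it.
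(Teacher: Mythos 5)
Your proof is correct and is exactly the argument the paper intends: the text states only that ``the proof mimics that of Theorem~\ref{thm:robustness_pauli_noise_xy},'' and your transcription with $\Pi_0 \to \Pi_+$, the conjugation relations $X\Pi_+X=\Pi_+$, $Y\Pi_+Y=Z\Pi_+Z=\Pi_-$, and the resulting affine form $(1-2\mu)\Tr[\Pi_+\rhotildex]+\mu$ with $\mu=p_Y+p_Z$ is precisely that adaptation. No gaps beyond those already present in the paper's own proof of the theorem.
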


The proof mimics that of Theorem~\ref{thm:robustness_pauli_noise_xy}. We note that the analogous statement for measurements in the $Y$-basis also holds.
These results suggest that device-specific encoding strategies may be important for achieving robustness in practice on NISQ computers. % Namely, an appropriate basis to measure in can be selected depending on if $p_X$, $p_Y$, or $p_Z$ is large. 

% Dephasing
Theorem~\ref{thm:robustness_pauli_noise_xy} also implies the following result for dephasing noise, which is a Pauli channel with $p_X = p_Y = 0$.
\begin{theorem} \label{thm:robustness-single-qubit-dephasing-noise}
    Let $\deph$ be a dephasing channel~\eqref{eqn:dephasing-channel}, and consider a quantum classifier on data from the set $\mathcal{X}$. Then, for any encoding $E: \mathcal{X} \rightarrow \mathcal{D}_2$,
    \begin{equation}
        \mathcal{R} (\deph, E, \yhat) = \mathcal{X} .
    \end{equation}
\end{theorem}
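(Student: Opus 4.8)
The plan is to recognize that the dephasing channel is precisely the Pauli channel~\eqref{eqn:pauli-channel} with $p_X = p_Y = 0$, $p_I = 1 - p$, and $p_Z = p$, so the statement should follow immediately from Theorem~\ref{thm:robustness_pauli_noise_xy}. First I would observe that with these parameter choices the quantity $\nu := p_X + p_Y$ that governs the proof of Theorem~\ref{thm:robustness_pauli_noise_xy} equals $0$, which trivially satisfies the required hypothesis $\nu \le 1/2$. Invoking that theorem then yields $\mathcal{R}(\deph, E, \yhat) = \mathcal{X}$ for every encoding $E : \mathcal{X} \rightarrow \mathcal{D}_2$, which is exactly the claim.

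For completeness I would also give the one-line direct argument already implicit in Sec.~\ref{ssec:characterize_robust_points}. Since $Z$ stabilizes the ground state, $Z \Pi_0 Z = \Pi_0$, and hence by cyclicity of the trace $\Tr[\Pi_0 Z \rhotildex Z] = \Tr[\Pi_0 \rhotildex]$. Therefore $\Tr[\Pi_0 \deph(\rhotildex)] = (1-p)\Tr[\Pi_0 \rhotildex] + p\,\Tr[\Pi_0 Z \rhotildex Z] = \Tr[\Pi_0 \rhotildex]$, so the dephasing channel leaves the ground-state population of the classification qubit exactly unchanged. Consequently the decision rule~\eqref{eqn:decision_rule} returns the same label on $\deph(\rhotildex)$ as on $\rhotildex$ for every $\vec{x}$, giving complete robustness independent of the encoding.

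There is no genuine obstacle here: the content of the theorem is entirely subsumed by Theorem~\ref{thm:robustness_pauli_noise_xy}, and the dephasing case is in fact the most favorable instance since the coefficient $\nu$ vanishes rather than merely being bounded by $1/2$. The only point requiring care is the bookkeeping of which Pauli probabilities survive, together with the observation that the $Z$ term contributes $\Tr[\Pi_0 Z \rhotildex Z] = \Tr[\Pi_0 \rhotildex]$ (not $\Tr[\Pi_1 \rhotildex]$), so that dephasing preserves the relevant probability exactly. I would accordingly present the proof as a short corollary-style deduction from Theorem~\ref{thm:robustness_pauli_noise_xy} rather than reproving anything from scratch.
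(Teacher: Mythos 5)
Your proposal is correct and matches the paper's own treatment: the paper likewise presents this theorem as an immediate consequence of Theorem~\ref{thm:robustness_pauli_noise_xy} with $p_X = p_Y = 0$, and also points to the direct computation $\Tr[\Pi_0 \deph(\rhotildex)] = \Tr[\Pi_0 \rhotildex]$ from Sec.~\ref{ssec:characterize_robust_points}, both of which you reproduce accurately. No gaps.
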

This result follows immediately from the discussion of the dephasing channel in the above Section \ref{ssec:characterize_robust_points}.

% Bit flip and phase flip corollaries
Similar to Corollary~\ref{corr:pauli_x_robustness}, we can consider a modified decision rule to achieve robustness for a bit-flip channel.

\begin{corollary} \label{corr:bit-flip-robustness}
    Consider a quantum classifier on data from the set $\mathcal{X}$ with modified decision rule $\hat{z}$ defined in Eqn.~\eqref{eqn:classification-scheme-measure-hadamard-basis}. Then, for any encoding $E: \mathcal{X} \rightarrow \mathcal{D}_2$,
    \begin{equation}
        \mathcal{R} (\flip, E, \hat{z}) = \mathcal{X} .
    \end{equation}
\end{corollary}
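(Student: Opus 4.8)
The plan is to mirror the argument for dephasing noise (Theorem~\ref{thm:robustness-single-qubit-dephasing-noise}), simply exchanging the roles of the computational and Hadamard bases. The modified rule $\hat{z}$ assigns label $0$ precisely when $\Tr[\Pi_+ \rhotildex] \ge 1/2$, so robustness of every point is equivalent to showing that the bit-flip channel leaves the $X$-basis statistic $\Tr[\Pi_+ \,\cdot\,]$ invariant, i.e.\ that $\Tr[\Pi_+ \flip(\rhotildex)] = \Tr[\Pi_+ \rhotildex]$ for every processed state $\rhotildex$. Because this equality holds with no reference to the encoding $E$, it would immediately yield $\mathcal{R}(\flip, E, \hat{z}) = \mathcal{X}$.

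First I would expand the channel as $\flip(\rhotildex) = (1-p)\rhotildex + p\, X \rhotildex X$ and compute
\[ \Tr[\Pi_+ \flip(\rhotildex)] = (1-p)\Tr[\Pi_+ \rhotildex] + p\,\Tr[\Pi_+ X \rhotildex X]. \]
The one substantive step is the observation that $\ket{+}$ is the $+1$ eigenstate of $X$, so $X \Pi_+ X = \Pi_+$; cycling the trace then gives $\Tr[\Pi_+ X \rhotildex X] = \Tr[X \Pi_+ X \rhotildex] = \Tr[\Pi_+ \rhotildex]$. Substituting back, the two terms recombine to $\Tr[\Pi_+ \rhotildex]$, which establishes the invariance and hence the corollary. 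This is the exact analogue of the identity $\Tr[\Pi_0 \deph(\rho)] = \Tr[\Pi_0 \rho]$ used in the dephasing case, with $Z$ replaced by $X$ and $\Pi_0$ by $\Pi_+$.

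An even shorter route is to observe that the bit-flip channel is the Pauli channel~\eqref{eqn:pauli-channel} with $p_X = p$ and $p_Y = p_Z = 0$. The hypothesis $p_Y + p_Z \le 1/2$ of Corollary~\ref{corr:pauli_x_robustness} is then trivially met since $p_Y + p_Z = 0$, and the conclusion $\mathcal{R}(\flip, E, \hat{z}) = \mathcal{X}$ follows at once for any $E : \mathcal{X} \to \mathcal{D}_2$. I would likely state the proof this way and relegate the explicit trace computation above to a one-line justification.

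I do not expect a genuine obstacle: the whole content reduces to the single commutation identity $X \Pi_+ X = \Pi_+$, after which encoding-independence is immediate. The only point meriting a moment's care is checking that the decision inequality is preserved in both directions (for $\Tr[\Pi_+ \rhotildex] \ge 1/2$ and for $< 1/2$), but this is automatic here because the statistic is preserved exactly rather than merely bounded from one side.
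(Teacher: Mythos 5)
Your proposal is correct and matches the paper's route: the paper derives this corollary exactly as you do, by viewing the bit-flip channel as a Pauli channel with $p_X = p$, $p_Y = p_Z = 0$ and invoking the Hadamard-basis result of Corollary~\ref{corr:pauli_x_robustness}, whose hypothesis $p_Y + p_Z \le 1/2$ is trivially satisfied. Your explicit verification that $X\Pi_+X = \Pi_+$ makes $\Tr[\Pi_+\,\cdot\,]$ exactly invariant is the same computation underlying the paper's argument (the $X$-basis analogue of the dephasing identity), so there is nothing to add.
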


We note that a decision rule which measures in the $Y$-basis yields robustness to combined bit/phase-flip errors. (That is, the error channel $\mathcal{E}(\rho) = (1 - p) \rho + p Y \rho Y$.) v

% Discussion on depolarizing noise and proof leading up to theorem
We now consider robustness for depolarizing noise~\eqref{eqn:depolarizing-channel}. A simple calculation shows that
\begin{equation}
    \Tr [ \Pi_0 \depo (\rhotildex ) ] = p / 2 + (1 - p) \Tr [ \Pi_0 \rho ] .
\end{equation}
If $\yhat = 0$ so that $\Tr[ \Pi_0 \rho ] \ge 1/2$, then we have that $\Tr [ \Pi_0 \depo (\rhotildex ) ] \ge 1/2$. Similarly for the case $\yhat = 1$. Thus, we have shown the following.

\begin{theorem} \label{thm:robustness_to_depolarizing_on_unitary_ansatz_single_qubit}
    Let $\depo$ be a depolarizing channel~\eqref{eqn:depolarizing-channel}, and consider a quantum classifier on data from the set $\mathcal{X}$. Then, for any encoding $E: \mathcal{X} \rightarrow \mathcal{D}_2$,
    \begin{equation}
        \mathcal{R} (\depo, E, \yhat) = \mathcal{X} .
    \end{equation}
\end{theorem}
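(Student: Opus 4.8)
The plan is to follow the same template as the proof of Theorem~\ref{thm:robustness_pauli_noise_xy}: express the noisy decision quantity $\Tr[\Pi_0 \depo(\rhotildex)]$ as an affine function of the noiseless quantity $\Tr[\Pi_0 \rhotildex]$, and then verify that this affine map cannot move a point across the decision threshold $1/2$. Since complete robustness $\mathcal{R}(\depo, E, \yhat) = \mathcal{X}$ must hold for \emph{every} encoding $E$, the argument should make no reference to the specific state $\rhotildex$ beyond the single scalar $\Tr[\Pi_0 \rhotildex]$.

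First I would use linearity of the trace together with the definition~\eqref{eqn:depolarizing-channel} to write
\begin{equation}
    \Tr[\Pi_0 \depo(\rhotildex)] = (1-p)\Tr[\Pi_0 \rhotildex] + p\,\Tr[\Pi_0 I/2].
\end{equation}
The only computation required is $\Tr[\Pi_0 I/2] = 1/2$, which is immediate since $\Pi_0 = |0\rangle\langle 0|$ is a rank-one projector on a single qubit and $I/2$ is the maximally mixed state. Writing $t := \Tr[\Pi_0 \rhotildex]$, the noisy quantity is thus the image of $t$ under the affine map $\phi(t) := (1-p)t + p/2$.

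The key observation is that $\phi$ is monotonically increasing (as $0 \le p \le 1$) and fixes the threshold, $\phi(1/2) = 1/2$. Consequently, if the noiseless label is $\yhat = 0$, i.e.\ $t \ge 1/2$, then $\phi(t) \ge \phi(1/2) = 1/2$, so the noisy label is also $0$; and if $\yhat = 1$, i.e.\ $t < 1/2$, then $\phi(t) < 1/2$, so the noisy label is again unchanged. In both cases $\yhat[\depo(\rhotildex)] = \yhat[\rhotildex]$, which holds for arbitrary $\rhotildex$ and hence for arbitrary encoding, giving $\mathcal{R}(\depo, E, \yhat) = \mathcal{X}$.

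There is no real obstacle here; the content is entirely in recognizing $1/2$ as a fixed point of the contraction $\phi$ toward the maximally mixed value. As a sanity check and alternative route, I would note that the depolarizing channel is itself a Pauli channel with $p_X = p_Y = p_Z = p/4$ and $p_I = 1 - 3p/4$ (using the identity $I/2 = \tfrac{1}{4}(\rho + X\rho X + Y\rho Y + Z\rho Z)$), so that $p_X + p_Y = p/2 \le 1/2$ automatically. The claim therefore also follows as an immediate special case of Theorem~\ref{thm:robustness_pauli_noise_xy}.
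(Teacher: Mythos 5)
Your proof is correct and follows the same route as the paper: it computes $\Tr[\Pi_0 \depo(\rhotildex)] = (1-p)\Tr[\Pi_0 \rhotildex] + p/2$ and observes that this affine map fixes the threshold $1/2$, so the predicted label is unchanged for either class. The additional remark that the result also follows from Theorem~\ref{thm:robustness_pauli_noise_xy} via $p_X + p_Y = p/2 \le 1/2$ is a valid sanity check but not needed.
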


We remark that Theorem~\ref{thm:robustness_to_depolarizing_on_unitary_ansatz_single_qubit} holds with measurements in any basis, not just the computational basis. Further, we will soon generalize this result to (i) multi-qubit classifiers and (ii) noisy data encoding~\eqref{eqn:robust-point-noisy-encoding}.

% Amplitude damping channel
We now consider amplitude damping noise, for which the robust set $\mathcal{R}$ depends on the encoding $E$. From the channel definition~\eqref{eqn:amp-damp-channel}, it is straightforward to see that
\begin{equation} \label{eqn:amp-damp-simple-expansion}
    \Tr [ \Pi_0 \damp ( \rhotildex )] = \Tr [ \Pi_0 \rhotildex ] + p \Tr [ \Pi_1 \rhotildex ] .
\end{equation}
Suppose first that the noiseless prediction is $\yhat = 0$ so that $\Tr [ \Pi_0 \rhotildex ] \ge 1/2$. Then, certainly $\Tr [ \Pi_0 \damp ( \rhotildex )] \ge 1/2$ because $p \ge 0$ and $\Tr [ \Pi_1 \rhotildex ] \ge 0$. Thus, the noisy prediction is always identical to the noiseless prediction when the noiseless prediction is $\yhat = 0$. This can be understood intuitively because an amplitude damping channel models the $|1\rangle \mapsto |0\rangle$ transition \cite{john_preskill_quantum_1998} which only increases the probability of the ground state. 

Suppose now that the noiseless prediction is $\yhat = 1$. From~\eqref{eqn:amp-damp-simple-expansion}, we require that
\begin{equation}
    \Tr [ \Pi_0 \damp ( \rhotildex )] = \Tr [ \Pi_0 \rhotildex ] + p \Tr [ \Pi_1 \rhotildex ] < 1/2
\end{equation}
to achieve robustness. We use resolution of the identity
\begin{equation}
    \Tr[ \Pi_1 \rhotildex] = 1 - \Tr [ \Pi_0 \rhotildex] 
\end{equation}
to arrive at the condition
\begin{equation} \label{eqn:amp-damp-robustness-condition}
    \Tr [ \Pi_1 \rhotildex ] > \frac{1}{2(1 - p)} .
\end{equation}
Let $\rhox$ be given by the general qubit encoding~\eqref{eqn:qubit-encoding-general} so that~\eqref{eqn:amp-damp-robustness-condition} can be written (see~\eqref{eqn:useful-11mtx-elt})
\begin{equation*}
    |U_{10}|^2 f^2 + 2 \Re [ U_{11}^* U_{10} f g^*] + |U_{11}|^2 |g|^2 > \frac{1}{2(1 - p)}
\end{equation*}
where $U_{ij}$ denote the optimal unitary matrix elements.

% Formal statement of theorem
We have thus shown the following.
\begin{theorem} \label{thm:robustness-amp-damp}
    Consider a quantum classifier on data from the set $\mathcal{X}$, and let $\damp$ denote the amplitude damping channel~\eqref{eqn:amp-damp-channel}. Then, for any qubit encoding $E$ defined in~\eqref{eqn:qubit-encoding-general} which satisfies
    \begin{equation}\label{eqn:amp_damp_robustness_condition_expanded_unitary}
        |U_{10}|^2 f^2 + 2 \Re [ U_{11}^* U_{10} f g^*] + |U_{11}|^2 |g|^2 > \frac{1}{2(1 - p)} ,
    \end{equation}
    we have
    \begin{equation}
        \mathcal{R} (\paul, E, \yhat) = \mathcal{X} .
    \end{equation}
    If $E$ is not completely robust, the set of points $\vec{x}$ such that that~\eqref{eqn:amp_damp_robustness_condition_expanded_unitary} holds define the partially robust set.
\end{theorem}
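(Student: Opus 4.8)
The plan is to reduce the robustness condition $\yhat[\damp(\rhotildex)] = \yhat[\rhotildex]$ to the explicit inequality on $f$, $g$, and the optimal unitary matrix elements, via a case analysis on the noiseless label. First I would record the identity
\begin{equation}
    \Tr[\Pi_0 \damp(\rhotildex)] = \Tr[\Pi_0 \rhotildex] + p\,\Tr[\Pi_1 \rhotildex],
\end{equation}
which follows immediately from the channel definition in~\eqref{eqn:amp-damp-channel} and reflects the fact that amplitude damping only transfers population from $|1\rangle$ to $|0\rangle$.

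The two cases then behave very differently. When the noiseless label is $\yhat[\rhotildex] = 0$, so $\Tr[\Pi_0 \rhotildex] \ge 1/2$, robustness is automatic: since $p \ge 0$ and $\Tr[\Pi_1 \rhotildex] \ge 0$, the right-hand side above is at least $1/2$, giving $\yhat[\damp(\rhotildex)] = 0$ as well. Hence no constraint on the encoding is needed here. The substantive case is $\yhat[\rhotildex] = 1$, for which preservation of the label requires $\Tr[\Pi_0 \damp(\rhotildex)] < 1/2$. Using the resolution of identity $\Tr[\Pi_1 \rhotildex] = 1 - \Tr[\Pi_0 \rhotildex]$ to eliminate $\Tr[\Pi_0 \rhotildex]$, this rearranges to the single scalar condition $\Tr[\Pi_1 \rhotildex] > \tfrac{1}{2(1-p)}$, which is the one place where the noise strength $p$ enters nontrivially.

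The final step is to make this condition explicit in the encoding by expanding $\Tr[\Pi_1 \rhotildex]$, i.e.\ the $(1,1)$ matrix element of $\rhotildex = U \rhox U^\dagger$, with $\rhox$ taken as the general qubit encoding~\eqref{eqn:qubit-encoding-general}. Conjugating by $U$ and reading off the $(1,1)$ entry (the analogue of the $(0,0)$ computation behind~\eqref{eqn:projector_expanded}, recorded in~\eqref{eqn:useful-11mtx-elt}) gives $|U_{10}|^2 f^2 + 2\Re[U_{11}^* U_{10} f g^*] + |U_{11}|^2 |g|^2$, where $f$ is real without loss of generality. Requiring this to exceed $1/(2(1-p))$ reproduces exactly~\eqref{eqn:amp_damp_robustness_condition_expanded_unitary}; imposing it for every $\vec{x} \in \mathcal{X}$ yields $\mathcal{R}(\damp, E, \yhat) = \mathcal{X}$, while the set of $\vec{x}$ satisfying it is by Definition~\ref{def:robust-set} the robust set in general, giving the partial-robustness statement. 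I do not expect a genuine obstacle here: the only real work is the $(1,1)$ matrix-element expansion, which is routine, and the conceptual content is simply that amplitude damping is ``one-sided,'' so robustness can only fail for points labeled $\yhat = 1$.
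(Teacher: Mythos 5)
Your proposal is correct and follows essentially the same route as the paper: the identity $\Tr[\Pi_0 \damp(\rhotildex)] = \Tr[\Pi_0 \rhotildex] + p\,\Tr[\Pi_1 \rhotildex]$, the observation that the $\yhat = 0$ case is automatic, the reduction of the $\yhat = 1$ case to $\Tr[\Pi_1 \rhotildex] > \tfrac{1}{2(1-p)}$ via the resolution of the identity, and the expansion of the $(1,1)$ matrix element using~\eqref{eqn:useful-11mtx-elt}. No gaps.
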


% Discussion: Depends on unitary matrix, existence of encoding transition
We note that~\eqref{eqn:amp_damp_robustness_condition_expanded_unitary} depends on the optimal unitary $U$ as well as the encoding $E$. This is expected as the final state $\rhotildex$ has been processed by the QNN. In practice, since we do not know the optimal unitary parameters \textit{a priori}, it remains a question of how large the (partially) robust set will for a given an encoding. To address this point, we discuss in Sec.~\ref{ssec:encoding_learn_alg} how training over hyperparameters in the encoding function can help find the robust region even after application of the \textit{a priori} unknown optimal unitary. Additionally, in the next Section we discuss whether we can find an encoding which satisfies~\eqref{eqn:amp_damp_robustness_condition_expanded_unitary}, or more generally whether a robust encoding exists for a given channel. 

% Does there exist an encoding which satisfies the condition for amplitude damping?
Given the robustness condition~\eqref{eqn:amp_damp_robustness_condition_expanded_unitary} for the amplitude damping channel, it is natural to ask whether such an encoding exists. In Sec.~\ref{subsec:existence-of-robust-encodings}, we show the answer is yes by demonstrating there always exists a robust encoding for any trace preserving quantum operation. This encoding may be trivial, which leads to the idea of a tradeoff between learnability and robustness. (See Sec.~\ref{subsec:existence-of-robust-encodings}.)

% Transition paragraph into global depolarizing theorem
We now consider global depolarizing noise on a multi-qubit classifier. It turns out that any encoding is completely robust to this channel applied at any point throughout the circuit. To clearly state the theorem, we introduce the following notation. First, let 
\begin{equation} \label{eqn:short-notation-depo-channel}
    \E_{p_i} (\rho) = p_i \rho + (1 - p_i) I_d / d 
\end{equation}
be shorthand for a global depolarizing channel with probability $p_i$. (Note $p_i$ and $1 - p_i$ are intentionally reversed compared to Def.~\ref{def:global-depolarizing-channel} to simplify the proof.) Then, let
\begin{equation} \label{eqn:depo-robustness-notation-of-unitary-noise-application}
    \tilde{\rho}_{\vec{x}}^{(m)} \equiv \left[ \prod_{i = 1}^{m} U_i \circ \E_{p_i} \right] \circ \rhox
\end{equation}
denote the state of the encoded point $\rhox$ after $m$ applications of a global depolarizing channel and unitary channel. For instance, $m = 1$ corresponds to 
\begin{equation*}
    U_1 \circ \E_{p_1} \circ \rhox \equiv U_1( \E_{p_1} ( \rhox ) )
\end{equation*}
and $m = 2$ corresponds to
\begin{equation*}
    U_2 \circ \E_{p_2} \circ U_1 \circ \E_{p_1} \circ \rhox \equiv U_2 ( \E_{p_2} ( U_1( \E_{p_1} ( \rhox ) ) ) ) .
\end{equation*}
We remark that $U_i$ can denote any unitary in the circuit.

With this notation, we state the theorem as follows.

\begin{theorem} \label{thm:robustness_to_depolarizing_on_unitary_ansatz_multiple_qubits}
    Consider a quantum classifier on data from the set $\mathcal{X}$ with decision rule $\hat{y}$ defined in Eqn.~\eqref{eqn:classification-scheme-measure-hadamard-basis}. Then, for any encoding $E: \mathcal{X} \rightarrow \mathcal{D}_n$,
    \begin{equation}
        \mathcal{R} \left(\depon , E, \hat{y} \right) = \mathcal{X} .
    \end{equation}
    where $\depon$ denotes the composition of global depolarizing noise acting at any point in the circuit --- i.e., such that the final state of the classifier is given by~\eqref{eqn:depo-robustness-notation-of-unitary-noise-application}. 
\end{theorem}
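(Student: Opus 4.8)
The plan is to use the key structural fact that a global depolarizing channel commutes with every unitary channel, which lets me collapse the interleaved product of unitaries and depolarizing maps in~\eqref{eqn:depo-robustness-notation-of-unitary-noise-application} into a single global depolarizing channel applied to the noiseless processed state. The robustness statement then reduces to the single-channel computation already used to prove Theorem~\ref{thm:robustness_to_depolarizing_on_unitary_ansatz_single_qubit}.

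First I would verify the commutation relation. For any unitary $U$ (acting as the channel $U(\cdot) = U(\cdot)U^\dagger$) and any $i$,
\[
    U \circ \E_{p_i}(\rho) = U\!\left(p_i\rho + (1-p_i)I_d/d\right)\!U^\dagger = p_i\, U\rho U^\dagger + (1-p_i)I_d/d = \E_{p_i}\circ U(\rho),
\]
where I used $U I_d U^\dagger = I_d$. Next I would push every depolarizing factor through the unitaries by induction on $m$, keeping careful track of the accumulated maximally-mixed weight. Writing $\sigma_k$ for the state after $k$ layers, the induction hypothesis $\sigma_{k} = P_k V_k \rhox V_k^\dagger + (1-P_k)I_d/d$, with $P_k := \prod_{i=1}^k p_i$ and $V_k := U_k\cdots U_1$, propagates cleanly because the cross term recombines as $p_k(1-P_{k-1}) + (1-p_k) = 1-P_k$. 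This yields
\[
    \tilde{\rho}_{\vec{x}}^{(m)} = P\, U\rhox U^\dagger + (1-P)I_d/d = \E_{P}(\rhotildex), \qquad P := \prod_{i=1}^m p_i, \quad U := U_m\cdots U_1,
\]
where $\rhotildex = U\rhox U^\dagger$ is exactly the noiseless processed state and $0 \le P \le 1$ since each $p_i\in[0,1]$.

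With the whole noisy process reduced to a single effective depolarizing channel, I would finish by mimicking the proof of Theorem~\ref{thm:robustness_to_depolarizing_on_unitary_ansatz_single_qubit}, now for $n$ qubits. The only new ingredient is that the classification projector $\Pi_0 = \ketbra{0}{0}_c \otimes I_{\bar c}$ has trace $\Tr[\Pi_0] = d/2$, so
\[
    \Tr[\Pi_0 \E_P(\rhotildex)] = P\,\Tr[\Pi_0\rhotildex] + (1-P)\frac{\Tr[\Pi_0]}{d} = P\,\Tr[\Pi_0\rhotildex] + \frac{1-P}{2}.
\]
A two-case argument then closes the proof: if $\yhat[\rhotildex]=0$, i.e.\ $\Tr[\Pi_0\rhotildex]\ge 1/2$, then $\Tr[\Pi_0\E_P(\rhotildex)]\ge P/2 + (1-P)/2 = 1/2$; if $\yhat[\rhotildex]=1$ the reversed strict inequality holds. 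Hence $\yhat[\tilde{\rho}_{\vec{x}}^{(m)}] = \yhat[\rhotildex]$ for every $\vec{x}\in\mathcal{X}$ and every encoding $E$.

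I do not expect a serious obstacle here: the commutation identity is the crux and is immediate. The only place demanding care is the inductive bookkeeping --- confirming that the maximally-mixed weight accumulates to exactly $1-P$ rather than something like $\sum_i(1-p_i)$, and that the unitaries compose in the correct order $U = U_m\cdots U_1$. It is also worth noting that the argument is manifestly basis-independent: $I_d/d$ is invariant under conjugation and any rank-$d/2$ projector (computational or Hadamard) has trace $d/2$, so the identical conclusion holds for the modified decision rule of~\eqref{eqn:classification-scheme-measure-hadamard-basis} referenced in the statement.
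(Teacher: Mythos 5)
Your proposal is correct and follows essentially the same route as the paper: the paper's Lemma~\ref{lem:lemma-for-global-depo-proof} is exactly your inductive collapse of the interleaved channels into a single effective global depolarizing channel with parameter $\prod_i p_i$, followed by the same two-case evaluation of $\Tr[\Pi_0\,\cdot\,]$ using $\Tr[\Pi_0\, I_d/d] = 1/2$. Your preliminary commutation observation and the basis-independence remark are harmless additions (the latter usefully accommodates the decision rule referenced in the theorem statement), but they do not change the substance of the argument.
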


To prove Theorem~\ref{thm:robustness_to_depolarizing_on_unitary_ansatz_multiple_qubits}, we use the following lemma.

\begin{lemma} \label{lem:lemma-for-global-depo-proof}
    The state in Eqn.~\eqref{eqn:depo-robustness-notation-of-unitary-noise-application} can be written as (adapted from~\cite{sharma_noise_2020})
    \begin{equation} \label{eqn:lemma-for-global-depo-proof}
        \tilde{\rho}_{\vec{x}}^{(m)} = 
        \prod_{i = 1}^{m} p_i U_m \cdots U_1 \rhox U_1^\dagger \cdots U_m^\dagger + \left( 1 - \prod_{i = 1}^{m} p_i \right) \frac{I_d}{d}
    \end{equation}
    where $d = 2^n$ is the dimension of the Hilbert space. 
\end{lemma}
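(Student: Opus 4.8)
The plan is to prove Lemma~\ref{lem:lemma-for-global-depo-proof} by induction on $m$, the number of stacked (global-depolarizing)-then-unitary layers in~\eqref{eqn:depo-robustness-notation-of-unitary-noise-application}. The single fact that makes the argument work is that unitary conjugation fixes the maximally mixed state, $U (I_d/d) U^\dagger = I_d/d$: this lets the ``white-noise'' component produced by each depolarizing step be transported unchanged through every subsequent unitary, while only the coherent part gets rotated. Together with linearity of $\E_{p_i}$ and of conjugation, this cleanly separates $\tilde{\rho}_{\vec{x}}^{(m)}$ into a coherent piece and an isotropic piece at every stage.

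For the base case $m=1$ I would apply the definition~\eqref{eqn:short-notation-depo-channel} and then conjugate by $U_1$,
\begin{equation*}
 U_1 \circ \E_{p_1}(\rhox) = p_1\, U_1 \rhox U_1\ad + (1-p_1)\, U_1 \frac{I_d}{d} U_1\ad = p_1\, U_1 \rhox U_1\ad + (1-p_1)\frac{I_d}{d},
\end{equation*}
which is exactly~\eqref{eqn:lemma-for-global-depo-proof} specialized to $m=1$ since $\prod_{i=1}^{1} p_i = p_1$.

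For the inductive step I would assume the formula holds for $m$ and compute $\tilde{\rho}_{\vec{x}}^{(m+1)} = U_{m+1}\circ \E_{p_{m+1}}(\tilde{\rho}_{\vec{x}}^{(m)})$. Substituting the inductive hypothesis and applying $\E_{p_{m+1}}$ by linearity yields a coherent term carrying an extra factor $p_{m+1}$ and two maximally mixed contributions, one from scaling the old identity term by $p_{m+1}$ and one fresh from the new depolarizing step. Pushing $U_{m+1}$ through, the coherent term gains one more unitary on each side — giving $\prod_{i=1}^{m+1} p_i\, U_{m+1}\cdots U_1 \rhox U_1\ad \cdots U_{m+1}\ad$ — while both identity terms are left invariant by the conjugation. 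Collecting the isotropic part gives the coefficient $p_{m+1}\bigl(1-\prod_{i=1}^{m} p_i\bigr) + (1-p_{m+1})$.

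The final step is the algebraic simplification of that coefficient,
\begin{equation*}
 p_{m+1}\Bigl(1-\prod_{i=1}^{m} p_i\Bigr) + (1-p_{m+1}) = 1 - p_{m+1}\prod_{i=1}^{m} p_i = 1 - \prod_{i=1}^{m+1} p_i,
\end{equation*}
which closes the induction. The computation is entirely routine; the only thing requiring care is the bookkeeping of this coefficient, i.e.\ verifying that the weight of $I_d/d$ telescopes to $1 - \prod_i p_i$ rather than accumulating spurious cross terms — everything else follows from linearity and the invariance of $I_d/d$ under conjugation.
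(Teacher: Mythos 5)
Your proposal is correct and follows essentially the same route as the paper: an induction on the number of layers, with the base case $m=1$ computed directly from~\eqref{eqn:short-notation-depo-channel} and the inductive step relying on linearity and the invariance of $I_d/d$ under unitary conjugation, followed by the same telescoping of the identity-term coefficient to $1-\prod_i p_i$. No gaps.
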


\begin{proof}
    Using the definition of the global depolarizing channel~\eqref{eqn:short-notation-depo-channel}, it is straightforward to evaluate
    \begin{equation*}
        \tilde{\rho}_{\vec{x}}^{(1)} = U_1 \circ \E_{p_1} \circ \rhox = p_1 U_1 \rhox U_1^\dagger + (1 - p_1) I_d / d . 
    \end{equation*}
    Thus~\eqref{eqn:lemma-for-global-depo-proof} is true for $m = 1$. Assume~\eqref{eqn:lemma-for-global-depo-proof} holds for $m = k$. Then, for $k + 1$ we have
    \begin{align*}
        \tilde{\rho}_{\vec{x}}^{(k + 1)} &= U_{k + 1} \circ \E_{p_{k + 1}} \circ \tilde{\rho}_{\vec{x}}^{(k)} \\
        &= p_{k + 1} U_{k + 1} \tilde{\rho}_{\vec{x}}^{(k)} U_{k + 1}^\dagger + (1 - p_{k + 1}) I_d / d .
    \end{align*}
    The last line can be simplified to arrive at
    \begin{align}
        \tilde{\rho}_{\vec{x}}^{(k + 1)} = 
        \prod_{i = 1}^{k + 1} p_i U_{k + 1} \cdots U_1 \rhox U_1^\dagger \cdots U_{k + 1}^\dagger \nonumber \\ 
        + \left( 1 - \prod_{i = 1}^{k + 1} p_i \right) I / d , \nonumber
    \end{align}
    which completes the proof.
\end{proof}

We can now prove Theorem~\ref{thm:robustness_to_depolarizing_on_unitary_ansatz_multiple_qubits} as follows. Let $l$ denote the total number of alternating unitary gates with depolarizing noise in the classifier circuit so that~\eqref{eqn:lemma-for-global-depo-proof} can be written
\begin{equation} \label{eqn:final-state-global-depo-circuit}
    \tilde{\rho}_{\vec{x}}^{(l)} = \bar{p} \rhotildex + (1 - \bar{p}) I / d.
\end{equation}
Here, we have let $\bar{p} := \prod_{i = 1}^{l} p_i$ and noted that $U_l \cdots U_1 \rhox U_1^\dagger \cdots U_l^\dagger = \rhotildex$ is the final state of the noiseless circuit before measuring. Eqn.~\eqref{eqn:final-state-global-depo-circuit} is thus the final state of the noisy circuit before measuring. We can now evaluate
\begin{align}
    \Tr [ \Pi_0 \tilde{\rho}_{\vec{x}}^{(l)} ] = \bar{p} \Tr[ \Pi_0 \rhotildex ] + (1 - \bar{p}) / 2
\end{align}
where we have used $\Tr[ \Pi_0 I_d ] = 2^{d - 1}$. 
To prove robustness, suppose that $\yhat [ \rhotildex ] = 0$ so that $\Tr[ \Pi_0 \rhotildex ] \ge 1/2$. Then,
\begin{equation}
    \Tr [ \Pi_0 \tilde{\rho}_{\vec{x}}^{(l)} ] \ge \bar{p} / 2 + (1 - \bar{p}) / 2 = 1 / 2 
\end{equation}
so that $\yhat [ \tilde{\rho}_{\vec{x}}^{(l)} ] = 0$. Similarly for the case $\yhat[ \rhotildex ] = 1$, which completes the proof of Theorem~\ref{thm:robustness_to_depolarizing_on_unitary_ansatz_multiple_qubits}.

% \begin{theorem} \label{thm:robustness_to_depolarizing_on_unitary_ansatz_multiple_qubits}
% The classifier is completely robust to a global depolarizing channel applied \textit{anywhere} in the circuit. That is:
% %
% \begin{equation}
%         \yhat [  \E^{\text{depo}, n , k+1}_{p_{k+1}}  \circ \tilde\UC_k \cdots \circ \E^{\text{depo},n, 2}_{p_2}\circ\tilde\UC_1 \circ \E^{\text{depo},1}_{p_1} (\rhoo ) ] = \yhat [ \rhotildex ]
% \end{equation}
% %
% where $\tilde\UC_k \circ \rho$ indicates the conjugation of the state, $\rho$, by the $k^{th}$ (potentially noisy) unitary, $\tilde U_k$, in the circuit, which could be one of the parameterized unitaries, or the encoding unitary. $\rhoo$ is the initial state of the system, \emph{before} the encoding and $\E^{\text{depo}, n , k}_{p_{k}}$ is a global depolarizing channel, given by (\ref{eqn:global-depolarizing-channel}), acting before the $k^{th}$ unitary, with strength $p_k$.
% \end{theorem}

%
% The proof can be found in Appendix \ref{app_ssec:robustness_general_depolarizing_noise_proof} and is similar to the single qubit example. 

% Connection of global depo robustness result to literature
Thus, any encoding strategy exhibits complete robustness to global depolarizing noise. We remark again (see footnote on Page~\pageref{footnote:shots}) that our definition of robustness (Def.~\ref{def:robust-point}) is in terms of probability, meaning that more measurements for sampling may be required to reliably evaluate robustness. With this remark, we note an interesting connection to explain a phenomenon observed in recent literature: In Ref.~\cite{grant_hierarchical_2018}, the authors found that classification accuracy decreased under the presence of depolarizing noise. % TODO: Do they use the same definition of a classifier?
Theorem~\ref{thm:robustness_to_depolarizing_on_unitary_ansatz_multiple_qubits} implies this was a feature exclusively of the finite shot noise used to obtain the predicted label.

% Final results: Factorizable noise
% % =============================================================================
% Factorizable noise
% % =============================================================================

% Transition paragraph/slight concession
While global depolarizing noise admits a clean robustness result for an arbitrary $d$-dimensional circuit, general channels can lead to complicated equations which are best handled numerically. We include several numerical results in Sec.~\ref{sec:numerical_results}, and we discuss avenues for proving more analytical results with certain classes of channels in future work in Sec.~\ref{sec:conclusions}. To close the present discussion, we highlight the special case of multi-qubit classifiers with ``factorizable noise,'' for which it is straightforward to apply previous results proved in this section.

In particular, suppose that $\E : \mathcal{D}_n \rightarrow \mathcal{D}_n$ is a noise channel which factorizes into single qubit channels, e.g.
\begin{equation} \label{eqn:factorizable-noise-channel-into-single-qubits-definition}
    \E = \E_1 \otimes \cdots \otimes \E_n
\end{equation}
where $\E_i : \mathcal{D}_2 \rightarrow \mathcal{D}_2$ for $i \in [n]$. Without loss of generality, let the classification qubit be the $n$th qubit. Then, if the processed state of the classification qubit is robust to the channel $\E_n$, the encoded state will be robust to the entire channel $\E$ in~\eqref{eqn:factorizable-noise-channel-into-single-qubits-definition}. This result, which is precisely stated and proved in Appendix~\ref{app:factorizable-noise}, also holds for general $n - 1$ qubit channels which act on every qubit except the classification qubit. Although this is relatively straightforward, the result could be used as a building block to better understand more intricate robustness properties of quantum classifiers.

% Mention measurement noise results in appendix (?)
% In \appref{app_ssec:meas_noise} and \appref{app_ssec:robustness_factorizable_noise_proof}, we also consider measurement and factorizable noise (i.e.\@ any noise which has not causal effect on the measurement result of the classification qubit), which both admit robustness results, and apply to multi-qubit classifiers.

% =============================================================================
% =============================================================================
\subsection{Existence of Robust Encodings} \label{subsec:existence-of-robust-encodings}
% =============================================================================
% =============================================================================

% TODO (Requires moderate work): Figure out a condition in which the robust set is strictly a superset of the set of fixed points

% TODO (Ryan): Read carefully and see how this differes from "Characterization of robust points" vs characterization of fixed points subsection

In Sec.~\ref{ssec:characterize_robust_points}, we considered example channels and characterized their robust points and fixed points. We found that the set of fixed points $\mathcal{F}(\mathcal{E})$ is always a subset of the robust set $\mathcal{R}(\mathcal{E}, E, \yhat)$ in~\eqref{eqn:fixed-points-are-subset-of-robust-points}. Here, we use this connection to show that there always exists a robust encoding for a trace-preserving channel $\mathcal{E}$ (regardless of optimal unitary parameters which may appear in the robustness condition, e.g.~\eqref{eqn:amp_damp_robustness_condition_expanded_unitary}).

\begin{theorem}(Existence of Fixed Points~\cite{schauder_fixpunktsatz_1930, nielsen_quantum_2010}) \label{thm:schauder_fixed_point}
Any trace-preserving quantum operation has at least one density operator fixed point~\eqref{eqn:fixed-point-definition}.
\end{theorem}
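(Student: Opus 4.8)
The plan is to read this statement as a direct application of a topological fixed-point theorem to the map $\mathcal{E}$ acting on the set of density operators. The essential observation is that $\mathcal{D}_n$ is exactly the kind of set — nonempty, convex, and compact — on which Brouwer's theorem (or its Banach-space generalization due to Schauder) guarantees a fixed point for every continuous self-map, and that a trace-preserving quantum operation is precisely such a self-map. So the work is entirely in verifying the hypotheses.

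First I would establish the geometric properties of the domain. The set $\mathcal{D}_n \subset \mathbb{C}^{2^n \times 2^n}$ sits inside the real vector space of Hermitian matrices, which is finite-dimensional. It is convex, since a convex combination $\lambda \rho_1 + (1-\lambda)\rho_2$ of positive semidefinite unit-trace operators is again positive semidefinite with unit trace. It is closed (the conditions $\rho = \rho^\dagger$, $\rho \succeq 0$, and $\Tr \rho = 1$ are each closed) and bounded (e.g.\ $\|\rho\|_{\HS} \le 1$), hence compact by Heine--Borel, and it is clearly nonempty. Second I would check that $\mathcal{E}$ restricts to a continuous self-map of $\mathcal{D}_n$: trace preservation, $\Tr[\mathcal{E}(\rho)] = \Tr[\rho]$, keeps the trace equal to $1$; (complete) positivity ensures $\mathcal{E}(\rho) \succeq 0$; together these give $\mathcal{E}(\mathcal{D}_n) \subseteq \mathcal{D}_n$, and continuity is automatic since $\mathcal{E}$ is linear on a finite-dimensional space. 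With all hypotheses in place, Brouwer's theorem yields a $\rho \in \mathcal{D}_n$ with $\mathcal{E}(\rho) = \rho$; invoking Schauder instead covers the separable infinite-dimensional case verbatim.

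The step I expect to be the only genuine subtlety is committing to the convex--compact fixed-point argument rather than a purely linear-algebraic shortcut. Because $\mathcal{E}$ is linear, one is tempted to argue spectrally: the Hilbert--Schmidt adjoint is unital, $\mathcal{E}^\dagger(I) = I$, so $1$ is an eigenvalue of $\mathcal{E}^\dagger$ and hence of $\mathcal{E}$, producing a nonzero operator $A$ with $\mathcal{E}(A) = A$. The gap is that this $A$ need not be positive semidefinite or have nonzero (let alone unit) trace, so it is not obviously a state. The Brouwer/Schauder route is exactly what upgrades ``eigenvalue $1$ exists'' to ``a genuine density operator is fixed,'' which is the content of the theorem. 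In the infinite-dimensional setting, this is also the place where the compactness required by Schauder's theorem must be secured in the appropriate (e.g.\ trace-norm) topology, which is why the stronger theorem is cited rather than Brouwer alone.
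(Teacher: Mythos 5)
Your proposal is correct and is essentially the argument the paper invokes by citation: the paper gives no proof of its own, simply referencing Schauder's fixed-point theorem and Nielsen--Chuang, where the standard proof is exactly your convex--compact self-map argument ($\mathcal{D}_n$ nonempty, convex, compact; $\mathcal{E}$ a continuous self-map by trace preservation and positivity; apply Brouwer/Schauder). Your added remark on why the spectral shortcut via $\mathcal{E}^\dagger(I) = I$ fails to produce a \emph{state} is a correct and worthwhile clarification, not a gap.
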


Using this and the observation that $\mathcal{F}(\mathcal{E}) \subset \mathcal{R}(\mathcal{E}, E, \yhat)$, we have the following existence theorem for robust encodings.
\begin{theorem}
\label{thm:robust_encodings_existence}
    Given a data point $\vec{x} \in \mathcal{X}$, a trace-preserving quantum channel $\mathcal{E}$, and decision rule $\yhat$ defined in~\eqref{eqn:decision_rule}, there exists an encoding $E$ such that
    \begin{equation} \label{eqn:existenstence-robust-encoding}
        \yhat [ \mathcal{E} ( E(\vec{x}) ) ] = \yhat[ E(\vec{x}) ] .
    \end{equation}
% Given a quantum classifier over a labeled dataset, $\{\vec{x}_i, y_i\}_{i=1}^M$, and a quantum error channel, $\E$, there always exists an data encoding, $\rhotildex$, which is robust on prediction to the noise channel, i.e.\@ the predictions are identical after the application of the channel, $\yhat[\rhotildexi] = \yhat[\E\left(\rhotildexi\right)], \forall i$.
%
\end{theorem}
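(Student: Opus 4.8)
The plan is to leverage the inclusion $\mathcal{F}(\mathcal{E}) \subseteq \mathcal{R}(\mathcal{E}, E, \yhat)$ recorded in \eqref{eqn:fixed-points-are-subset-of-robust-points} together with the guaranteed existence of a fixed point. First I would invoke \theref{thm:schauder_fixed_point} to produce a density operator $\sigma \in \mathcal{D}_n$ with $\mathcal{E}(\sigma) = \sigma$. The hypotheses are readily checked: a trace-preserving (positive) operation maps the set of density operators into itself, this set is compact and convex, and $\mathcal{E}$ is linear and hence continuous, so the cited fixed-point theorem applies (indeed, in finite dimension Brouwer would already suffice).

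Next I would define the constant encoding $E(\vec{x}) := \sigma$ for every $\vec{x} \in \mathcal{X}$. This is a bona fide function $E : \mathcal{X} \to \mathcal{D}_n$ and so qualifies as an encoding in the sense of \eqref{eqn:encoding-formal}. With this choice the channel acts as the identity on the encoded state,
\begin{equation*}
    \mathcal{E}(E(\vec{x})) = \mathcal{E}(\sigma) = \sigma = E(\vec{x}),
\end{equation*}
so the noisy and noiseless states coincide exactly. Feeding two identical states into the decision rule necessarily returns the same label, giving $\yhat[\mathcal{E}(E(\vec{x}))] = \yhat[E(\vec{x})]$, which is precisely the desired \eqref{eqn:existenstence-robust-encoding}.

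The hard part here is not any calculation — once a fixed point is in hand the robustness condition is immediate, since a fixed point is automatically a robust point. Essentially all of the content is carried by \theref{thm:schauder_fixed_point}. The genuine subtlety, which I would flag explicitly, is that this argument yields a \emph{degenerate} encoding: a constant map collapses every feature vector to the same state, so it is certainly not bijective and encodes no information about $\vec{x}$, rendering it useless for learning. Thus the theorem guarantees the existence of a robust encoding but not a useful one, which is exactly the tension the authors anticipate when they note the resulting tradeoff between learnability and robustness. I would close by emphasizing this point rather than attempting to strengthen the construction, since producing a \emph{nontrivial} robust encoding is a genuinely harder problem that depends on the specific channel and, as \eqref{eqn:amp_damp_robustness_condition_expanded_unitary} shows, potentially on the optimal unitary.
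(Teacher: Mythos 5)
Your proof is correct and follows essentially the same route as the paper: invoke the fixed-point existence theorem and use the inclusion $\mathcal{F}(\mathcal{E}) \subseteq \mathcal{R}(\mathcal{E}, E, \yhat)$, with the constant encoding $E(\vec{x}) = \sigma$ making the argument explicit (a construction the paper itself spells out immediately afterward when discussing completely robust encodings). Your closing remark about the degeneracy of the constant map and the learnability--robustness tradeoff matches the paper's own discussion.
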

% \begin{proof}
% %
% The proof follows from \theref{thm:schauder_fixed_point}; it is sufficient to choose $\rhotildexi \in \mathcal{F}(\mathcal{E}) = \{\sigma_j\}_j \forall i$, where $\mathcal{F}(\mathcal{E})$ is the set of fixed points of the quantum channel, $\E$, i.e.\@ $\E(\sigma_j) = \sigma_j \forall j$. Since the encoded state, $E(\vec{x}) = \rho^{\vec{x}}_j$, is unaffected by the noise channel $\E(\rho^{\vec{x}}_j) = \rho^{\vec{x}}_j $, it follows the predicated label must be the same, $\yhat[\rho^{\vec{x}}_j] =\yhat[\E\left(\rho^{\vec{x}}_j\right)]$ (assuming we set the parameters in $U(\vec{\alpha})$ to $\vec{\alpha} = \vec{0}$). With this encoding strategy, we get: $\mathcal{R}(\E, E, \yhat) = \mathcal{X}$.
% %
% \end{proof}

We note that the optimal unitary of the QNN affects the ``location'' of the robust set, but not the existence. %In particular, Eqn.~\eqref{eqn:existenstence-robust-encoding} is invariant under conjugation by any unitary $U$ --- i.e., if~\eqref{eqn:existenstence-robust-encoding} holds, then
% %
% \begin{equation} \label{eqn:existenstence-robust-encoding-unitary-invariance}
%         \yhat [ U \mathcal{E} ( E(\vec{x}) ) U^\dagger ] = \yhat[ U E(\vec{x}) U^\dagger] .
% \end{equation}
% %
% TODO: State the above remark in a way that makes sense

% What about a completely robust encoding? Not just for a single point?
We emphasize that Theorem~\ref{thm:robust_encodings_existence} is with respect to a single data point $\vec{x} \in \mathcal{X}$. As mentioned in Sec.~\ref{subsec:robustness-definition}, it is more relevant for applications to consider the training set~\eqref{eqn:labeled-data-for-classifier} or entire set $\mathcal{X}$. 
% TODO: Future work: Characterize the set of channels such that a completely robust encoding exists
Appropriately, one can ask whether a completely robust encoding (Def.~\ref{def:complete_noise_robustness}) exists for a given channel $\E$. This answer also turns out to be yes, but in a potentially trivial way.

% Example of a channel with a unique fixed point
In particular, suppose that there is a unique fixed point $\sigma$ of the channel $\E$, e.g. depolarizing noise or amplitude damping noise with $p = 1$. Then, consider the encoding
\begin{equation}
    \E (\vec{x}) = \sigma 
\end{equation}
for all $\vec{x} \in \mathcal{X}$. From a robustness perspective, this has the desirable property of complete robustness. From a machine learning perspective, however, this has very few desirable properties: all training data is mapped to the same point so that it is impossible to successfully train a classifier%
\footnote{In principle, one can achieve an encoding which is completely robust and able to correctly classify all data if there are at least two orthogonal fixed points in $\mathcal{F}(\E)$. For example, if $\E$ the bit flip channel, the encoding $\vec{x}_i \mapsto \ket{0} + (-1)^{y_i} \ket{1}$ is both completely robust and completely learnable (the optimal unitary is a Hadamard gate), but assumes the true labels $y_i$ are known.}
.

% Tradeoff between learnability and robustness
The previous example, while extreme, serves to illustrate the tradeoff between learnability and robustness. By ``learnability,'' we mean the ability of the classifier to predict correct labels (without regard to noise), and by robustness we mean that the prediction is the same with or without noise (without regard to correctness). The two links are schematically connected below:
\begin{equation} \label{eqn:learnability-robustness-links}
    y[\vec{x}] \ \ \xleftrightarrow{\text{Learnability}} \ \ \yhat [ \rhotildex ] \ \
    \xleftrightarrow{\text{Robustness}} \ \ \yhat [ \E( \rhotildex ) ] 
\end{equation}
The tradeoff we observe is that more learnability comes at the price of less robustness, and \textit{vice versa}. See Sec.~\ref{ssec:encoding_learn_alg} for a discussion. % Cop out way for now
\subsection{Upper Bounds on Partial Robustness} \label{ssec:fidelity_bounds}
% =============================================================================

% Intro paragraph
In this section, we consider a slightly modified binary quantum classifier which embeds the cost function in the circuit and computes the cost by measuring expectation values. In contrast to the classifier in Def.~\ref{def:binary-quantum-classifier}, the output of this circuit is thus the cost $C$ instead of an individual predicted label $\yhat$. Correspondingly, the input to the circuit is all data points in the training set~\eqref{eqn:labeled-data-for-classifier} (using a ``mixed state encoding'' discussed below) instead of a single data point $\vec{x}$. Such a classifier was recently introduced
by Cao \textit{et al.} in Ref.~\cite{cao_cost_2019} and presents an interesting framework to analyze in the context of noise, which we do in this Section.

% Why we suddenly start talking about fidelity instead of robustness now
Since the output of the circuit is the cost $C$ for all points instead of a predicted label $\yhat$ for an individual point, the definition of a single robust point does not immediately apply to this classifier. However, it is still natural to compare the noisy and noiseless outcomes --- in the same spirit as robustness --- by comparing the difference between the output cost $C_\E$ when some noise channel $\E$ occurs in the circuit to the output cost $C$ from an ideal (noiseless) circuit. In fact, we show this quantity provides an upper bound on the size of the partially robust set and therefore can be used as a proxy to assess robustness of different encodings.

% We use the indicator cost function
To do so, we consider the indicator cost function
\begin{equation} \label{eqn:indicator_cost_over_dataset}
    C := \frac{1}{M} \sum_{i = 1}^{M}  \mathcal{I} (\hat{y}_i(\rhotildexi) \neq y_i) .
\end{equation}
Here, the indicator $\mathcal{I}$ evaluates to the truth value of its argument --- i.e., $\mathcal{I}(\yhat_i \neq  y_i) = 0$ if $y_i = \yhat_i$, else $1$.
We note again that $C = C(\vec{\alpha})$ is parameterized by some angles $\vec{\alpha}$ but we omit $\vec{\alpha}$ for brevity.

% The indicator cost function relates naturally to the robust set
The indicator cost function~\eqref{eqn:indicator_cost_over_dataset} relates naturally to the robust set in Def.~\ref{def:robust-set}. Even though we cannot say individually which points are robust, a decrease in the cost due to some noise channel implies that some points were misclassified (assuming we had perfect classification, in the absence of the channel). Hence, how much the noisy cost function decreases is a useful proxy of robustness. We quantify this as
\begin{equation} \label{eqn:change-in-cost}
    \Delta_\E C := | C_\E - C |.
\end{equation}
%
% TODO: Add closing sentence

% Mixed state encoding used by Cao et al
In the encoding strategy of Cao \textit{et al}~\cite{cao_cost_2019}, each feature vector $\vec{x}$ is encoded along with its true label $y$ on an ancilla qubit as per
\begin{equation} \label{eqn:pure-state-of-mixed-state-encoding-for-embedded-cost-classifier}
    \sigmax = E(\vec{x}) \otimes |y\>\<y| = \rhox \otimes |y\>\<y| .
\end{equation}
where $E$ is an encoding function. Then, the entire dataset~\eqref{eqn:labeled-data-for-classifier} is prepared in the mixed state
\begin{equation}
    \sigma = \frac{1}{M} \sum_{i = 1}^{M} \rhoxi \otimes |y_i\>\<y_i| .
\end{equation}
Such a mixed state encoding may not be reliably preparable on NISQ computers, but in principle could be prepared using purification or by probabilistically preparing one of the pure states~\eqref{eqn:pure-state-of-mixed-state-encoding-for-embedded-cost-classifier} --- which could be amenable to NISQ computers depending on the encoding $E$. The QNN acts only on the ``data subsystem'' so that the evolved state before measurement is
\begin{equation}\label{eqn:mixed_state_over_data}
    \tilde{\sigma} = \frac{1}{M} \sum_{i = 1}^{M} \rhotildexi \otimes |y_i\>\<y_i| 
\end{equation}
%

% Enter noise
We now consider the application of a noisy channel $\E$ so that
\begin{equation} \label{eqn:mixed_state_over_data_noisy_with_label}
    \E (\tilde{\sigma} ) = \frac{1}{M} \sum_{i=i}^M \E \left( \rhotildexi \otimes \ketbra{y_i}{y_i} \right) .
\end{equation}
While $\E$ could most generally act on the entire system, to match with previous analyses we assume that
\begin{equation} \label{eqn:mixed_state_over_data_noisy}
    \E (\tilde{\sigma} ) = \frac{1}{M} \sum_{i=i}^M \E \left( \rhotildexi \right) \otimes \ketbra{y_i}{y_i} 
\end{equation}
for simplicity. That is, we assume the true labels are invariant with respect to the noise channel%
\footnote{
Interestingly, the case where the true label is corrupted by noise can be linked to a commonly studied case in classical supervised learning --- i.e., ``learning from noisy examples''~\cite{angluin_learning_1988}. %For example, a bit flip channel acting on the label qubit with strength $p_X$ corresponds to the classifier receiving an incorrect classical label with probability $p_X$. This scenario of noisy data has been studied in the probably approximately correct (PAC) learning framework and found that it is possible to learn concepts under this form of noise.
}%
.

% How to evaluate the cost
The cost\footnote{Note this actually gives a slightly more general cost function than the indicator we use here~\eqref{eqn:indicator_cost_over_dataset}, but can be related by a simple transformation. See Ref.~\cite{cao_cost_2019}.} $C$ can be evaluated by measuring the expectation of
\begin{equation} \label{eqn:observable_cost}
    D := I^{\otimes n-1} \otimes Z_c \otimes Z_l 
\end{equation}
where $c$ and $l$ denote classification and label qubits, respectively. (See Ref.~\cite{cao_cost_2019} for more details.) That is, the (noiseless) cost is given by
\begin{equation} \label{eqn:quantum_cost}
    C = \Tr(D \tilde{\sigma}) ,
\end{equation}
and the noisy cost is identical with $\tilde{\sigma}$ replaced by $\E (\tilde{\sigma})$. 

% Now we relate change in cost to fidelity of the states
We can now evaluate the change in cost due to noise~\eqref{eqn:change-in-cost} as (following \cite{gentini_noise-assisted_2019})
%
% TODO!!!!!!!!!!!!!!!!!!!!!!!!!!!!!!!!!!!!!!!!!!!!!!!!!!!!!!!!!!!!!!!!!!!!!!!!!
% TODO: Make this into a formal theorem which occurs at the start of the section and is proved here.
\begin{align}
    \Delta_\E C &:= \left| C_\E - C \right| \nonumber \\[1.0ex]
    &= \left|  \Tr [ D ( \E (\tilde{\sigma} ) - \tilde{\sigma}) ] \right| \nonumber \\[1.0ex]
    &\le ||D||_\infty || \E (\tilde{\sigma}) - \tilde{\sigma} ||_1 \nonumber \\[1.0ex]
    &\le 2 \sqrt{ 1 - F(\E (\tilde{\sigma}), \tilde{\sigma}) } . \label{eqn:fidelity_bound_mixed_state}
\end{align}
Here, $F$ is the fidelity of states $\tau, \omega \in \mathcal{D}_n$ defined by
\begin{equation}\label{eqn:fidelity_defintion}
    F(\tau, \omega) := \Tr \left[ \sqrt{ \sqrt{ \tau } \omega \sqrt {\tau } } \right] ^2 .
\end{equation}
The third line in this derivation follows from H\"{o}lders inequality and the last line from the Fuchs-van de Graaf inequality~\cite{gentini_noise-assisted_2019, fuchs_cryptographic_1999}. We also used the fact that $||D||_\infty := \max_j | \lambda_j(D) | = 1$.

% Alternate bound based on trace distance instead of fidelity
We can also derive an alternative inequality based on the average trace distance between the individual encoded states, namely
\begin{equation} \label{eqn:fidelity_bound_average}
    \Delta_\E C \leq \frac{2}{M} \sum_{i=1}^M \sqrt{1-F(\E(\rhotildexi), \rhotildexi)} .
\end{equation}
A proof is included in Appendix~\ref{app:further_proofs}.

% Significance of the above result
Due to our choice of cost function~\eqref{eqn:indicator_cost_over_dataset}, the quantity $\Delta_\E C$ corresponds exactly to the $\delta$-robustness of the model in \defref{def:partial_noise_robustness} since a classification difference in a single point due to noise causes the error to increase by $1/M$. In particular, the quantity $\Delta_\E C$ is exactly the fraction of robust points in the dataset~\eqref{eqn:delta_robust_encoding_condition}. Specifically, we have
\begin{equation}
    |\mathcal{R}(\mathcal{E}, E, \hat{y})| = M \Delta_\E C 
\end{equation}

Thus, Eqn.~\eqref{eqn:fidelity_bound_mixed_state} provides an upper bound on how large the (partially) robust set can be, namely
\begin{equation} \label{eqn:partially_robust_set_bound}
     |\mathcal{R}(\mathcal{E}, E, \hat{y})| \leq 2 \sum_{i=1}^M \sqrt{1-F(\E(\rhotildexi), \rhotildexi)} .
\end{equation}

In Sec.~\ref{ssec:fidelity_analysis_exp}, we use these inequalities to bound the size of the robust set for several different encodings on an example implementation.

% =============================================================================
% =============================================================================
% =============================================================================
% =============================================================================
\section{Numerical Results} \label{sec:numerical_results}
% =============================================================================
% =============================================================================
% =============================================================================
% =============================================================================

% Summary of numerical results
In this Section, we present numerical evidence to reinforce the theoretical results proved in Sec.~\ref{sec:noise_robustness} and build on the discussions. In Sec.~\ref{ssec:classes-decision-boundaries-numerical}, we show classes of learnable decision boundaries for example encodings, building on the previous discussion in Sec.~\ref{ssec:classes_learnable_decision_boundaries}. We then plot the robust sets for partially robust encodings in Sec.~\ref{subsec:robust-sets-for-partially-robust-encodings} to visualize the differences that arise from different encodings. We also generalize some encodings defined in Sec.~\ref{sec:data-encodings} to include hyperparameters and study the effects. This leads us to attempt to train over these hyperparameters, and we present an ``encoding learning algorithm'' in Sec.~\ref{ssec:encoding_learn_alg} to perform this task. Finally, in Sec.~\ref{ssec:fidelity_analysis_exp} we compute upper bounds on the size of robust sets based on \secref{ssec:fidelity_bounds}. We note that we include code to reproduce all results in this Section at Ref.~\cite{coyle_noiserobustclassifier_2020}.
For all numerical results in the following sections related to single qubit classifier, we use three simple datasets; the first is the ``moons'' dataset from {\fontfamily{cmtt}\selectfont scikit-learn}, \cite{pedregosa_scikit-learn_2011}, and two we denote ``vertical'' and ``diagonal''. Representative examples can be found in \appref{app:numerical_results}.

% =============================================================================
% =============================================================================
\subsection{Decision Boundaries and Implementations} \label{ssec:classes-decision-boundaries-numerical}
% =============================================================================
% =============================================================================

In Sec.~\ref{sec:data-encodings}, we defined an encoding~\eqref{eqn:encoding-formal} and gave several examples. In Sec.~\ref{ssec:classes_learnable_decision_boundaries}, we showed that a classifier with the wavefunction encoding~\eqref{eqn:wavefunction-encoding-density-matrix-onequbit} can learn decision boundaries that are straight lines, while the same classifier with the dense angle encoding~\eqref{eqn:dae-single-qubit} can learn sinusoidal decision boundaries. We show this in Fig.~\ref{fig:random_decision_boundaries_encodings}, and we build on this discussion in the remainder of this section.

\begin{figure}
    \includegraphics[width=\columnwidth]{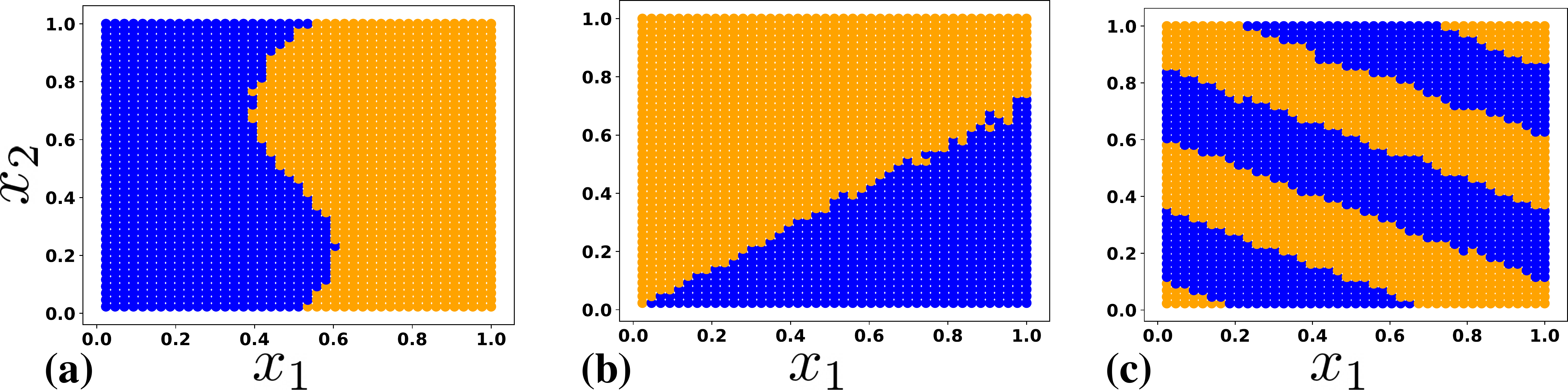}
    \caption{(Color online.) Examples of learnable decision boundaries for a single qubit classifier with the (a) dense angle encoding, (b) wavefunction encoding, and (c) superdense angle encoding where $\theta = \pi$ and $\phi = 2 \pi$.
    Colors denote class labels. The QNN used here consisted of an arbitrary single qubit rotation (see Fig.~\ref{fig:specific_classifier_circuits}) with random parameters.}
    \label{fig:random_decision_boundaries_encodings}
\end{figure}

Figure~\ref{fig:random_decision_boundaries_encodings}(c) shows a ``striped'' decision boundary which was learned by a ``superdense'' angle encoding, defined below. The superdense encoding introduces a linear combination of features into the qubit (angle) encoding~\eqref{eqn:qubit_encoding_grant}.
\begin{definition}[Superdense Angle Encoding (SDAE)] \label{def:sdae}
    Let $\vec{x} = [x_1, ..., x_N]^T \in \mathbb{R}^N$ be a feature vector and $\vec{\theta}, \vec{\phi} \in \mathbb{R}^N$ be parameters. Then, the superdense angle encoding maps $\vec{x} \mapsto E(\vec{x})$ given by
    \begin{equation} \label{eqn:sdae-general}
        |\vec{x}\> = \bigotimes_{i=1}^{\ceil*{N / 2}}\cos (\theta_i x_{2i-1} + \phi_i x_{2i}) \ket{0} + \cos (\theta_i x_{2i-1} + \phi_i x_{2i}) \ket{1} .
    \end{equation}
\end{definition}
For a single qubit, the SDAE is
\begin{equation} \label{eqn:superdae_encoding_single_qubit}
    |\vec{x}\> := \cos \left(\theta x_1+ \phi x_2\right) |0\> + \sin \left(\theta x_1+ \phi x_2\right)|1\> .
\end{equation}
% %
We observe that $\phi = 0$ recovers the qubit (angle) encoding~\eqref{eqn:qubit_encoding_grant} considered by \cite{stoudenmire_supervised_2016, schuld_supervised_2018, cao_cost_2019} and~\eqref{eqn:sdae-general} encodes two features per qubit.

We note that Def.~\ref{def:sdae} includes hyperparameters $\mathbf{\theta}$ and $\mathbf{\phi}$. The reason for this will become clear in Sec.~\ref{ssec:encoding_learn_alg} when we consider optimizing over encoding hyperparameters to increase robustness. As previously mentioned, a similar idea was investigated by Lloyd \textit{et al.} in Ref.~\cite{lloyd_quantum_2020} for the purpose of (in our notation) learnability. % We similarly train over these encoding hyperparameters but for the purpose of robustness in Sec.~\ref{ssec:encoding_learn_alg}. 

%
%
%
%
%
%
%
%
%
%
%
% BUNCH OF ENCODING DEFINITIONS
%
%
%
%
%
%
%
%
%
%
%
%
%
%
%

% Finally, we consider the following possibility:

% We choose these three cases because the decision boundaries they produce are quite different and we demonstrate this explicitly in \secref{ssec:classes_learnable_decision_boundaries} for dense angle, generalized wavefunction and superdense angle encodings.

% =============================================================================
% =============================================================================
% \subsection{Two Qubit Classifier} \label{ssec:two_qubit_classifer}
% =============================================================================
% =============================================================================

% Two-qubit Iris introduction example
As a final example to explore the importance of encodings, we consider an example implementation on a standard dataset using different encodings. The dataset we consider is the Iris flower dataset~\cite{fisher_use_1936} in which each flower is described by four features so that $\vec{x} \in \mathbb{R}^4$. The original dataset includes three classes (species of flower) but we only consider two for binary classification. A quantum classifier using the qubit angle encoding~\eqref{eqn:qubit_encoding_grant} and a tree tensor network (TTN) ansatz was considered in Ref.~\cite{grant_hierarchical_2018}. Using this encoding and QNN, the authors were able to successfully classify all points in the dataset. 

% Reduction in number of resources and summary of results
Since the angle encoding maps one feature into one qubit, a total of four qubits was used for the example in~\cite{grant_hierarchical_2018}. Here, we consider encodings which map two features into one qubit and thus require only two qubits. Descriptions of the encodings, QNN ansatze, and overall classification accuracy are shown in Table~\ref{table:two_qubit_iris_class_accuracy}. 
\begin{table}[h!]
    \centering
     \begin{tabular}{| c | c | c | c | c |} 
     \hline
     \textbf{Encoding} & \textbf{QNN} & $\boldsymbol{N_P}$ & $\boldsymbol{n} $ & \textbf{Accuracy} \\ [0.5ex] 
     \hline
     Angle & TTN & 7 & 4 & 100\% \\ 
     \hline
     Dense Angle & $U(4)$ & 12 & 2 & 100\% \\
     \hline
     Wavefunction  & $U(4)$ & 12 & 2 & 100\% \\
     \hline
     Superdense Angle & $U(4)$ & 12 & 2 & 77.6\% \\
      \hline
    \end{tabular}
    \caption{Classification accuracy achieved on the Iris dataset using different encodings and QNNs in the quantum classifier. The top row is from Ref.~\cite{grant_hierarchical_2018} and the remaining rows are from this work. The heading $N_p$ indicates number of parameters in the QNN and $n$ is the number of qubits in the classifier. The accuracy is the overall performance using a train-test ratio of $80\%$ on classes $0$ and $2$. (See Ref.~\cite{coyle_noiserobustclassifier_2020} for full implementation details.)}
    \label{table:two_qubit_iris_class_accuracy}
\end{table}

% Importance of these results
As can be seen, we are able to achieve 100\% accuracy using the wavefunction and dense angle encoding. For the SDAE, the accuracy drops. 
Because the SDAE performs worse than other encodings, this implementation again highlights the importance of encoding on learnability. Additionally, the fact that we can use two qubits instead of four highlights the importance of encodings from a resource perspective. Specifically, NISQ applications with fewer qubits are less error prone due to fewer two-qubit gates, less crosstalk between qubits, and reduced readout errors. The reduction in the number of qubits here due to data encoding parallels, e.g., the reduction in the number of qubits in quantum chemistry applications due to qubit tapering~\cite{bravyi_tapering_2017}. For QML, such a reduction is not always beneficial as the encoding may require a significantly higher depth. For this implementation, however, the dense angle encoding has the same depth as the angle encoding, so the reduction in number of qubits is meaningful.

% Discuss caveats of more unitary parameters, different ansatze, etc.
% For an ansatz, we use the two qubit classifier in \figref{fig:specific_classifier_circuits} with $12$ parameters, which of course is still more than the $7$ parameters used in \cite{grant_hierarchical_2018}. It is likely, however, that our circuit is overparameterized for this task, but we did not attempt to reduce the number of parameters here. These results further indicate the importance of choosing good encodings to achieve good learnability in an efficient way.

% =============================================================================
% =============================================================================
\subsection{Robust Sets for Partially Robust Encodings} \label{subsec:robust-sets-for-partially-robust-encodings}
% =============================================================================
% =============================================================================

% Overview
In Sec.~\ref{subsec:robustness-results}, we proved conditions under which an encoding is robust to a given error channel. Typically in practice, encodings may not completely satisfy such robustness criteria, but will exhibit partial robustness --- i.e., some number of training points will be robust, but not all. In this section, we characterize such robust sets for different partially robust encodings. We emphasize two points that (i) the number of robust points is different for different encodings, and (ii) the ``location'' of robust points is different for different encodings.

% Amplitude damping example
To illustrate the first point, we consider amplitude damping noise --- which has robustness condition~\eqref{eqn:amp_damp_robustness_condition_expanded_unitary} --- for two different encodings: the dense angle encoding and the wavefunction encoding. For each, we use a dataset which consists of 500 points in the unit square separated by a vertical decision boundary at $x_1 = 0.5$.

\begin{figure}
\subfloat[\label{subfig:de-encoding-linear-boundary-no-noise}]{
         \begin{tikzpicture}
  \node (img)  {\includegraphics[width = 0.15\textwidth, height = 0.155\textwidth]{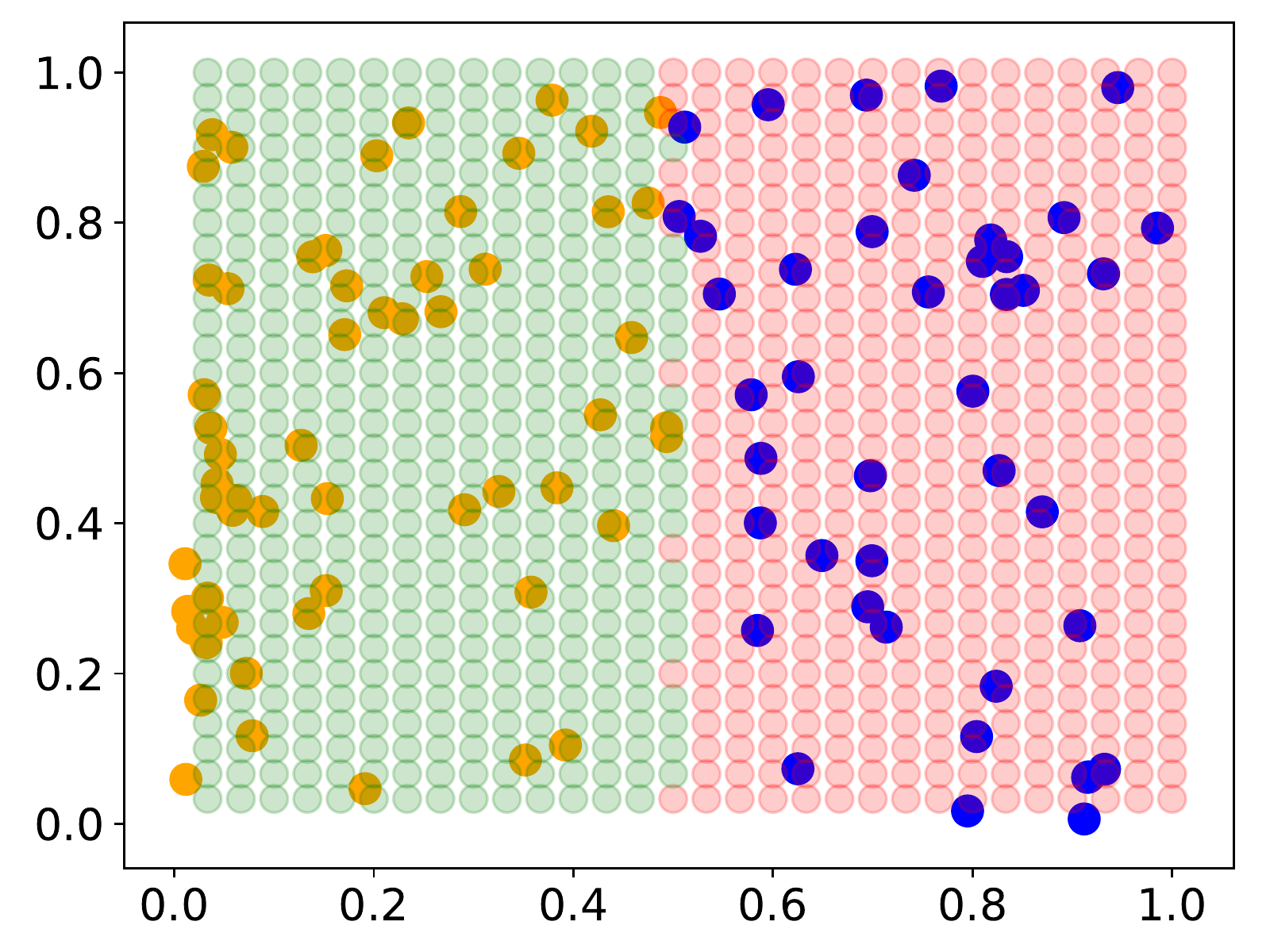}};
   \node[above=of img, node distance=0cm,  anchor=center, yshift=-1cm,font=\color{black}] {\scriptsize{Test Acc.: $98.99\%$}};
  \node[below=of img, node distance=0cm, yshift=1.2cm,font=\color{black}] {$x_1$};
  \node[left=of img, node distance=0cm, rotate=90, anchor=center,yshift=-0.95cm,font=\color{black}] {$x_2$};
\end{tikzpicture}
}\kern-1.5em
\subfloat[ \label{subfig:de-encoding-linear-boundary-amp-damp-noise-0_04ampdamp}]{
         \begin{tikzpicture}
  \node (img)  {\includegraphics[width=0.16\textwidth, height = 0.16\textwidth]{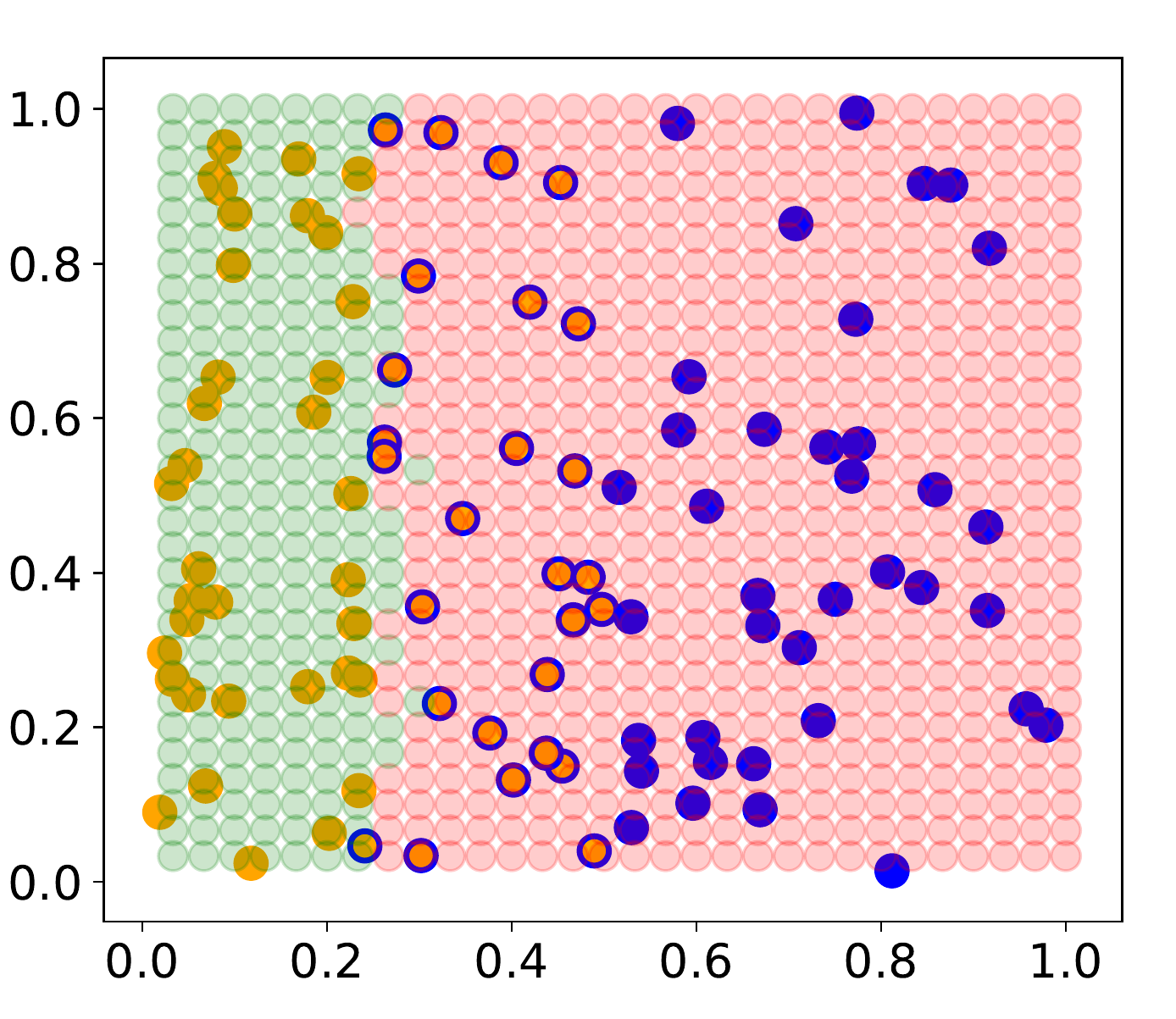}};
  \node[below=of img, node distance=0cm, yshift=1.2cm,font=\color{black}] {$x_1$};
    \node[above=of img, node distance=0cm,  anchor=center,yshift=-1.1cm,font=\color{black}] {\scriptsize{Test Acc.: $77.78\%$}};
%   \node[left=of img, node distance=0cm, rotate=90, anchor=center,yshift=-0.95cm,font=\color{black}] {$x_2$};
\end{tikzpicture}
}\kern-2.2em
\subfloat[\label{subfig:de-encoding-linear-boundary-analytic-misclassification-0_04ampdamp}]{
         \begin{tikzpicture}
  \node (img)  {\includegraphics[width=0.15\textwidth, height = 0.16\textwidth]{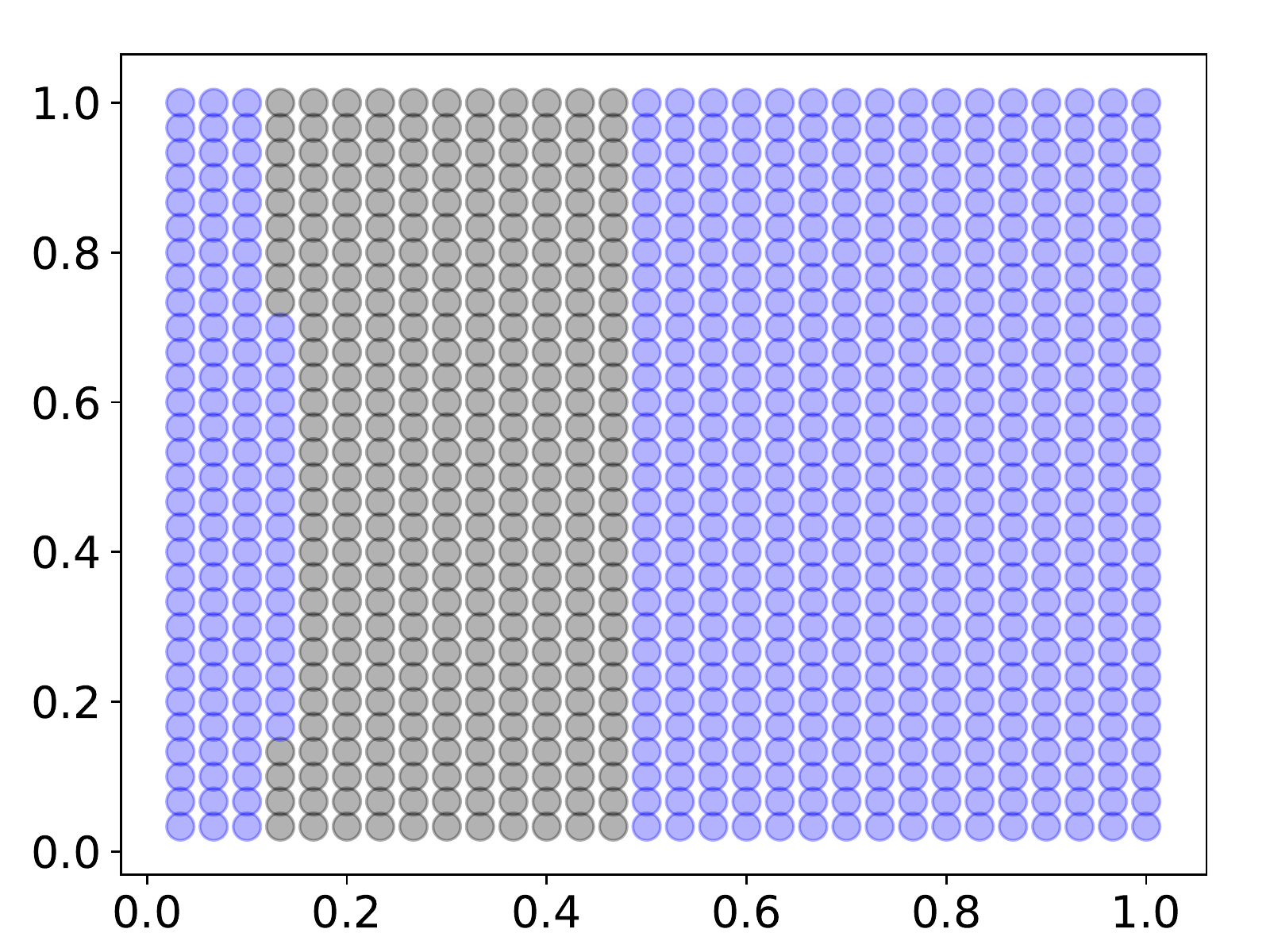}};
  \node[below=of img, node distance=0cm, yshift=1.2cm,font=\color{black}] {$x_1$};
  \node[left=of img, node distance=0cm, rotate=90, anchor=center,yshift=-0.95cm,font=\color{black}] {};
\end{tikzpicture}

}
\caption{(Color online.) Partial robustness for the dense angle encoding. The dataset consists of 500 points in the unit square separated by a vertical decision boundary, and we use a train-test split of $80\%$. Panel (a) shows the classifier test accuracy after optimizing the unitary without noise. Panel (b) shows the reduced accuracy after amplitude damping noise of strength $p = 0.4$ is added. The robust set is at the far left and far right of the unit square, explicitly shown in Panel (c). Here, [\crule[blue]{0.2cm}{0.2cm}] indicates the robust set and [\crule[black]{0.2cm}{0.2cm}] indicates its complement.}
\label{fig:dae_encoding_linear_boundary}
\end{figure}

% Results for amplitude damping
The results for the dense angle encoding are shown in Fig.~\ref{fig:dae_encoding_linear_boundary}. Without noise, the classifier is able to reach an accuracy of $\sim 99\%$ on the training set. When the amplitude damping channel with strength $p = 0.2$ is added, the test accuracy reduces to $\sim 78\%$. This encoding is thus partially robust, and the set of robust points is shown explicitly in Fig.~\ref{fig:dae_encoding_linear_boundary}(c).

% Old discussion (old notation)
% Clearly, the condition for robust classification under this noise is a good prediction of misclassified points indicated by \figref{subfig:de-encoding-linear-boundary-amp-damp-noise-0_04ampdamp}. The misclassified region corresponds roughly to that in \figref{subfig:de-encoding-linear-boundary-analytic-misclassification-0_04ampdamp}, however it seems in practice the robust region is actually slightly larger than theoretically expected. This is likely due to these points still remaining sufficiently close to the south pole of the Bloch sphere after the amplitude damping channel, such that they are still correctly projected to the $\ket{1}$ state. 

% Result for wavefunction encoding
The results for the wavefunction encoding are shown in Fig.~\ref{fig:wf-encoding-linear-boundary}. Here, the classifier is only able to reach $\sim 82\%$ test accuracy without noise. When the same amplitude damping channel with strength $p = 0.4$ is added, the test accuracy drops to $\sim 43\%$. We consider also the effect of amplitude damping noise with strength $p = 0.2$ in Fig.~\ref{fig:wf-encoding-linear-boundary}, for which the classifier achieves test accuracy $\sim 61\%$. The robust set for both channels is also shown in Fig.~\ref{fig:wf-encoding-linear-boundary}.

% Old discussion (old notation)
% In \figref{fig:wf-encoding-linear-boundary}, we examine the effect of amplitude damping noise on the same dataset. In this case, unlike with the dense angle encoding, the classifier is not able to achieve perfect classification, only up to about $81\%$ (\figref{subfig:wf-encoding-linear-boundary-no-noise}). Given this, we consider two scenarios. Firstly, how many points would deviate from the perfect case (given $81\%$ classification accuracy), and secondly, how many misclassifications does the noise result in overall. As expected again, only points previously classified as $y=1$ would be affected by this noise. The effect of amplitude damping noise is more extreme in this case, as can be seen from Figs.~(\ref{subfig:wf-encoding-linear-boundary-amp-damp-noise-0_04ampdamp} \ref{subfig:wf-encoding-linear-boundary-analytic-misclassification-0_04ampdamp}) with the same level of noise as in the dense angle encoding, ($p_{damp}= 0.4$), \textit{all} of the points previously classified as $1$ are now misclassified, indicating that this encoding is less robust. We need to reduce the noise level to $p_{damp} = 0.2$ in order to visualize the form of the robust regions in \figref{subfig:wf-encoding-linear-boundary-amp-damp-noise-0_02ampdamp} and \figref{subfig:wf-encoding-linear-boundary-analytic-misclassification-0_02ampdamp}.

%
\begin{figure}
\subfloat[\label{subfig:wf-encoding-linear-boundary-no-noise}]{
         \begin{tikzpicture}
            \node (img)  {\includegraphics[width = 0.14\textwidth, height = 0.14\textwidth]{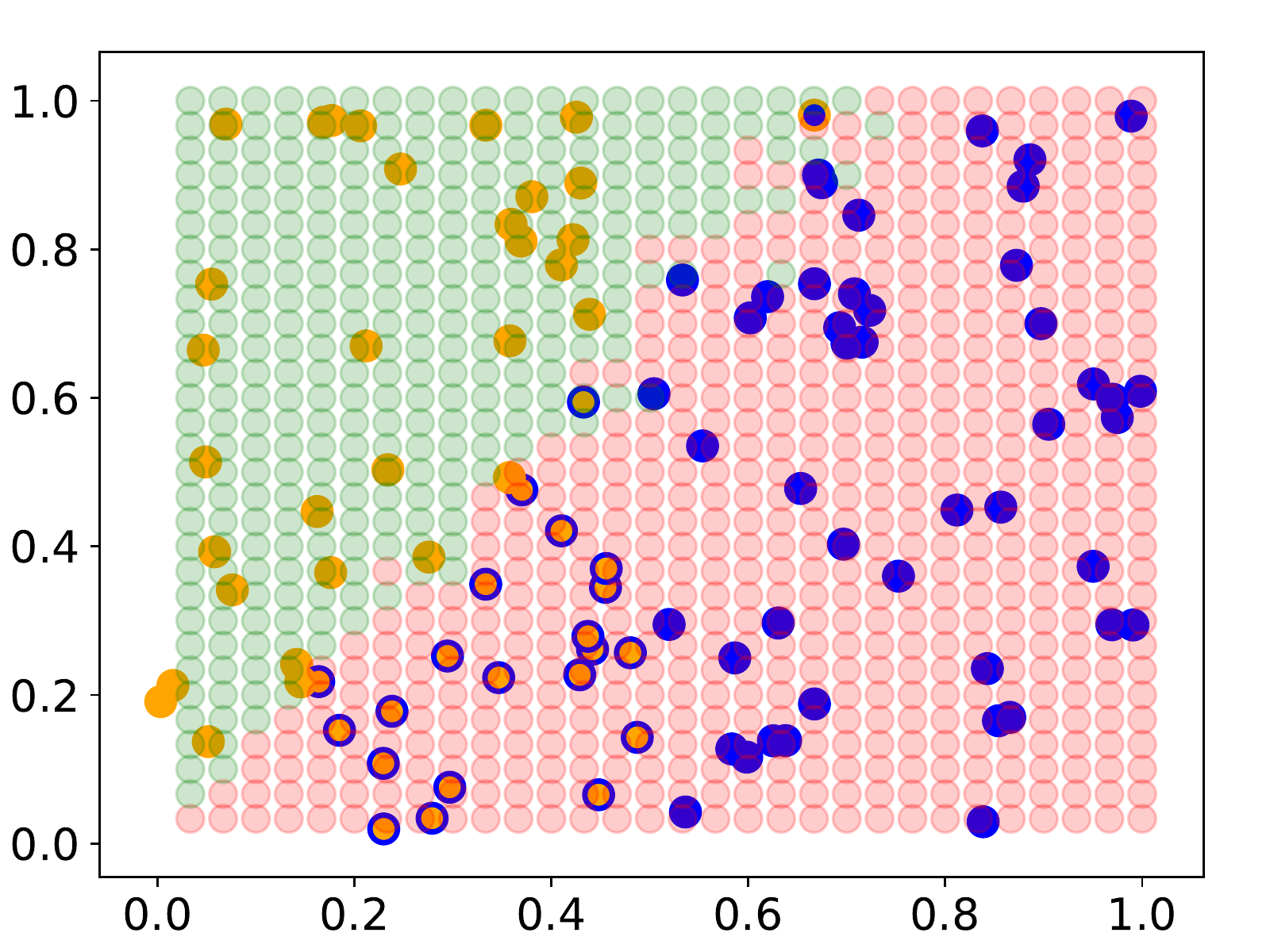}};
            \node[below=of img, node distance=0cm, yshift=1.2cm,font=\color{black}] {$x_1$};
            \node[left=of img, node distance=0cm, rotate=90, anchor=center,yshift=-0.95cm,font=\color{black}] {$x_2$};
            \node[above=of img, node distance=0cm,  anchor=center, yshift=-1.1cm,font=\color{black}] {\scriptsize{Test Acc.: $81.81\%$}};
        \end{tikzpicture}
    }\kern-1.5em
\subfloat[\label{subfig:wf-encoding-linear-boundary-amp-damp-noise-0_04ampdamp}]{
        \begin{tikzpicture}
            \node (img)  {\includegraphics[width = 0.14\textwidth, height = 0.14\textwidth]{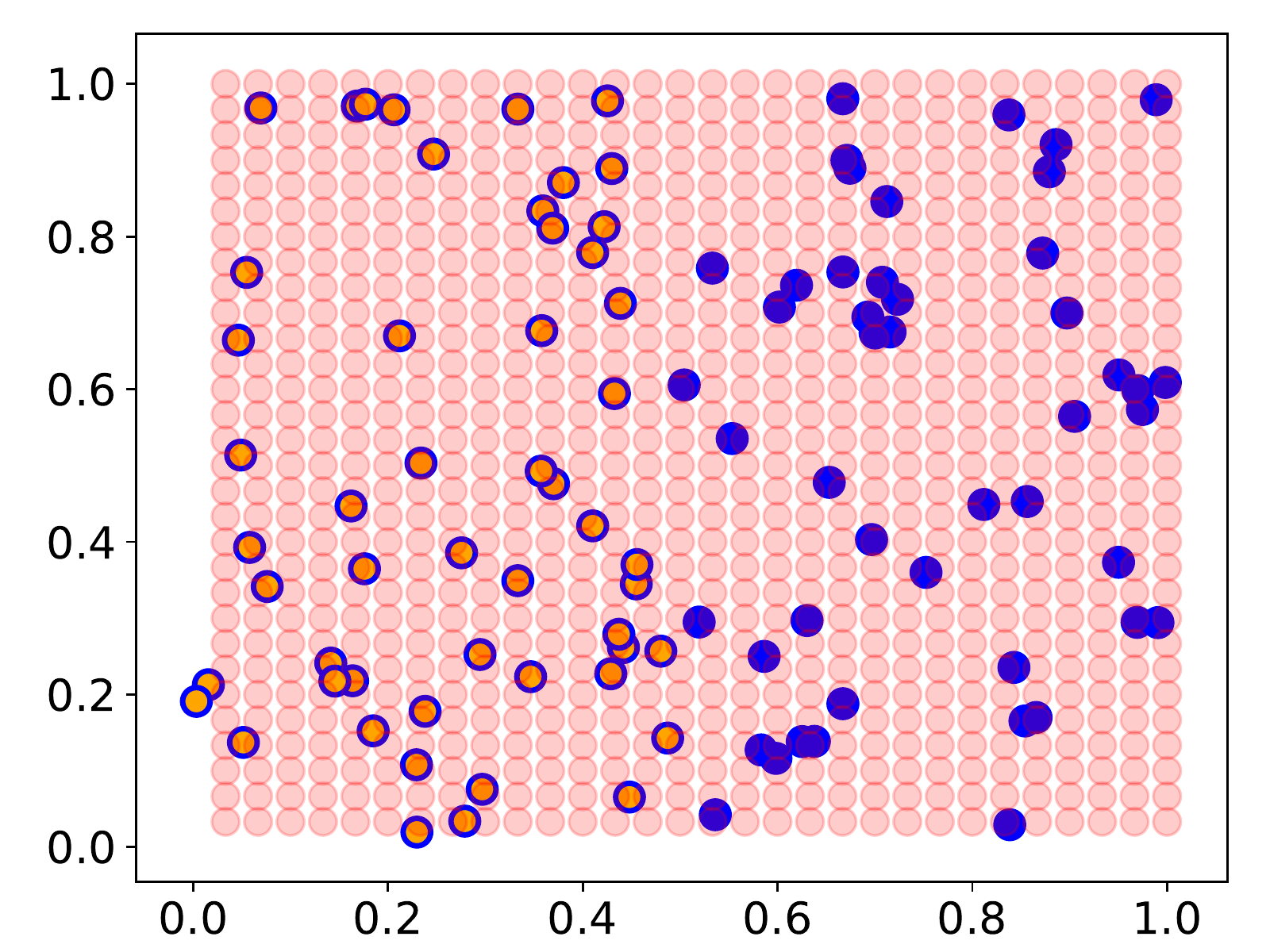}};
            \node[below=of img, node distance=0cm, yshift=1.2cm,font=\color{black}] {$x_1$};
            \node[above=of img, node distance=0cm,  anchor=center,yshift=-1.0cm,font=\color{black}] {\scriptsize{Test Acc.: $43.43\%$}};
        \end{tikzpicture}
}\kern-1.5em
\subfloat[\label{subfig:wf-encoding-linear-boundary-analytic-misclassification-0_04ampdamp}]{
         \begin{tikzpicture}
            \node (img)  {\includegraphics[width = 0.14\textwidth, height = 0.14\textwidth]{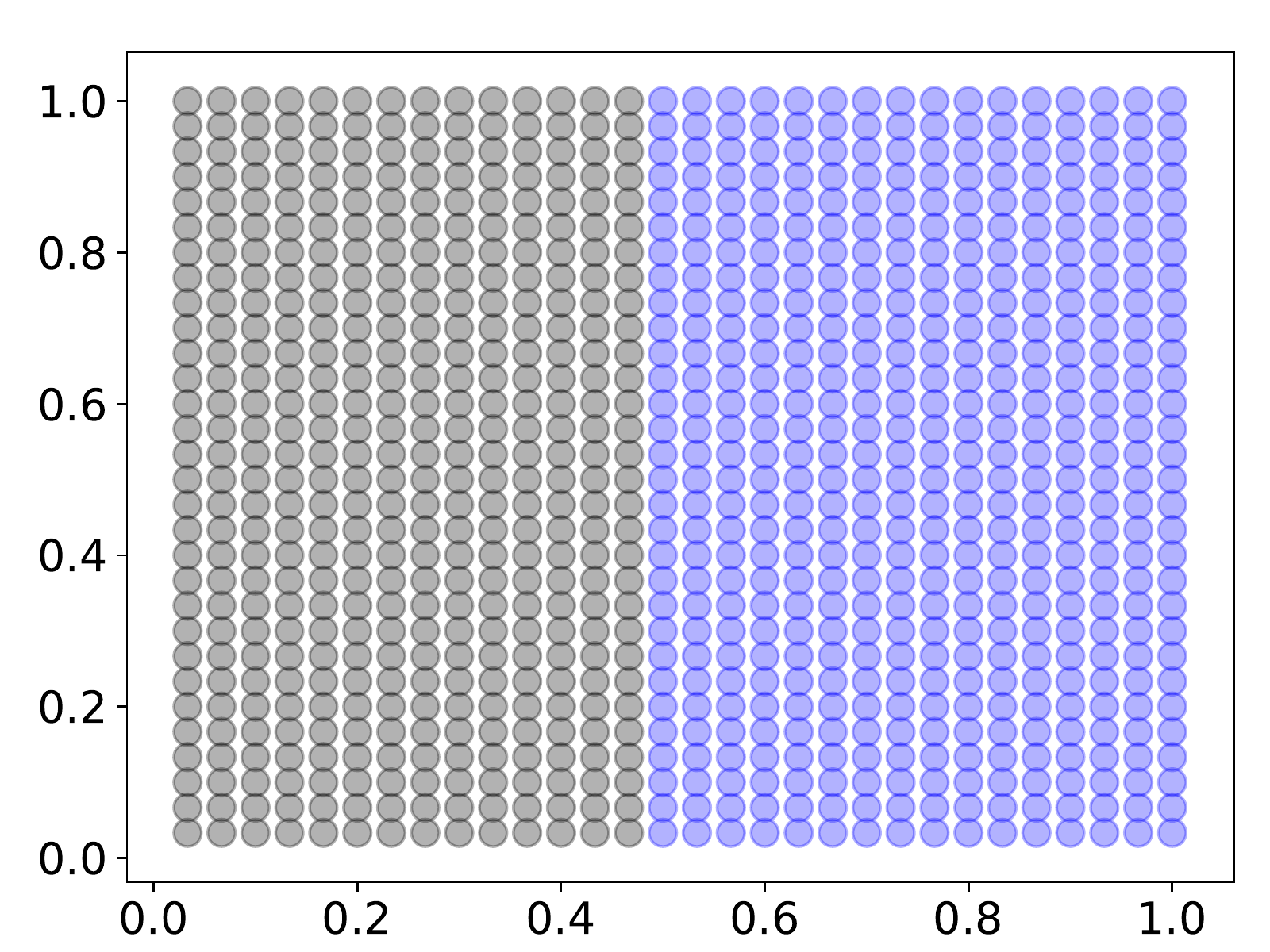}};
            \node[below=of img, node distance=0cm, yshift=1.2cm,font=\color{black}] {$x_1$};
        \end{tikzpicture}
}\kern-2em\\ \vspace{-1em}
\subfloat[\label{subfig:wf-encoding-linear-boundary-amp-damp-noise-0_02ampdamp}]{
    \begin{tikzpicture}
        \node (img)  {\includegraphics[width = 0.18\textwidth, height = 0.145\textwidth]{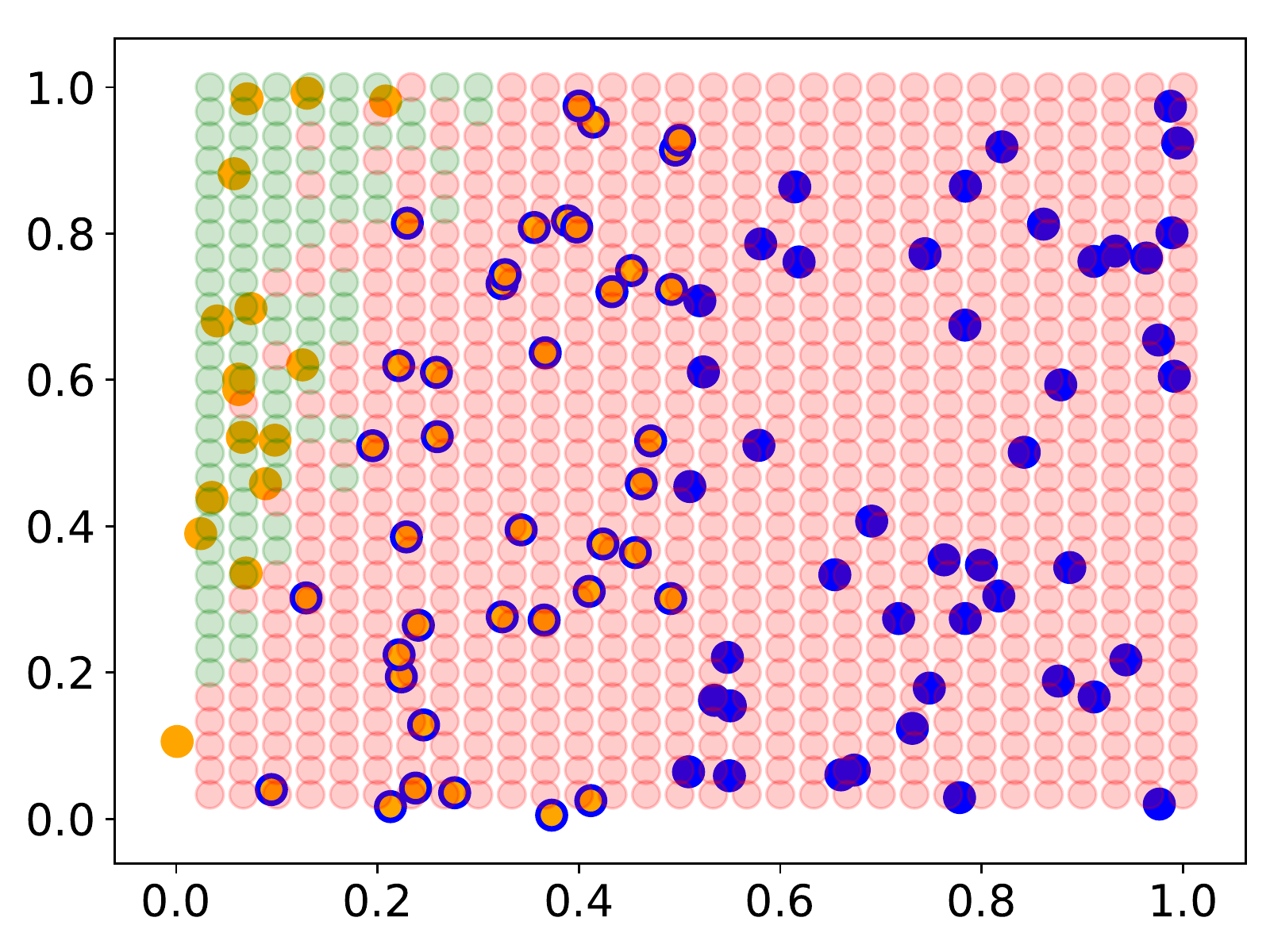}};
        \node[below=of img, node distance=0cm, yshift=1.2cm,font=\color{black}] {$x_1$};
        \node[left=of img, node distance=0cm, rotate=90, anchor=center,yshift=-0.95cm,font=\color{black}] {$x_2$};
        \node[above=of img, node distance=0cm,  anchor=center,yshift=-1.0cm,font=\color{black}] {\scriptsize{Test Acc.: $60.6\%$}};
    \end{tikzpicture}
}\kern-1.5em
\subfloat[\label{subfig:wf-encoding-linear-boundary-analytic-misclassification-0_02ampdamp}]{
    \begin{tikzpicture}
        \node (img)  {\includegraphics[width = 0.18\textwidth, height = 0.15\textwidth]{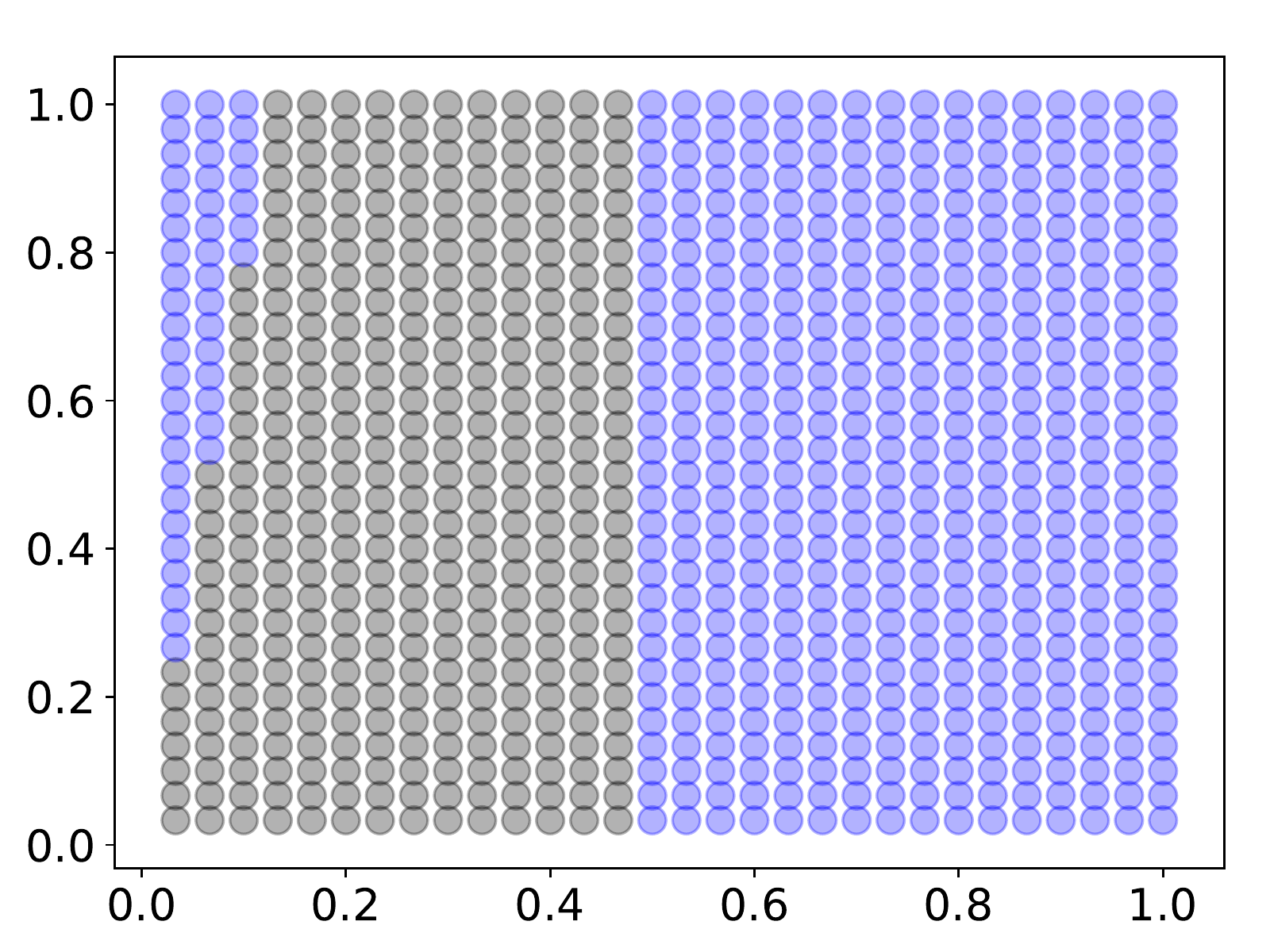}};
        \node[below=of img, node distance=0cm, yshift=1.2cm,font=\color{black}] {$x_1$};
    \end{tikzpicture}
    }
\caption{
(Color online.) Partial robustness for the wavefunction encoding. The dataset consists of 500 points in the unit square separated by a vertical decision boundary, and we use a train-test split of $80\%$. Panel (a) shows classifier test accuracy after optimizing the unitary without noise. Panel (b) shows the reduced accuracy after adding amplitude damping noise with strength $p = 0.4$. The robust set is shown explicitly in Panel (c) where [\crule[blue]{0.2cm}{0.2cm}] indicates a robust point and [\crule[black]{0.2cm}{0.2cm}] indicates a misclassified point.
Panel (d) is the same as (b) but with decreased strength $p = 0.2$ of the amplitude damping channel. Test accuracy reduces from $81\%$ to $60\%$ in this case. Panel (e) shows the robust set for (d).
} 
\label{fig:wf-encoding-linear-boundary}
\end{figure}

% Closing of section, reinforce two points learned from these results

% =============================================================================
% =============================================================================
\subsection{Encoding Learning Algorithm}\label{ssec:encoding_learn_alg}
% =============================================================================
% =============================================================================

% TODO: New intro sentence, more about finding robustness when we don't know what the noise will be
% We have established above that there exists a tradeoff between learnability and robustness, we would like to find encodings which balance the best of both worlds.

For this purpose, we introduce an ``encoding learning algorithm'' to try and search for good encodings. The goal is crudely illustrated in  Fig.~\ref{fig:single_qubit_binary_classifier_encoding_learning}. As mentioned above, Ref.~\cite{lloyd_quantum_2020} trains over hyperparameters using the re-uploading structure of Ref.~\cite{perez-salinas_data_2019} to increase learnability. Here, the encoding learning algorithm adapts to noise to increase robustness. We note the distinction that in our implementations we train the unitary in a noiseless environment and do not alter its parameters during the encoding learning algorithm.

% Encoding learning algorithm figure
\begin{figure}
    \includegraphics[width=\columnwidth]{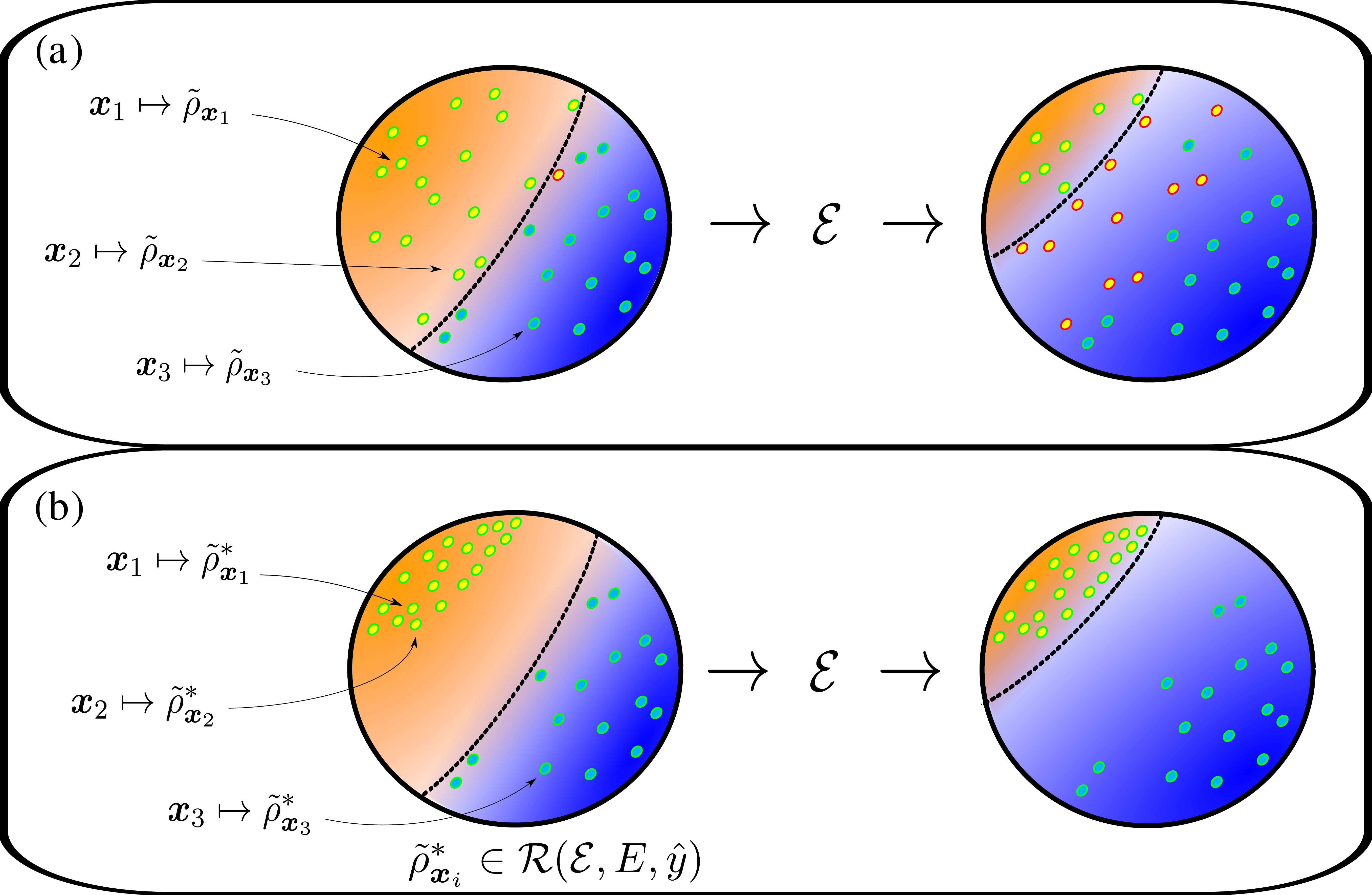}
    \caption{(Color online.) Cartoon illustration of the encoding learning algorithm with a single qubit classifier. In (a), a preset encoding with no knowledge of the noise misclassifies a large number of points. In (b), the encoding learning algorithm detects misclassifications and tries to adjust points to achieve more robustness, attempting to encode into the robust set for the channel.}
    \label{fig:single_qubit_binary_classifier_encoding_learning}
\end{figure}
%

% Algorithm description
The encoding learning algorithm is similar to the Rotoselect algorithm~\cite{ostaszewski_quantum_2019} which is used to learn circuit structure. For each function pair $(f_j, g_j)$ from a discrete set of parameterized functions $\{f_i(\vec{\theta}_i), g_i(\vec{\theta}_i)\}_{i = 1}^{K}$  we train the unitary $U(\vec{\alpha})$ to minimize the cost while keeping the encoding (hyper)parameters $\vec{\theta}_j$ fixed. Next, we add a noise channel $\E$ which causes some points to be misclassified. Now, we optimize the encoding parameters $\vec{\theta}_j$ in the noisy environment. For this optimization, the same cost function is used, and the goal is to further decrease the cost (and hence increase the set of robust points) by varying the encoding hyperparameters. Pseudocode for the algorithm is shown in Appendix~\ref{app:numerical_results}.

% How we test the algorithm
We test the algorithm on linearly separable and non-linearly separable datasets in Fig.~\ref{fig:encoding_learning_algorithm}. In particular, we use three different encodings on three datasets. The encodings used are the dense angle encoding, superdense angle encoding, and ``generalized wavefunction encoding (GWFE)'' given by

\begin{equation} \label{eqn:gwf_single_qubit}
    | \vec{x} \> := \frac{\sqrt{1+\theta x_2^2}}{||\vec{x}||_2} x_1 |0\> + \frac{\sqrt{1-\theta x_1^2}}{||\vec{x}||_2} x_2 |1\>.
\end{equation}
for a single qubit.

% Description of results
Using these encodings and the datasets shown in Appendix~\ref{app:numerical_results}, we study performance for the noiseless case, noisy case, and the effect of the encoding learning algorithm. We observe that the algorithm is not only capable of recovering the noiseless classification accuracy achieved, but is actually able to outperform it in some cases, as can be seen in Fig.~\ref{fig:encoding_learning_algorithm}.

% Figure 9
Finally, we consider the discussion in Sec.~\ref{subsec:existence-of-robust-encodings} about the tradeoff between learnability and robustness. We make this quantitative in Fig.~\ref{fig:dae_encoding_learnability_versus_robustness} by plotting accuracy (percent learned correctly) and robustness against hyperparameters $\theta$ and $\phi$ in a generalized dense angle encoding
\begin{equation}
     |\vec{x}\> = \cos (\theta x_{1}) \ket{0} + e^{i \phi x_{2}} \sin (\theta x_{1})\ket{1} .
\end{equation}
More specifically, in Fig.~\ref{fig:dae_encoding_learnability_versus_robustness}, we illustrate how the noise affects the hyperparameters, $\theta^*$ and $\phi^*$, which give maximal classification \emph{accuracy} in both the noiseless and noisy environments, and also those which give maximal \emph{robustness} (in the sense of \defref{def:partial_noise_robustness}). Fig.~\ref{fig:dae_encoding_learnability_versus_robustness}(a), shows the percentage misclassified in the noiseless environment, where red indicates the lowest accuracy on the test set, and blue indicates the highest accuracy. We then repeat this in Fig.~\ref{fig:dae_encoding_learnability_versus_robustness}(b) and Fig.~\ref{fig:dae_encoding_learnability_versus_robustness}(c) to find the parameters which maximize accuracy in the presence of noise, and the maximize robustness. As expected, for the amplitude damping channel, the best parameters (with noise) are closer to the fixed point of the channel (i.e. $\theta^* \rightarrow 0$ implies encoding in the  $\ket{0}$ state), thereby demonstrating the tradeoff between learnability and robustness. 
%

% Single column figure
\begin{figure}[ht!]
\begin{tikzpicture}
  \node (img)  {\includegraphics[width = 0.48\textwidth, height = 0.18\textwidth]{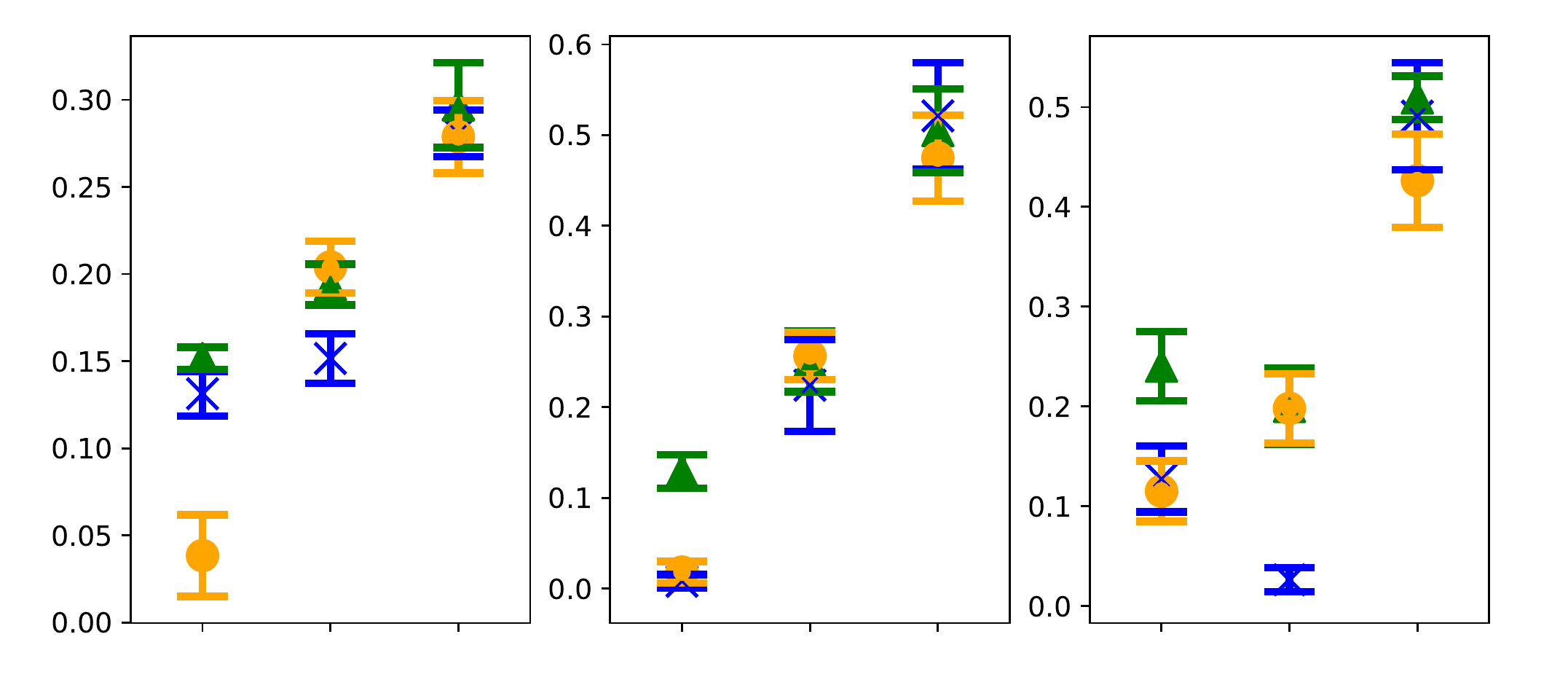}};
  \node[above=of img, node distance=0cm, xshift=-2.5cm, yshift=-1cm,font=\color{black}] {\moons};
    \node[above=of img, node distance=0cm, xshift=0.1cm, yshift=-1cm,font=\color{black}] {\vertical};  
    \node[above=of img, node distance=0cm, xshift=2.75cm, yshift=-1cm,font=\color{black}] {\diagonal};
  \node[left=of img, node distance=0cm, rotate=90, anchor=center,yshift=-0.9cm,font=\color{black}] {\large{Cost}};
      \node[below=of img, node distance=0cm, xshift=-2.4cm, rotate=45,
      yshift=1.5cm,font=\color{black}] {\footnotesize{\textnormal{DAE}}};
         \node[below=of img, node distance=0cm, xshift=-1.9cm, rotate=45, yshift=1.4cm,font=\color{black}] {\footnotesize{\textnormal{GWFE}}};
    \node[below=of img, node distance=0cm, xshift=-1.2cm, rotate=45, yshift=1.4cm,font=\color{black}] {\footnotesize{\textnormal{SDAE}}};
          \node[below=of img, node distance=0cm, xshift=0.3cm, rotate=45, yshift=1.5cm,font=\color{black}] {\footnotesize{\textnormal{DAE}}};
         \node[below=of img, node distance=0cm, xshift=0.8cm, rotate=45, yshift=1.4cm,font=\color{black}] {\footnotesize{\textnormal{GWFE}}};
    \node[below=of img, node distance=0cm, xshift=1.6cm, rotate=45, yshift=1.4cm,font=\color{black}] {\footnotesize{\textnormal{SDAE}}};
          \node[below=of img, node distance=0cm, xshift=3cm, rotate=45, yshift=1.5cm,font=\color{black}] {\footnotesize{\textnormal{DAE}}};
         \node[below=of img, node distance=0cm, xshift=3.5cm, rotate=45, yshift=1.4cm,font=\color{black}] {\footnotesize{\textnormal{GWFE}}};
    \node[below=of img, node distance=0cm, xshift= 4.2cm, rotate=45, yshift=1.4cm,font=\color{black}] {\footnotesize{\textnormal{SDAE}}};
       \node [below=of img, yshift=0.5cm]{\includegraphics[width = 0.2\textwidth, height = 0.06\textwidth]{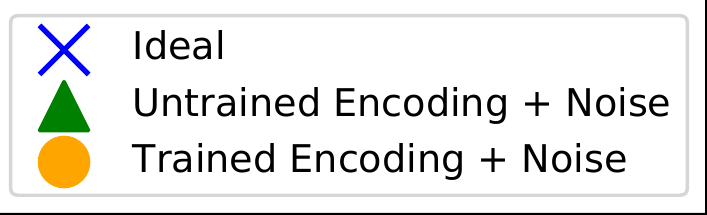}};
\end{tikzpicture}
\caption{Minimum cost achieved (vertical axis) from applying the encoding learning algorithm to three example datasets (each plot) using three different encodings (horizontal axis). The blue points [\crule[blue]{0.2cm}{0.2cm}] show the ``idea'' of training over unitary parameters only without any noise present.  The green points [\crule[ForestGreen]{0.2cm}{0.2cm}] show the same case with the addition of amplitude damping noise of strength $p = 0.3$. The orange points [\crule[orange]{0.2cm}{0.2cm}] show minimum cost after applying the encoding learning algorithm. The dense angle encoding (DAE) is seen to perform well on all datasets and is capable of adapting well to noise, even outperforming the ideal case without noise and fixed encoding. The superdense angle encoding (SDAE) does not perform well on any shown dataset since the generated decision boundary is highly nonlinear and cannot correctly classify more than half the dataset. The generalized wavefunction encoding (GWFE) performs well on the diagonal boundary, since it generates a suitable decision boundary.}
\label{fig:encoding_learning_algorithm}
\end{figure}

\begin{figure}[!ht]
\subfloat[ \label{subfig:learnability_v_robustness_a}]{
\begin{tikzpicture}
  \node (img)  {\includegraphics[width = 0.135\textwidth, height = 0.135\textwidth]{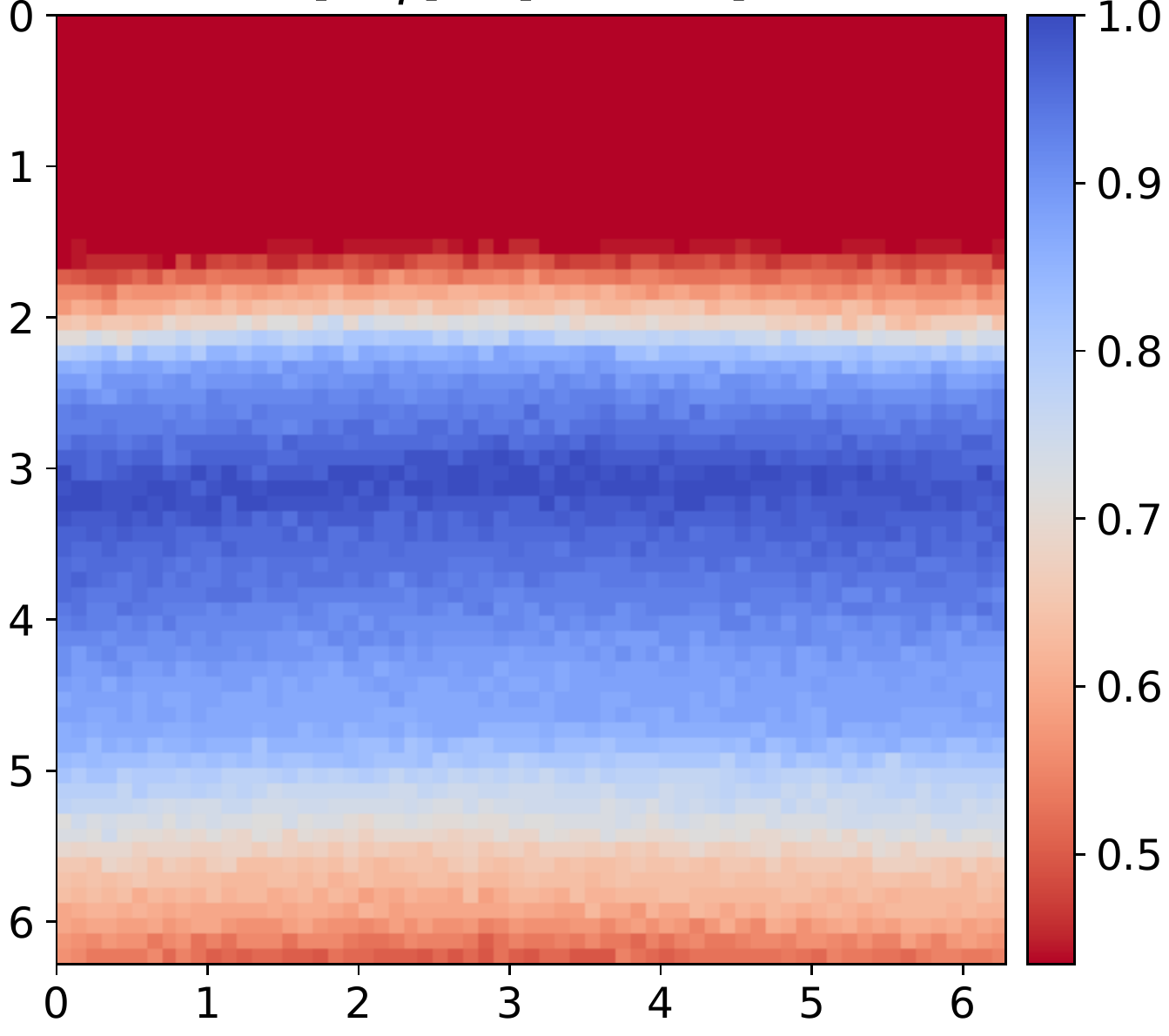}};
  \node[below=of img, node distance=0cm, yshift=1.2cm,font=\color{black}] {$\phi$ (rads)};
  \node[left=of img, node distance=0cm, rotate=90, anchor=center,yshift=-0.95cm,font=\color{black}] {$\theta$ (rads)};
\draw[thick, orange, thick] (-1.05, 0.1) -- (0.86, 0.1);
  \node[right=of img, node distance=0cm,xshift=-1cm, yshift=1.2cm,font=\color{black}, rotate=270] {\scriptsize{Test Accuracy}};
\end{tikzpicture}
}\kern-1.2em
\subfloat[ \label{subfig:learnability_v_robustness_b}]{
\begin{tikzpicture}
  \node (img)  {\includegraphics[width = 0.135\textwidth, height = 0.135\textwidth]{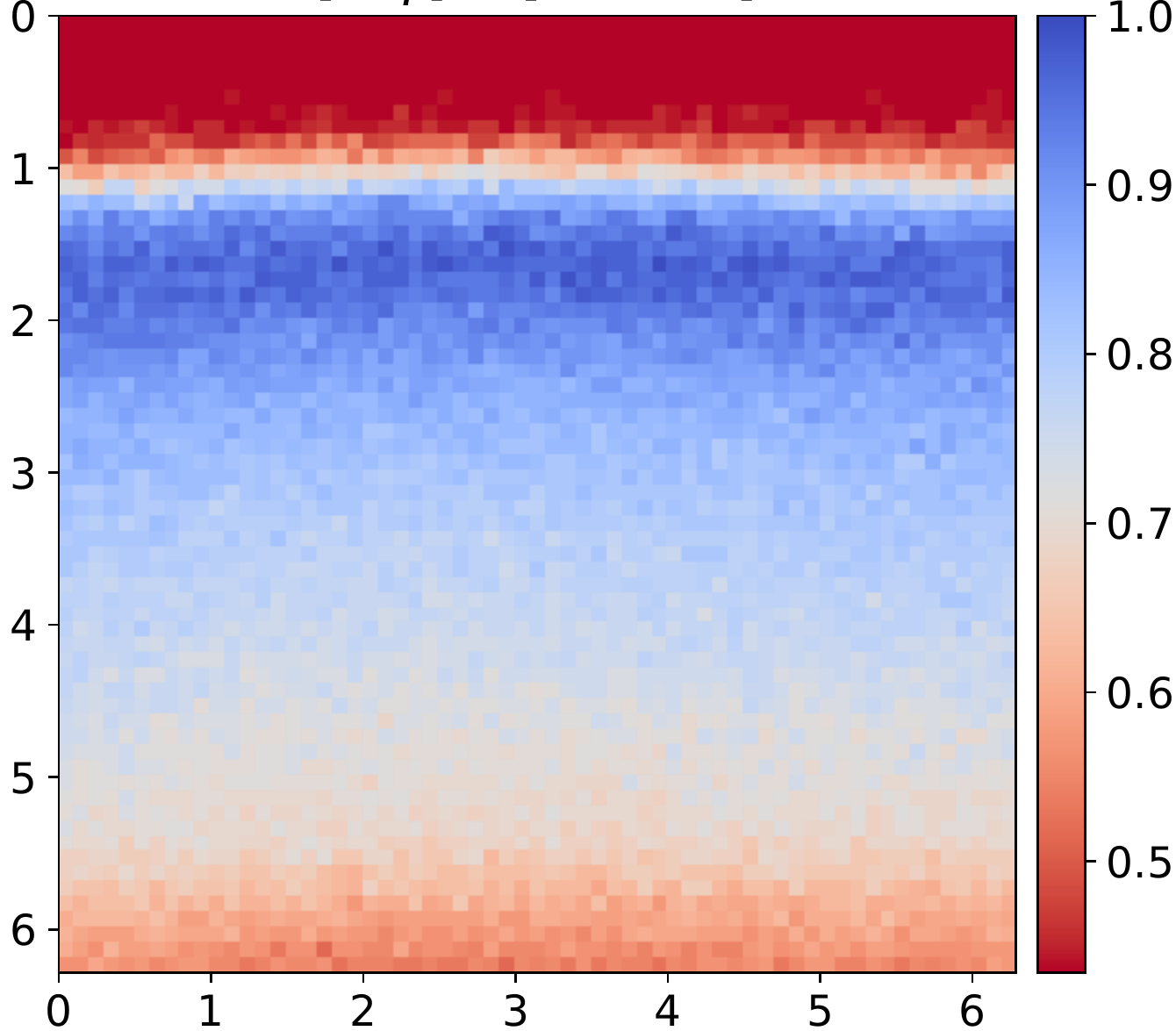}};
  \node[below=of img, node distance=0cm, yshift=1.2cm,font=\color{black}] {$\phi$ (rads)};
  \draw[thick, orange, thick] (-1.08, 0.6) -- (0.87, 0.6);
  \draw[thick, orange, dashed] (-1.08, 0.1) -- (0.87, 0.1);
    \draw [thick, orange, ->] (0, 0.6) edge (0, 0.1);
  \node[right=of img, node distance=0cm,xshift=-1cm, yshift=1.2cm,font=\color{black}, rotate=270] {\scriptsize{Test Accuracy}};
\end{tikzpicture}
}\kern-1.2em
\subfloat[ \label{subfig:learnability_v_robustness_c}]{
\begin{tikzpicture}
  \node (img)  {\includegraphics[width = 0.135\textwidth, height = 0.135\textwidth]{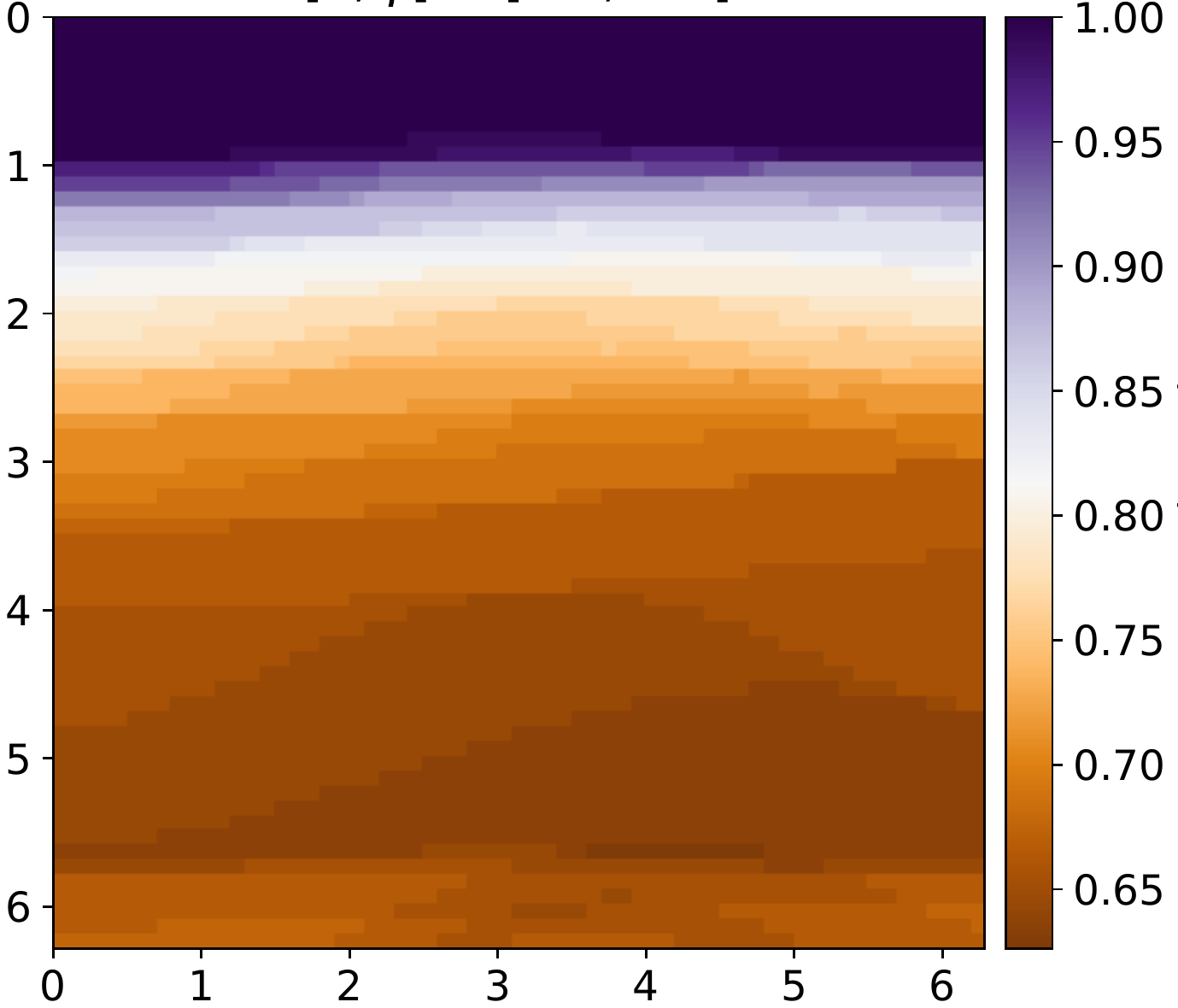}};
  \node[below=of img, node distance=0cm, yshift=1.2cm,font=\color{black}] {$\phi$ (rads)};
  \node[right=of img, node distance=0cm,xshift=-1cm, yshift=1.4cm,font=\color{black}, rotate=270] {\scriptsize{Proportion Robust}};
\end{tikzpicture}\vspace{-1em}
}
\caption{Learnability versus robustness of on the ``vertical'' dataset using the parameterized dense angle encoding. The horizontal and vertical axes show the encoding hyperparameters $\phi$ and $\theta$, respectively. Panels (a) and (b) show the classifier accuracy while Panel (c) shows the proportion of robust points. Panel (a) shows accuracy without noise as a function of encoding parameters.
Panel (b) shows accuracy with the addition of an amplitude damping channel of strength $p = 0.3$. Panel (c) shows $\delta$ - robustness for different parameter values. As expected, the robust set is largest when all points are encoded into the zero state, i.e. $\theta=0$. This leads to all points labeled $1$ being misclassified, with a resulting accuracy of approximately $50\%$. The orange [\crule[orange]{0.2cm}{0.2cm}] line indicates optimal $\theta$ parameters in each panel. From (a) to (b), the $\theta$ parameters corresponding to highest accuracy are shifted towards the robust points (i.e., towards $\theta = 0$) in (c). See \appref{app:numerical_results} for the optimal parameters found in each case.} 
\label{fig:dae_encoding_learnability_versus_robustness}
\end{figure}

% =============================================================================
% =============================================================================
\subsection{Fidelity Bounds on Partial Robustness} \label{ssec:fidelity_analysis_exp}
% =============================================================================
% =============================================================================

% Description of experiment and context
As a final numerical implementation, we compute the upper bounds on partial robustness proved in Sec.~\ref{ssec:fidelity_bounds} for several different encodings and error channels. The implementation we consider is the previously-discussed Iris datasest classification problem using two qubits. The results are shown in Fig.~\ref{fig:fidelity_analysis_iris}. In this Figure, each plot corresponds to a different error channel with strength varied across the horizontal axis. Each curve in the top row corresponds to the fidelity of noisy and noiseless states using different encodings. Each curve in the bottom row shows the upper bounds on partial robustness proved in Sec.~\ref{ssec:fidelity_bounds}. 

% Discussion of results

As can be seen in the bottom row of Fig.~~\ref{fig:fidelity_analysis_iris}, upper bounds on partial robustness are different for different encodings, particularly at small noise values. (Recall that a trivial upper bound on the size of partial robustness is one so that curves at large channel strengths above one are mostly uninformative.) % TODO!!!!!!!!!!!!!!!!!!!!!!!!!!!!!!!!!!!!!!!!!!!!!!!!!
% Redo the x-axis on the plot. Only show the interesting region of small noise so we don't have to make this comment
For such low values of noise, they give us some information about the maximum cost function deviation we can expect.  Based on the average fidelity over the datasets, in Figs.~(\ref{subfig:fidelity_compare_iris_bit_flip} - \ref{subfig:fidelity_compare_iris_depo}) all three encodings behave qualitatively the same. However, the cost function error for the three encodings is significantly different, especially for bit flip and dephasing errors, Figures~(\ref{subfig:fidelity_bound_iris_bit_flip} - \ref{subfig:fidelity_bound_iris_dephase}). As expected, a depolarizing channel causes no misclassification, as seen in \figref{subfig:fidelity_bound_iris_depo}, despite the decrease in fidelity of the states. The apparent erratic behavior of the cost function error is largely due to the low number of samples in the Iris dataset. (Recall that the superdense angle encoding was not able to achieve perfect classification accuracy on the Iris dataset, so under amplitude damping noise, e.g., the cost function error can only decrease by about $25\%$ ($\sim 77\% \rightarrow 50 \%$).)
We can also observe that the dense angle encoding is less susceptible to bit flip and phase errors than the wavefunction encoding in Fig.~\ref{subfig:fidelity_bound_iris_bit_flip} and  Fig.~\ref{subfig:fidelity_bound_iris_dephase}.

\begin{figure*}[!ht]
\subfloat[\label{subfig:fidelity_compare_iris_bit_flip}]{
         \begin{tikzpicture}
  \node (img)  {\includegraphics[width = 0.2\textwidth, height = 0.2\textwidth]{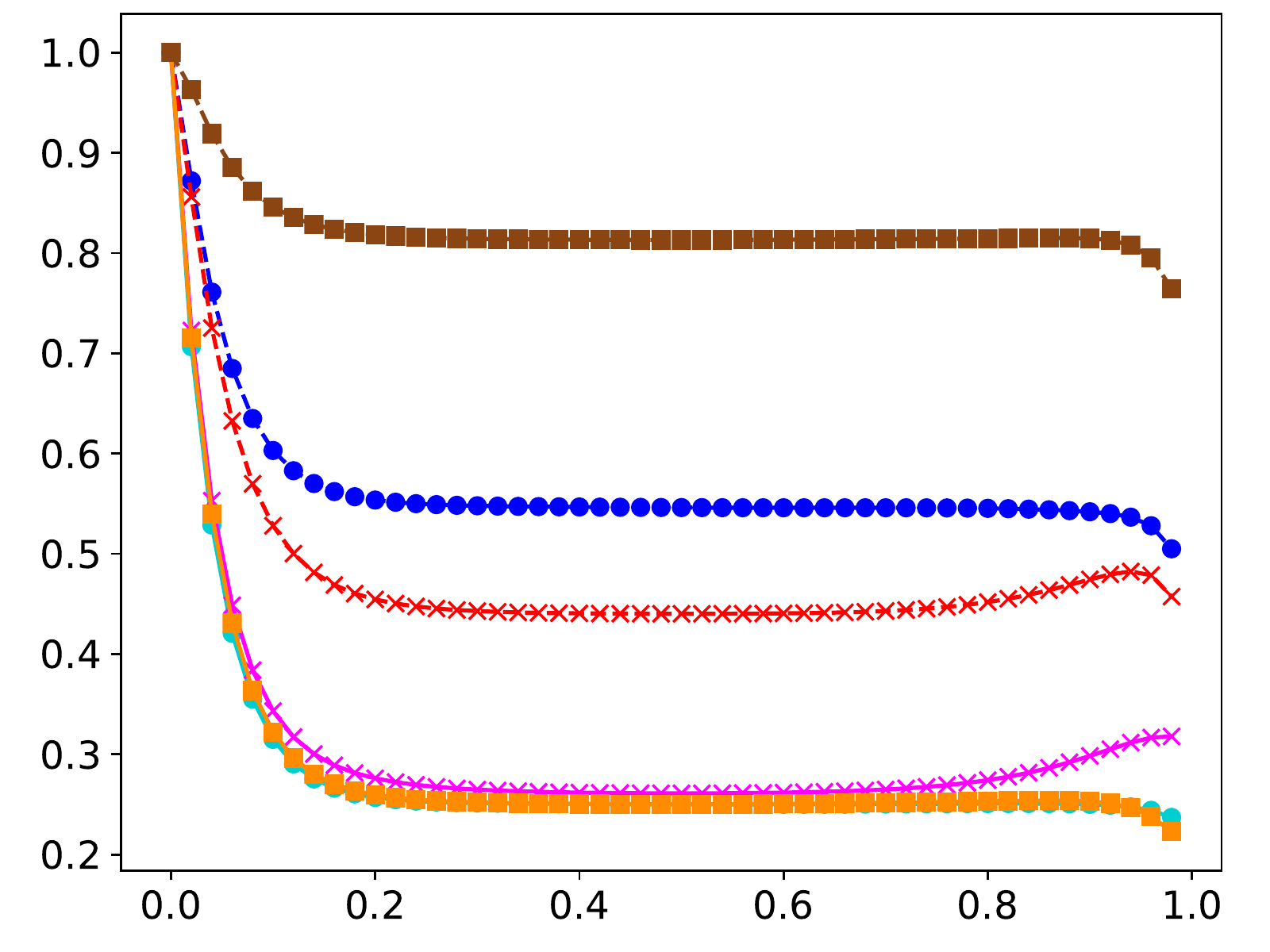}};
  \node[below=of img, node distance=0cm, yshift=1.2cm,font=\color{black}] {$p_X$};
\end{tikzpicture}
}\kern-1.5em
\subfloat[\label{subfig:fidelity_compare_iris_amp_damp}]{
         \begin{tikzpicture}
  \node (img)  {\includegraphics[width = 0.2\textwidth, height = 0.2\textwidth]{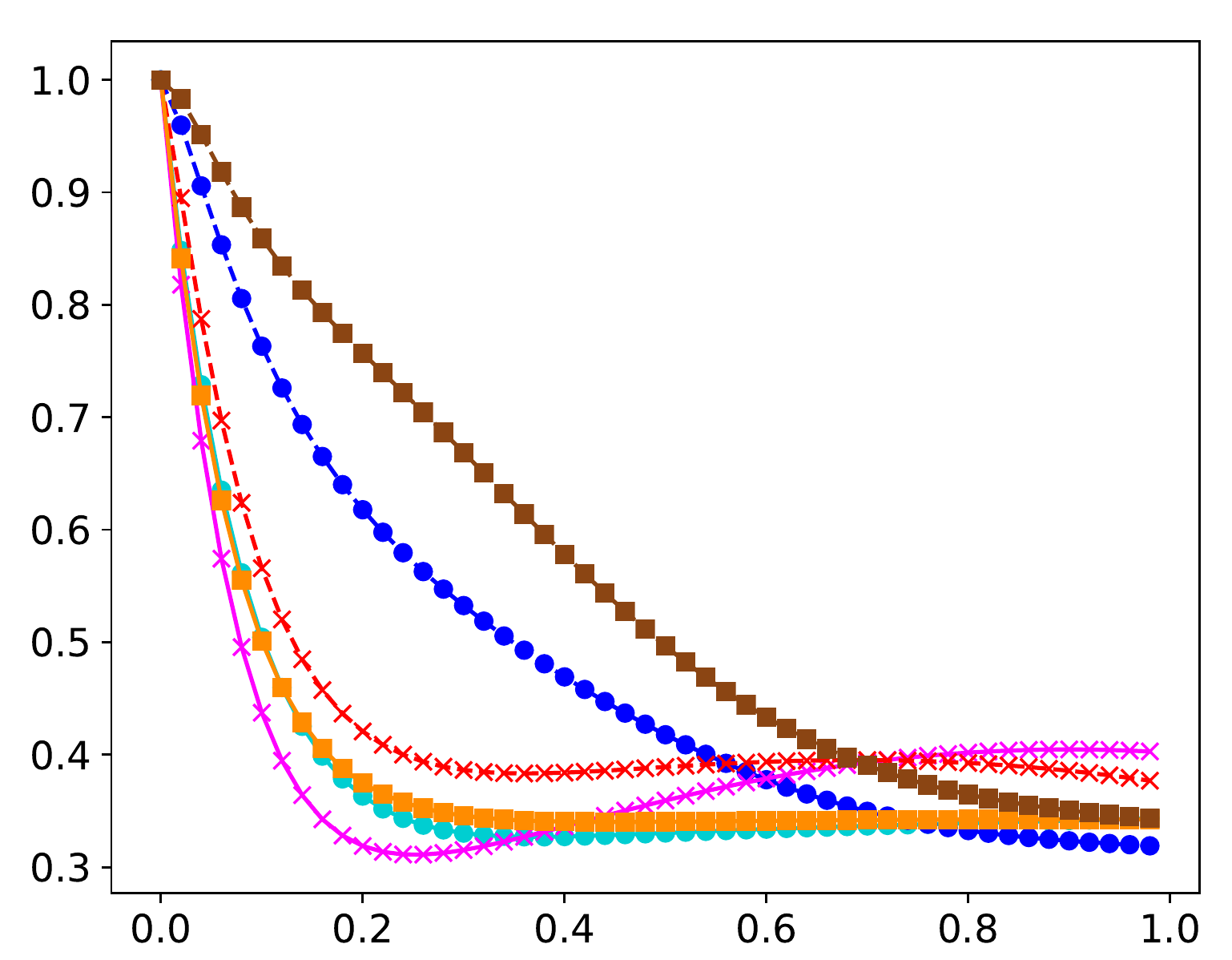}};
  \node[below=of img, node distance=0cm, yshift=1.2cm,font=\color{black}] {$p_{damp}$};
\end{tikzpicture}
}\kern-1.5em
\subfloat[\label{subfig:fidelity_compare_iris_dephase}]{
         \begin{tikzpicture}
  \node (img)  {\includegraphics[width = 0.2\textwidth, height = 0.2\textwidth]{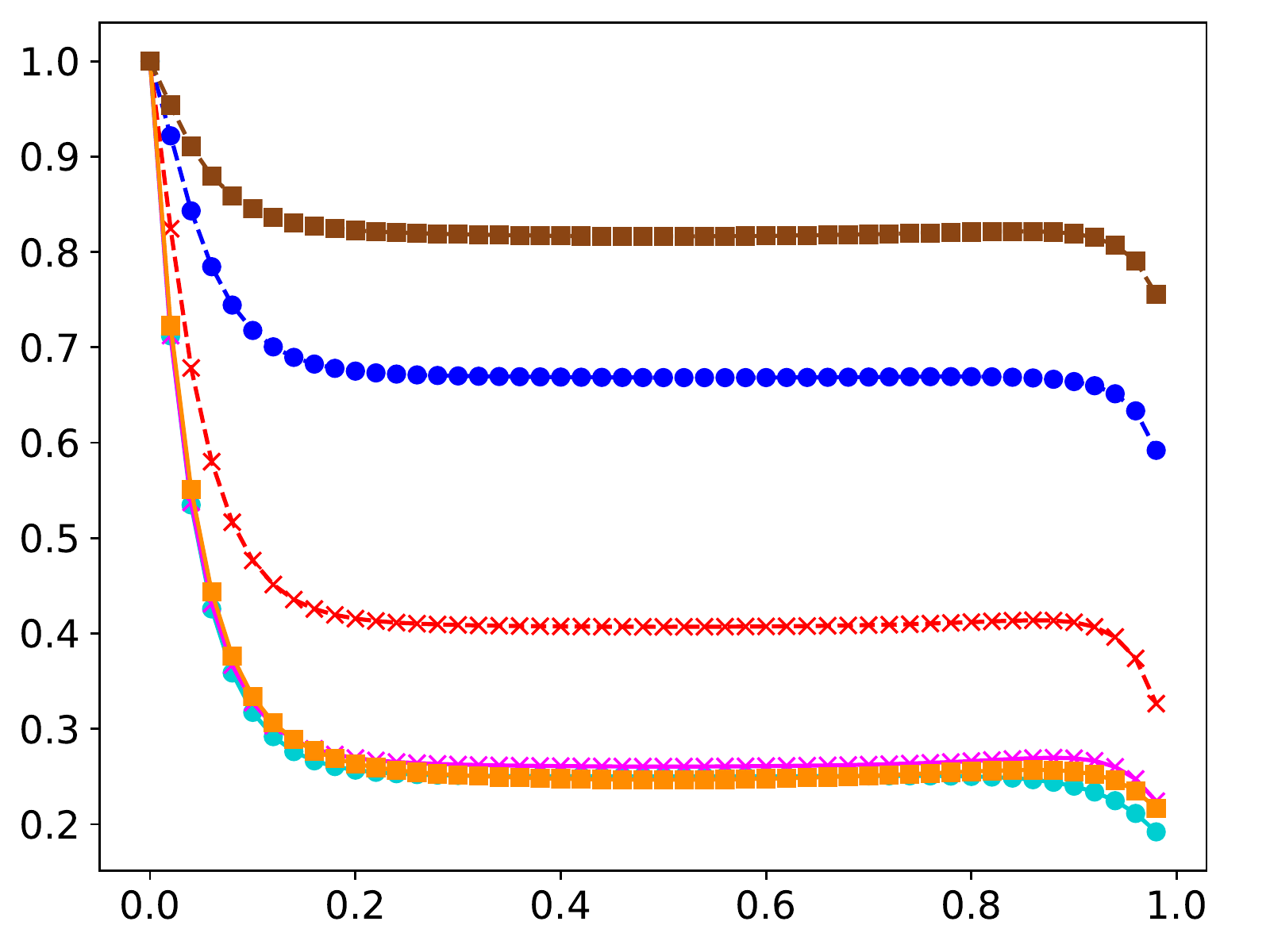}};
  \node[below=of img, node distance=0cm, yshift=1.2cm,font=\color{black}] {$p_{dephase}$};
\end{tikzpicture}
}\kern-1.5em
\subfloat[\label{subfig:fidelity_compare_iris_depo}]{
         \begin{tikzpicture}
  \node (img)  {\includegraphics[width = 0.2\textwidth, height = 0.2\textwidth]{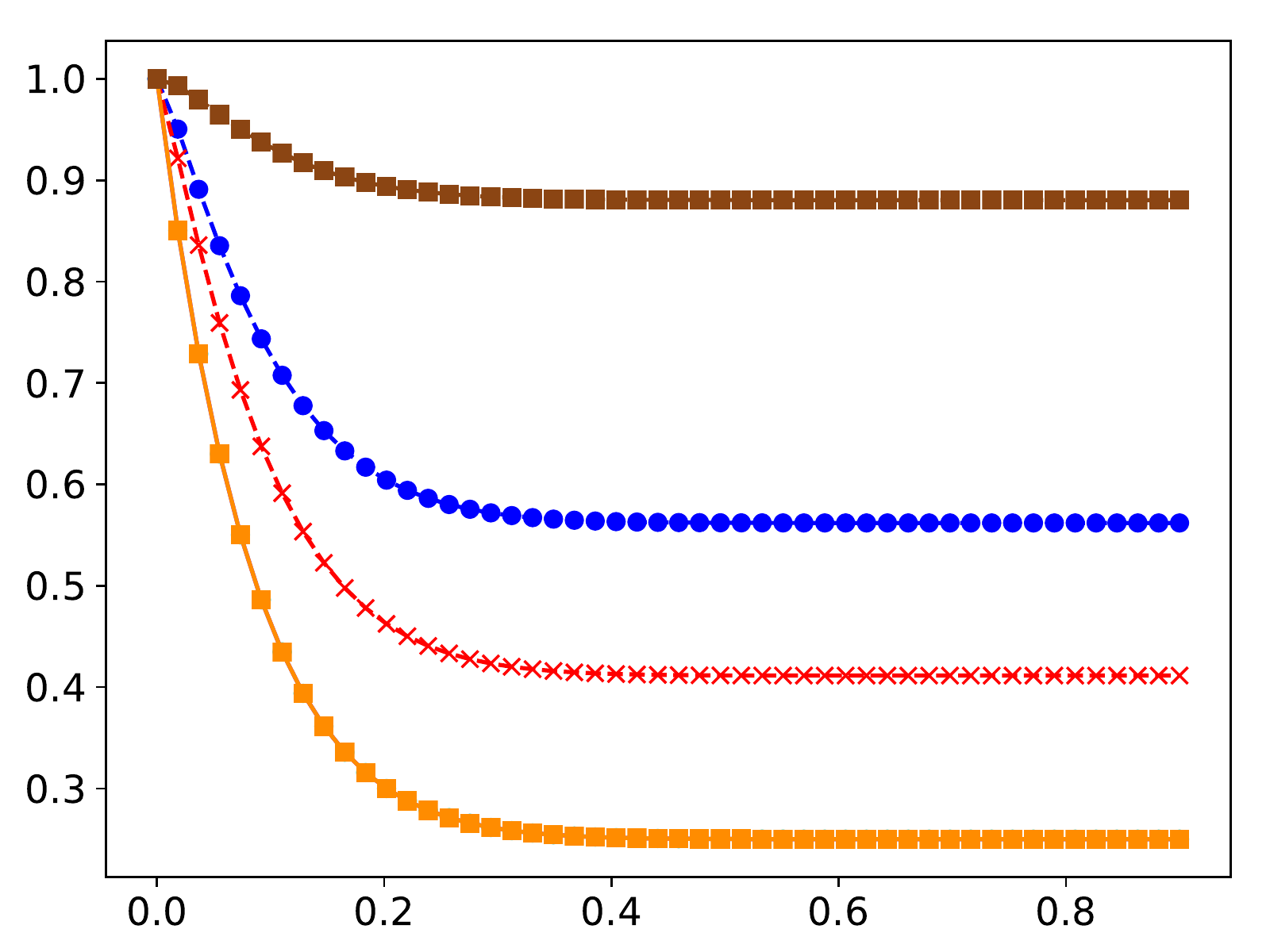}};
  \node[below=of img, node distance=0cm, yshift=1.2cm,font=\color{black}] {$p_{depo}$};
\end{tikzpicture}
}\kern-1.5em
\subfloat{
\vspace{-3em}
         \begin{tikzpicture}
  \node (img)  {\includegraphics[width = 0.17\textwidth, height = 0.14\textwidth]{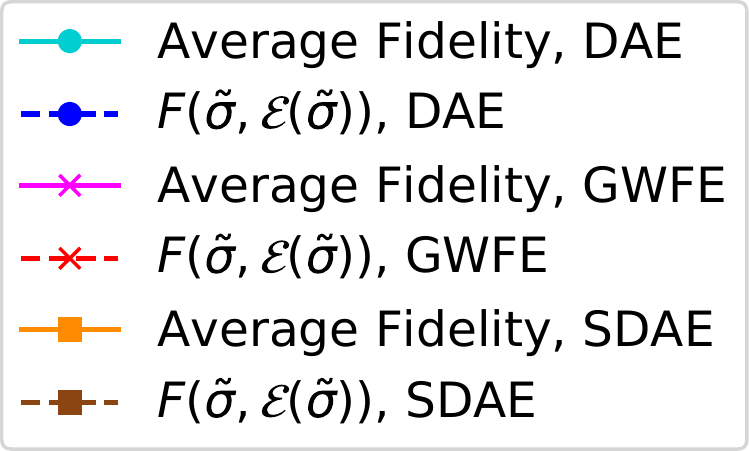}};
\end{tikzpicture}
}\\ \vspace{-1em}
\subfloat[ \label{subfig:fidelity_bound_iris_bit_flip}]{
\setcounter{subfigure}{5}
         \begin{tikzpicture}
  \node (img)  {\includegraphics[width=0.2\textwidth, height = 0.2\textwidth]{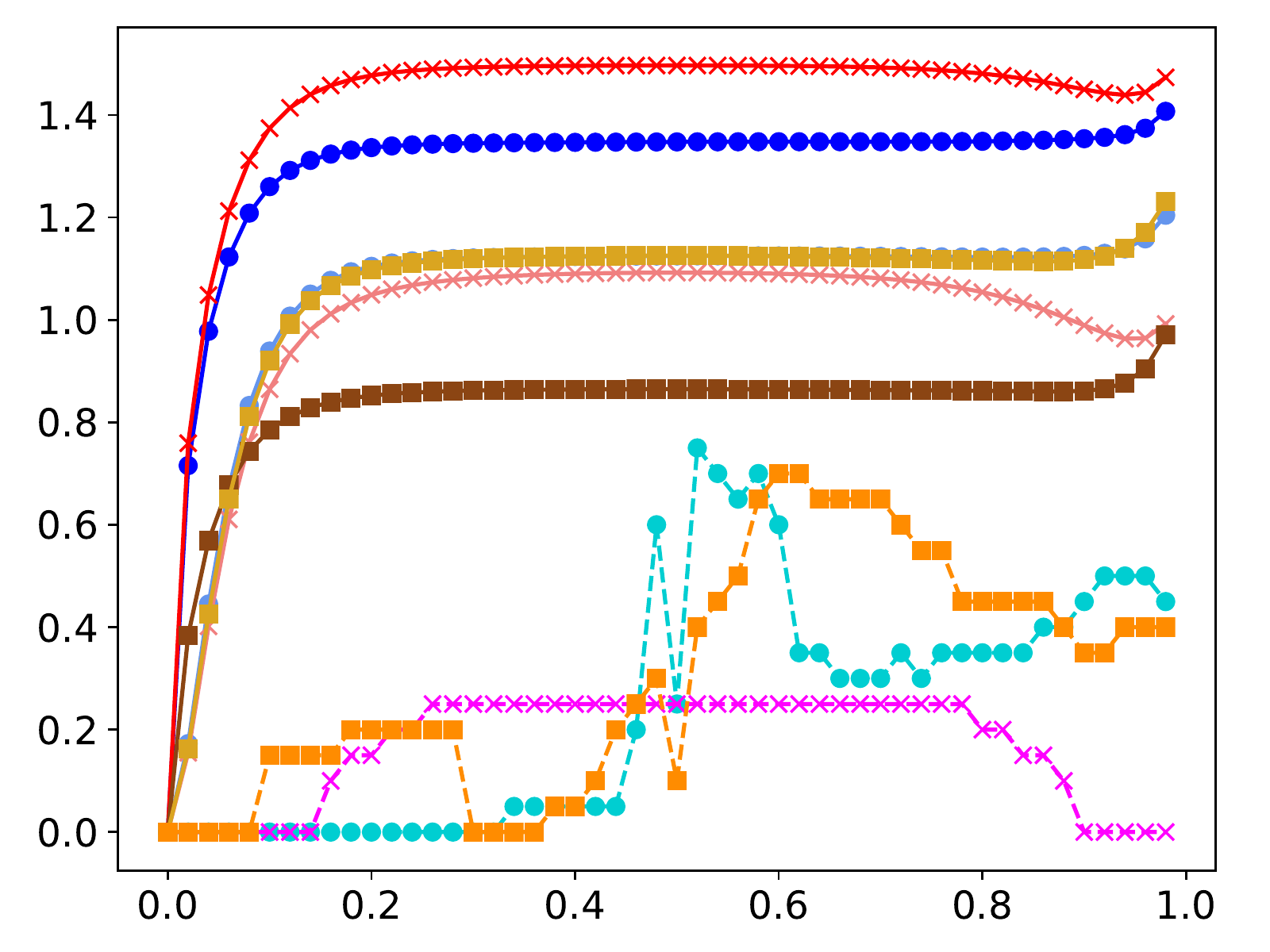}};
  \node[below=of img, node distance=0cm, yshift=1.2cm,font=\color{black}] {$p_X$};
%   \node[left=of img, node distance=0cm, rotate=90, anchor=center,yshift=-0.95cm,font=\color{black}] {$x_2$};
\end{tikzpicture}
}\kern-1.5em
\subfloat[ \label{subfig:fidelity_bound_iris_amp_damp}]{
         \begin{tikzpicture}
  \node (img)  {\includegraphics[width=0.2\textwidth, height = 0.2\textwidth]{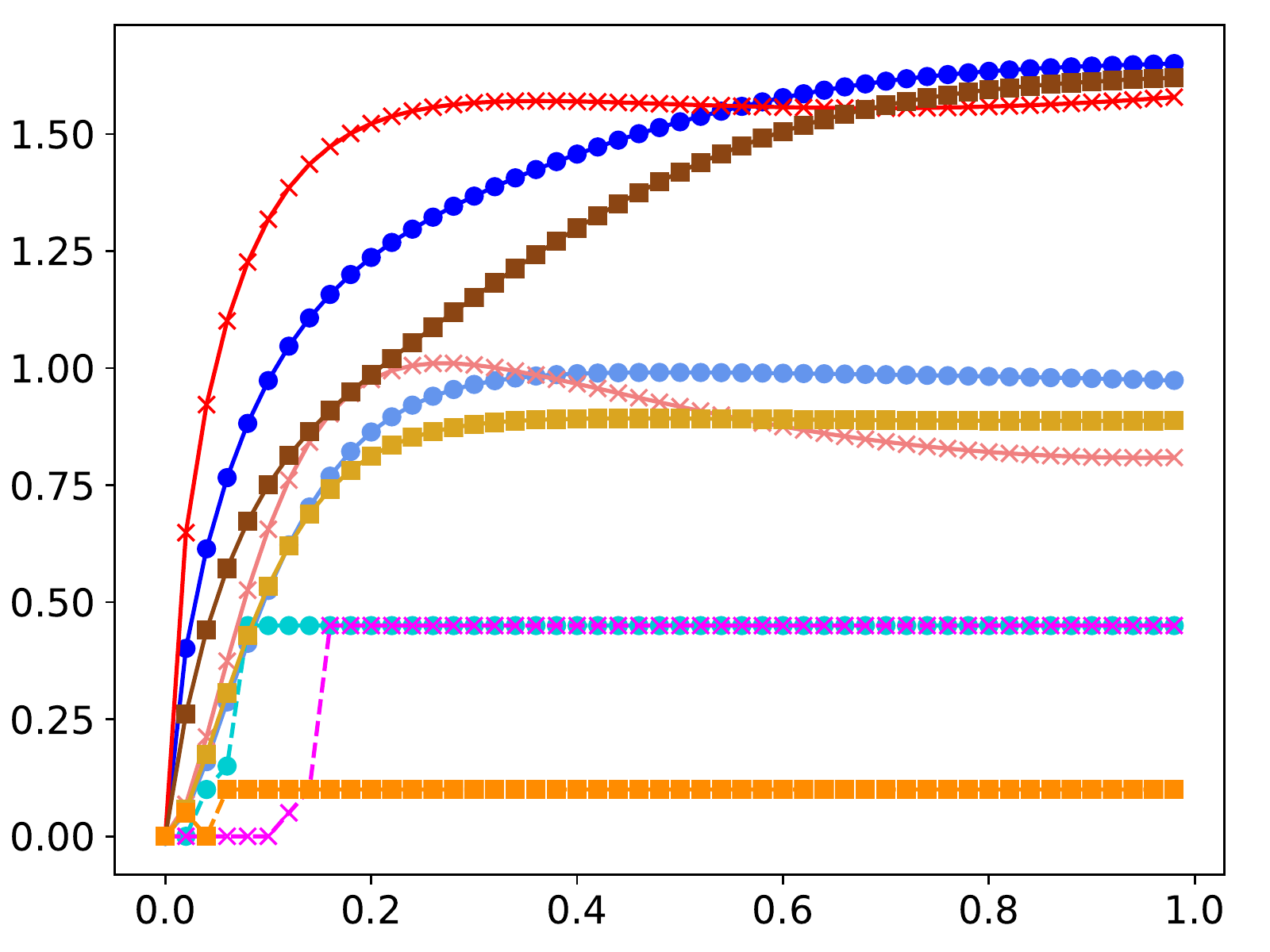}};
  \node[below=of img, node distance=0cm, yshift=1.2cm,font=\color{black}] {$p_{damp}$};
%   \node[left=of img, node distance=0cm, rotate=90, anchor=center,yshift=-0.95cm,font=\color{black}] {$x_2$};
\end{tikzpicture}
}\kern-1.5em
\subfloat[\label{subfig:fidelity_bound_iris_dephase}]{
         \begin{tikzpicture}
  \node (img)  {\includegraphics[width = 0.2\textwidth, height = 0.2\textwidth]{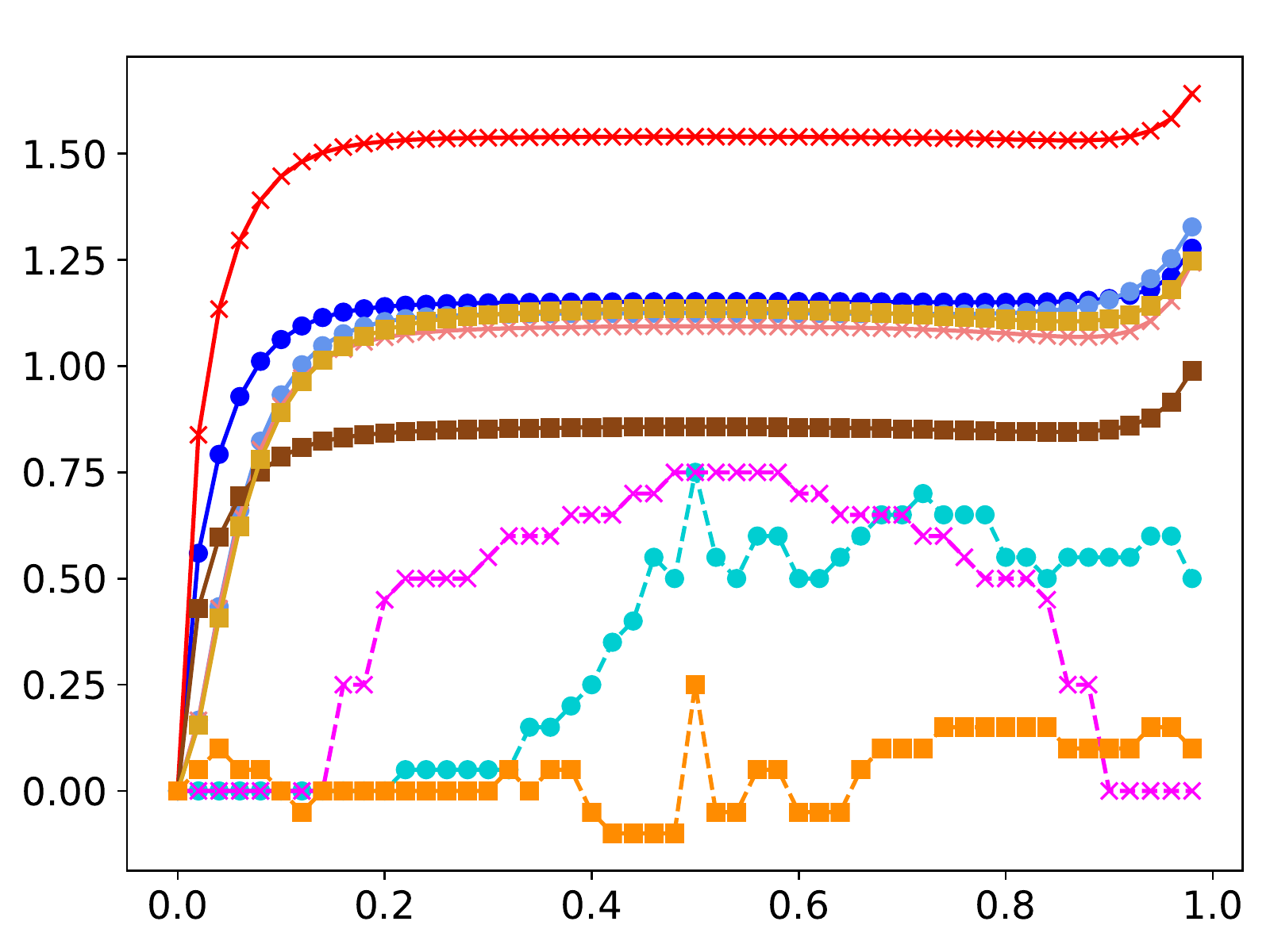}};
  \node[below=of img, node distance=0cm, yshift=1.2cm,font=\color{black}] {$p_{dephase}$};
%   \node[left=of img, node distance=0cm, rotate=90, anchor=center,yshift=-0.95cm,font=\color{black}] {$p_{damp}$};
\end{tikzpicture}
}\kern-1.5em
\subfloat[\label{subfig:fidelity_bound_iris_depo}]{
         \begin{tikzpicture}
  \node (img)  {\includegraphics[width = 0.2\textwidth, height = 0.2\textwidth]{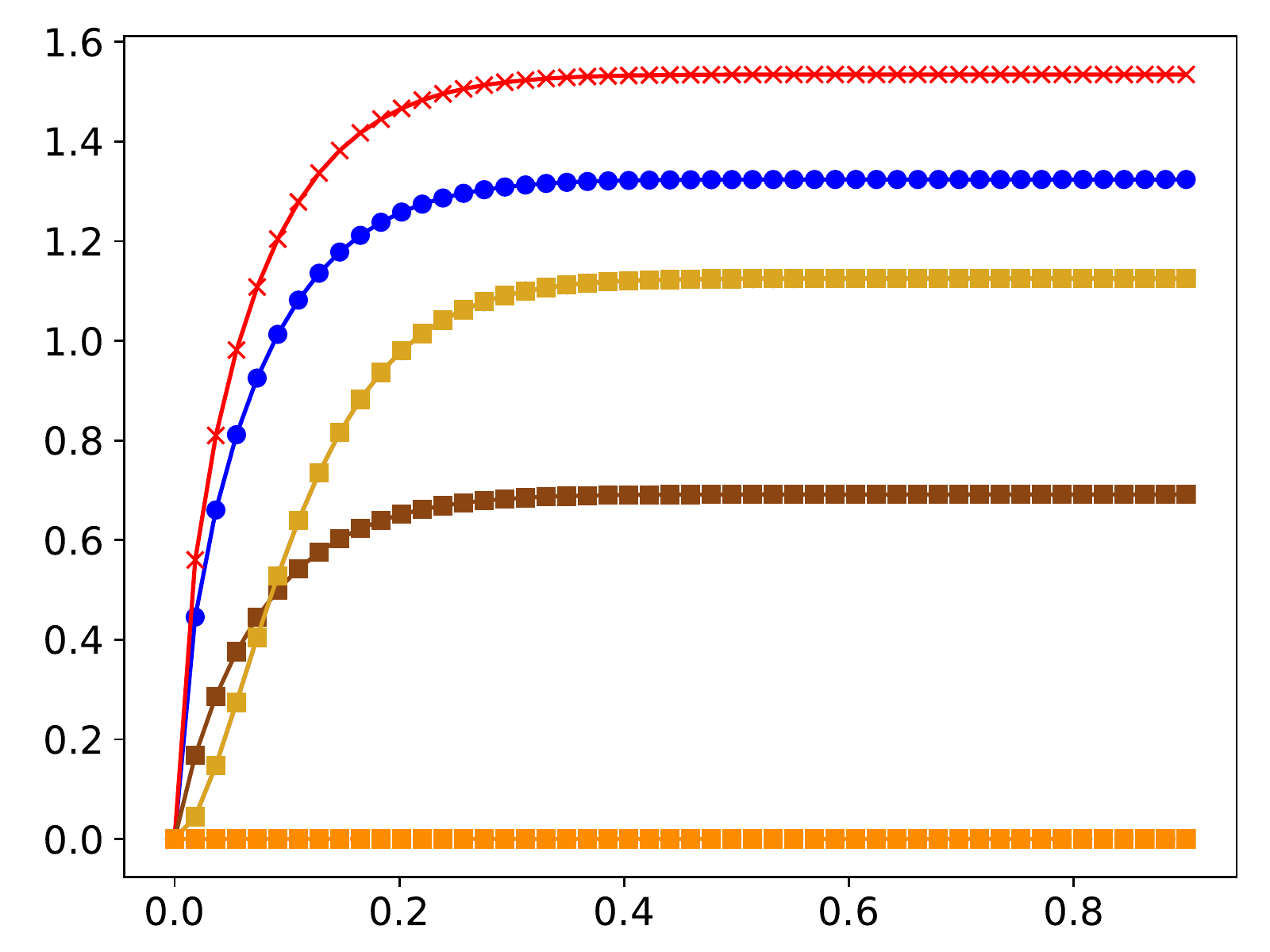}};
  \node[below=of img, node distance=0cm, yshift=1.2cm,font=\color{black}] {$p_{depo}$};
\end{tikzpicture}
}\kern-1.5em
\subfloat{
         \begin{tikzpicture}
  \node (img)  {\includegraphics[width = 0.2\textwidth, height = 0.2\textwidth]{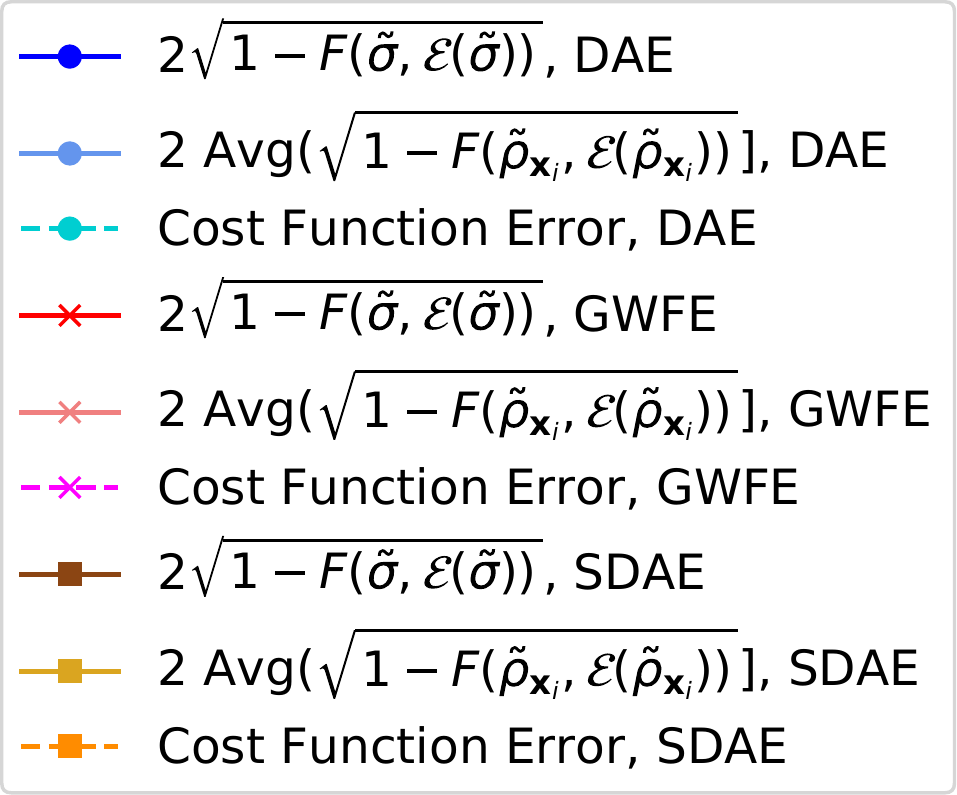}};
    \node[below=of img, node distance=0cm, yshift=2cm,font=\color{black}] {};
  \end{tikzpicture}
}
\caption{Top row: Fidelity between noisy and noiseless states for different encodings on the Iris dataset. From left to right, the noise models are bit-flip noise, amplitude damping noise, dephasing noise, and global depolarizing noise, with strengths varied across the horizontal axis. Bottom row: Upper bounds on partial robustness in terms of fidelity using the bounds~\eqref{eqn:fidelity_bound_mixed_state} and~\eqref{eqn:fidelity_bound_average}. For each plot (in both rows), three different data encodings are considered --- curves corresponds three product qubit encodings: dense angle encoding, wavefunction encoding, and superdense angle.}
\label{fig:fidelity_analysis_iris}
\end{figure*}

% DONE
% =============================================================================
% =============================================================================
% =============================================================================
% =============================================================================
\section{Discussion and Conclusions} \label{sec:conclusions}
% =============================================================================
% =============================================================================
% =============================================================================
% =============================================================================

% Summary of theoretical results
In this paper, we have formally defined a model for a binary quantum classifier common in recent literature and studied encoding functions in detail. In particular, we showed that different encodings lead to different learnable decision boundaries and thus have an important effect on the overall success of the classifier. We introduced and formally defined the concept of robust points as well as robust sets and (completely) robust encodings. In addition to studying the robustness criteria for several common noise models, we have characterized robust sets for example channels and discussed their relationship to the fixed points of the channels. We used this connection to provide an existence proof of robust encodings and discussed an empirically observed tradeoff between learnability and robustness. Finally, we considered an embedded cost function classifier using an indicator cost function and provided upper bounds on the robustness of an encoding in terms of fidelities between ideal and noisy evolution.

% Numerical results
In addition to the above theoretical results, we performed several numerical implementations to confirm and extend our findings. Specifically, we numerically evaluated decision boundaries for different encodings and performed example implementations on standard datasets in machine learning. Additionally, we used numerics to show that different encodings lead to different robust sets, and quantified the size and location of such sets for different encodings. Finally, we presented an ``encoding learning algorithm'' which optimizes over hyperparameters in the encoding to attempt to find robust sets. We provided proof of principle implementations on three datasets to show the performance of this algorithm. Finally, we showed upper bounds on partial robustness by computing the fidelities between ideal and noisy final states for an example implementation. We provide code to reproduce all numerical results at Ref.~\cite{coyle_noiserobustclassifier_2020}.  

% Scope within QML
Within quantum machine learning, this work constitutes one of a relatively small number of papers to study encoding functions in detail. Our concept of robust points and robust encodings is novel to this work, and some of our analytical results help explain phenomena observed in recent literature --- namely, that misclassifications due to depolarizing noise found in Ref.~\cite{grant_hierarchical_2018} are solely due to finite shot statistics. 

% Future theoretical work
As this work introduces new concepts and discusses a relatively under-studied area in quantum machine learning, there are several avenues for future research. While we have provided multiple analytic results, the framework we introduced for proving these results is perhaps more impactful. In future work, these ideas can be applied to prove more robustness results for different channels and different decision boundaries than the one we considered in this work. Specific tasks we leave to future work include generalizing results to classes of quantum channels (e.g., unital channels), quantifying the tradeoff between learnability and robustness, and characterizing the conditions under which a \textit{completely} robust encoding exists. It is also interesting from a purely theoretical perspective to extend the notion of a robust point to a ``generalized fixed point'' of a channel, i.e. points which satisfy $f(\E(\rho)) = f(\rho)$ for some function $f$. (When $f = \yhat$, the ``generalized fixed point'' is a robust point, but other arbitrary functions $f$ could be considered.)

% Future numerical work
From an applications perspective, a clear task for future work is to test the ideas introduced here on a NISQ computer. While the channels we considered are standard theoretical tools for analyzing noise, more complicated effects such as crosstalk occur in real devices. Implementations on NISQ computers would assess our results \textit{in situ} and potentially give additional insight into how robust encodings can be designed. Also, for future numerical work one could consider additional datasets (e.g., the MNIST dataset~\cite{lecun_gradient-based_1998}) and test the performance of different encodings both in terms of learnability and robustness. Last, to further incorporate with recent literature, one could consider robustness in the context of ``data re-uploading''~\cite{perez-salinas_data_2019} or in the adversarial setting of Refs.~\cite{lu_quantum_2019, kechrimparis_channel_2019}.

% Broader scope withing NISQ applications
More broadly for NISQ applications, our work introduces a problem-specific strategy for dealing with errors. In contrast to error mitigation or even error correction techniques which allow errors to occur then attempt to mitigate (correct) them, our strategy of robust data encoding attempts to set up the problem such that any errors which do occur have no effect on the final result. % TODO: Cite decoherence free subspaces here?
We exploit the natural machine learning concept of data representation to achieve this effect, but in principle a similar idea could be used in other settings. This work defines fundamental concepts and proves several results for important questions that must be addressed on the path to practical applications with near-term quantum computers. 

% DONE
% =============================================================================
% =============================================================================
% =============================================================================
% =============================================================================
\section*{Acknowledgments} \label{sec:acknowledgements}
% =============================================================================
% =============================================================================
% =============================================================================
% =============================================================================

We thank Nana Liu and Arkin Tikku for useful discussions and feedback. RL thanks Filip Wudarski, Stuart Hadfield, and Tad Yoder for helpful comments. BC thanks Niraj Kumar, Matty Hoban, and Atul Mantri for helpful comments.
BC was supported by the Engineering and Physical Sciences Research Council (grant EP/L01503X/1), EPSRC Centre for Doctoral Training in Pervasive Parallelism at the University of Edinburgh, School of Informatics and Entrapping Machines, (grant FA9550-17-1-0055). 
RL acknowledges partial support from the \href{https://unitary.fund}{Unitary Fund}.

% ============
% bibliography
% ============

% \bibliographystyle{ieeetr}
% \bibliographystyle{unstrnat}
% \bibliography{noise-robust-classifier}

\newpage
\appendix

%%%%%%%%%%%%%%%%%%%%%%%%%%%%%%%%%%%%%%%%%%%%%%%%%%%%%%%%%%%%%%%%%
% Appendix A Preliminaries & Useful Formulae
%%%%%%%%%%%%%%%%%%%%%%%%%%%%%%%%%%%%%%%%%%%%%%%%%%%%%%%%%%%%%%%%%

% =============================================================================
% =============================================================================
\section{Preliminaries and useful formulae} \label{app:useful-formulae}
% =============================================================================
% =============================================================================

The single qubit Pauli operators are
\begin{equation} \label{eqn:paulis}
    I := \left[ \begin{matrix}
        1 & 0\\
        0 & 1 \\
    \end{matrix} \right], \ 
    X := \left[ \begin{matrix}
        0 & 1\\
        1 & 0 \\
    \end{matrix} \right], \
    Y := \left[ \begin{matrix}
        0 & -i \\
        i & 0 \\
    \end{matrix} \right], \
    Z := \left[ \begin{matrix}
        1 & 0 \\
        0 & -1 \\
    \end{matrix} \right]
\end{equation}

Let $\rho$ be a single qubit state with matrix elements $\rho_{ij}$, i.e.
\begin{equation}
    \rho := \left[ \begin{matrix}
        \rho_{00} & \rho_{01} \\
        \rho_{10} & \rho_{11} \\
    \end{matrix} \right] .
\end{equation}
Then,
\begin{align*}
    X \rho X &= \left[ \begin{matrix}
        \rho_{11} & \rho_{10} \\
        \rho_{01} & \rho_{00} \\
    \end{matrix} \right]   \\
    Y \rho Y &= \left[ \begin{matrix}
        \rho_{11} & - \rho_{10} \\
        - \rho_{01} & \rho_{00} \\
    \end{matrix} \right]   \\
    Z \rho Z &= \left[ \begin{matrix}
        \rho_{00} & - \rho_{01} \\
        - \rho_{10} & \rho_{11} \\
    \end{matrix} \right]   
\end{align*}

Defining the projectors $\Pi_0 = |0\>\<0|$ and $\Pi_1= |1\> \< 1|$, one can show
\begin{align}
    \Tr [ \Pi_0 X \rho X] &= \Tr [ \Pi_1 \rho] \\
    \Tr [ \Pi_0 Y \rho Y] &= \Tr [ \Pi_1 \rho] \\
    \Tr [ \Pi_0 Z \rho Z] &= \Tr [ \Pi_0 \rho] ,
\end{align}

For any hermitian matrix $A = [A_{ij}]$ and any unitary matrix $U = [U_{ij}]$, we have
\begin{multline} \label{eqn:useful-00mtx-elt}
    \Tr [ \Pi_0 U A U^\dagger] =\\
    |U_{00}|^2 A_{00} + 2 \Re [ U_{00}^* U_{01} A_{10}] + |U_{01}|^2 A_{11} .
\end{multline}
Similarly, one can show that
\begin{multline} \label{eqn:useful-11mtx-elt}
    \Tr[ \Pi_{1} U A U^\dagger ] =\\
     |U_{10}|^{2} A_{00} + 2 \Re [ U_{11}^* U_{10} A_{01}] + |U_{11}|^2 A_{11} .
\end{multline}
If we further assume the single qubit parameterized unitary, $U(\boldsymbol \alpha)$, has the following decomposition: $R_z(2\alpha_1)R_y(2\alpha_2)R_z(2\alpha_3)$ (up to a global phase) \cite{nielsen_quantum_2010}, we get:
\begin{equation}\label{eqn:single_qubit_unitary_decompostion}
    U(\boldsymbol\alpha) = \left[ \begin{matrix}
        e^{i ( - \alpha_1  - \alpha_3)} \cos\alpha_2&
        -e^{i ( - \alpha_1+ \alpha_3 )} \sin \alpha_2\\
        e^{i ( \alpha_1  - \alpha_3 )} \sin \alpha_2 & 
        e^{i ( \alpha_1  + \alpha_3)} \cos \alpha_2 \\
    \end{matrix} \right]
\end{equation}
Therefore, we get the various terms to be:
\begin{align*}
    &|U_{00}|^2 = \cos^2(\alpha_2) \\
    &|U_{01}|^2 = |U_{10}|^2 = \sin^2(\alpha_2) \\
    &|U_{11}|^2 = \cos^2(\alpha_2)\\
     &U_{00}^*U_{01} = -e^{i2\alpha_3}\cos(\alpha_2)\sin(\alpha_2) = -\frac{1}{2}e^{i2\alpha_2}\sin(2\alpha_2) \\
     &U_{11}^*U_{10}= e^{-i2\alpha_3}\cos(\alpha_2)\sin(\alpha_2) = \frac{1}{2}e^{-2i\alpha_3}\sin(2\alpha_2)
\end{align*}
So the conditions, (\ref{eqn:useful-00mtx-elt}, \ref{eqn:useful-11mtx-elt}) become:
\begin{multline} 
  \Tr [ \Pi_0 U A U^\dagger] \\
  =|U_{00}|^2 A_{00} + 2 \Re [ U_{00}^* U_{01} A_{10}] + |U_{01}|^2 A_{11}  \\
  = \cos^2\left(\alpha_2\right)A_{00} + \sin^2\left(\alpha_2\right)A_{11} - \Re[e^{2i\alpha_3}\sin\left(\alpha_2\right)A_{10}] \label{eqn:useful-00mtx-elt-unitary-filled} 
\end{multline}
\begin{multline}
        \Tr[ \Pi_{1} U A U^\dagger ] 
        =|U_{10}|^{2} A_{00} + 2 \Re [ U_{11}^* U_{10} A_{01}] + |U_{11}|^2 A_{11} \\
      =\sin^2\left(\alpha_2\right)A_{00} +\cos^2\left(\alpha_2\right)A_{11}
      +\Re[e^{-2i\alpha_3}\sin\left(2\alpha_2\right)A_{01}] \label{eqn:useful-11mtx-elt-unitary-filled}
\end{multline}

% Probably omit following paragraph -- not important -- Move to appendix if appropriate
% Given $U$, the basis-dependant Kraus operators can be constructed via
% %
% \begin{equation}
%     E_k = \< e_k | U | e_0 \rangle 
% \end{equation}
% %
% where $|e_k\rangle$ forms an orthonormal basis for the environmental Hilbert space. Conversely, given Kraus operators $E_k$, the unitary on the composite space can be constructed via
% %
% \begin{equation}
%     U |j\> |e_0\> = \sum_k E_k |j\> |e_0\rangle 
% \end{equation}
% %
% where $|j\>$ denotes the $j$th computational basis vector.

%%%%%%%%%%%%%%%%%%%%%%%%%%%%%%%%%%%%%%%%%%%%%%%%%%%%%%%%%%%%%%%%%
% Appendix B: Further Proofs
%%%%%%%%%%%%%%%%%%%%%%%%%%%%%%%%%%%%%%%%%%%%%%%%%%%%%%%%%%%%%%%%%
% \input{robust-data-encoding/appendix_b_proofs.tex}

% =============================================================================
% =============================================================================
\section{Proofs and Additional Results} \label{app:further_proofs}
% =============================================================================
% =============================================================================

Here we give the explicit proofs for the remaining theorems (which we also repeat here for completeness) in the main text, and some others introduced here.

 %%%%%%%%%%%%%%%%%%%%%%%%%%%%%%%%%%%%%%%%%%%%%%%%%%%%%%%%%%%%%%%%%%%%%%%%%%%%%%
 % MEASUREMENT NOISE
 %%%%%%%%%%%%%%%%%%%%%%%%%%%%%%%%%%%%%%%%%%%%%%%%%%%%%%%%%%%%%%%%%%%%%%%%%%%%%%

\subsection{Robustness to measurement noise} \label{app_ssec:meas_noise}

Just as the case of quantum compilation~\cite{sharma_noise_2020}, we can deal with measurement noise in the classifier:

\begin{definition} \label{defn:measurement_noise}
\textit{Measurement noise} (MN) is defined as a modification of the standard POVM basis elements, $\{\Pi_0 = \ketbra{0}{0}, \Pi_1 = \ketbra{1}{1}\}$ by the channel $\mathcal{E}_{\vec{p}}^{\textnormal{meas}}$ with assignment matrix $\vec{p}$ for a single noiseless qubit:
\begin{align}
    \Pi_0 &= \ketbra{0}{0} \overset{\mathcal{E}_{\vec{p}}^{\textnormal{meas}}}{\rightarrow} \tilde{\Pi}_0 = p_{00}\ketbra{0}{0} + p_{01}\ketbra{1}{1}\nonumber\\
    \Pi_1 &= \ketbra{1}{1} \overset{\mathcal{E}_{\vec{p}}^{\textnormal{meas}}}{\rightarrow} \tilde{\Pi}_1 = p_{10}\ketbra{0}{0} + p_{11}\ketbra{1}{1} \label{eqn:noisy_povm_elements}\\
    \vec{p}& :=  \left(\begin{array}{cc}
         p_{00} & p_{01}  \\
         p_{10} & p_{11} 
    \end{array}\right)\nonumber
\end{align}
where $p_{00} + p_{10} = 1, p_{10} +p_{11} = 1$, and hence $p_{kl}$ is the probability of getting the $k$ outcome given the $l$ input. Furthermore, we assume that $p_{kk} > p_{kl}$ for $k \neq l$.
\end{definition}

The definition for the general case of $n$ qubit measurements can be found in \cite{sharma_noise_2020}, but we shall not need it here, since we only require measuring a single qubit to determine the decision function. More general classifiers which measure multiple qubits (e.g. and then take a majority vote for the classification) could also be considered, but these are outside the scope of this work. Now, we can show the following result in a similar fashion to the above proofs:

\begin{theorem} \label{thm:measurement_noise_robustness}
Let $\mathcal{E}_{\vec{p}}^{\textnormal{meas}}$ define measurement noise acting on the classification qubit and consider a quantum classifier on data from the set $\mathcal{X}$. 
Then, for any encoding $E: \mathcal{X} \rightarrow \mathcal{D}_2$, we have complete robustness
    \begin{equation}
        \mathcal{R} (\mathcal{E}_{\vec{p}}^{\textnormal{meas}}, E, \yhat) = \mathcal{X}
    \end{equation}
     if the measurement assignment probabilities satisfy $p_{00} > p_{01}, p_{11} > p_{10}$.
\end{theorem}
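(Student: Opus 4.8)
The plan is to mimic the proof of Theorem~\ref{thm:robustness_pauli_noise_xy}, tracking how the noisy-measurement outcome probability compares to the threshold $1/2$ as a function of the ideal one. Under measurement noise the predicted label is obtained by replacing $\Pi_0$ with the corrupted effect $\tilde{\Pi}_0 = p_{00}\Pi_0 + p_{01}\Pi_1$ of Def.~\ref{defn:measurement_noise}, so that $\yhat[\mathcal{E}_{\vec{p}}^{\textnormal{meas}}(\rhotildex)] = 0$ if and only if $\Tr[\tilde{\Pi}_0\rhotildex] \geq 1/2$. Since robustness (Def.~\ref{def:robust-point}) only compares this prediction with the noiseless one, it suffices to show that $\Tr[\tilde{\Pi}_0\rhotildex]$ lies on the same side of $1/2$ as $\Tr[\Pi_0\rhotildex]$ for every processed state $\rhotildex$.

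First I would expand $\Tr[\tilde{\Pi}_0\rhotildex] = p_{00}\Tr[\Pi_0\rhotildex] + p_{01}\Tr[\Pi_1\rhotildex]$ and eliminate $\Tr[\Pi_1\rhotildex] = 1 - \Tr[\Pi_0\rhotildex]$ by resolution of the identity, giving the affine relation $\Tr[\tilde{\Pi}_0\rhotildex] = (p_{00}-p_{01})\Tr[\Pi_0\rhotildex] + p_{01}$. The crucial step is to invoke the constraints on the assignment matrix in Def.~\ref{defn:measurement_noise}, which together with the completeness $\tilde{\Pi}_0 + \tilde{\Pi}_1 = I$ force $p_{01} = 1 - p_{00}$ (equivalently, they pin the readout to the symmetric form $p_{01}=p_{10}$, $p_{00}=p_{11}$). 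Substituting this and centering at the decision threshold yields the factored identity
\begin{equation}
    \Tr[\tilde{\Pi}_0\rhotildex] - \tfrac{1}{2} = (2 p_{00} - 1)\left(\Tr[\Pi_0\rhotildex] - \tfrac{1}{2}\right),
\end{equation}
which exhibits the noisy deviation from the threshold as a fixed multiple of the noiseless one.

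Then I would conclude by observing that the hypothesis $p_{00} > p_{01}$ is exactly $2p_{00} - 1 > 0$, so the right-hand side preserves the sign of $\Tr[\Pi_0\rhotildex] - 1/2$ (the condition $p_{11} > p_{10}$ is the same inequality after using the normalization). Hence if the noiseless prediction is $\yhat = 0$, i.e.\ $\Tr[\Pi_0\rhotildex] \geq 1/2$, then $\Tr[\tilde{\Pi}_0\rhotildex] \geq 1/2$ and the noisy prediction is also $0$; symmetrically for $\yhat = 1$. Since this argument uses no property of the encoding $E$ and holds for arbitrary $\rhotildex$, every $\vec{x} \in \mathcal{X}$ is robust, giving $\mathcal{R}(\mathcal{E}_{\vec{p}}^{\textnormal{meas}}, E, \yhat) = \mathcal{X}$.

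The main obstacle is the boundary behaviour at $\Tr[\Pi_0\rhotildex] = 1/2$ together with the fact that the clean sign-preservation identity genuinely requires the symmetric structure enforced by the assignment-matrix constraints: for an asymmetric readout the affine map would translate the threshold away from $1/2$, and points with $\Tr[\Pi_0\rhotildex]$ just below $1/2$ could be flipped, breaking complete robustness. Making explicit that the stated hypotheses $p_{00} > p_{01}$ and $p_{11} > p_{10}$, combined with column-stochasticity of $\vec{p}$, are precisely what keeps the multiplier $2p_{00}-1$ positive is the one place where care is needed; the rest is the same short computation as in Theorem~\ref{thm:robustness_pauli_noise_xy}.
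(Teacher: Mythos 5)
Your proposal is correct and follows essentially the same route as the paper's proof in Appendix~\ref{app_ssec:meas_noise}: expand $\Tr[\tilde{\Pi}_0\rhotildex]$ as an affine function of $\Tr[\Pi_0\rhotildex]$ via resolution of the identity, then use $p_{00}+p_{01}=1$ together with $p_{00}>p_{01}$ to show the threshold $1/2$ is preserved. Your factored form $(2p_{00}-1)\bigl(\Tr[\Pi_0\rhotildex]-\tfrac{1}{2}\bigr)$ is just a cleaner packaging of the paper's inequality chain, and your explicit observation that the assignment-matrix constraints plus POVM completeness are what force the normalization $p_{01}=1-p_{00}$ is a point the paper uses but glosses over.
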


\begin{proof}
We can write the measurement noise channel acting on the POVM elements as:
\begin{align}
    \mathcal{E}_{\vec{p}}^{\textnormal{meas}}(\Pi^{(c)}_0\otimes I^{\otimes n-1})  = (p_{00}\ketbra{0}{0} + p_{01}\ketbra{1}{1}) \otimes I^{\otimes n-1} \nonumber
\end{align}
Again, if we had correct classification before the noise, $\Tr(\ketbra{0}{0}) \geq 1/2$, then:
\begin{align*}
    &\Tr[\mathcal{E}_{\vec{p}}^{\textnormal{meas}}(\Pi^{(c)}_0\otimes I^{\otimes n-1})\rhotildex]\\ 
    &= \Tr[(\{p_{00}\ketbra{0}{0} + p_{01}\ketbra{1}{1}\} \otimes I^{\otimes n-1})\rhotildex]\\
    &= p_{00}\Tr[(\ketbra{0}{0}\otimes I^{\otimes n-1})\rhotildex] + p_{01}\Tr[(\ketbra{1}{1} \otimes I^{\otimes n-1})\rhotildex]\\
    &= p_{00}\Tr[(\ketbra{0}{0}\otimes I^{\otimes n-1})\rhotildex] \\ 
    &+ p_{01}(1- \Tr[(\ketbra{0}{0}\otimes I^{\otimes n-1}) \rhotildex])\\
    &= (p_{00} - p_{01})\Tr[(\ketbra{0}{0}\otimes I^{\otimes n-1})\rhotildex] + p_{01}\\
    &\geq (p_{00} - p_{01})1/2 + p_{01} = 1/2(p_{00} + p_{01}) = 1/2
\end{align*}
where in the last line, we used the fact that $p_{00}+p_{01}=1$ and our assumption that $p_{00} >  p_{01}$. The same result holds if the vector was classified as $1$, and hence the classifier is robust to measurement noise. 
\end{proof}
Just as above, we can replace the ideal state, $\rhotildexi$ with a noisy state, $\mathcal{E}(\rhotildexi)$, where the operator accounts for other forms of noise, not including measurement noise. We can see this allows us to take a model which is robust without measurement noise, and `upgrade' it to one which is. However, we may be able to find looser restrictions by considering different types of noise together, rather than in this modular fashion.

To illustrate the results of \theref{thm:measurement_noise_robustness} we focus on the dense angle encoding, which can achieve nearly 100\% accuracy on the ``vertical'' dataset. We then compute the percentage which would be misclassified as a function the assignment probabilities in the noisy projectors in \eqref{eqn:noisy_povm_elements}. The results are seen in \figref{fig:measurement_noise_results}.

\begin{figure}[!ht]
\begin{tikzpicture}
  \node (img)  {\includegraphics[width = 0.3\textwidth, height = 0.22\textwidth]{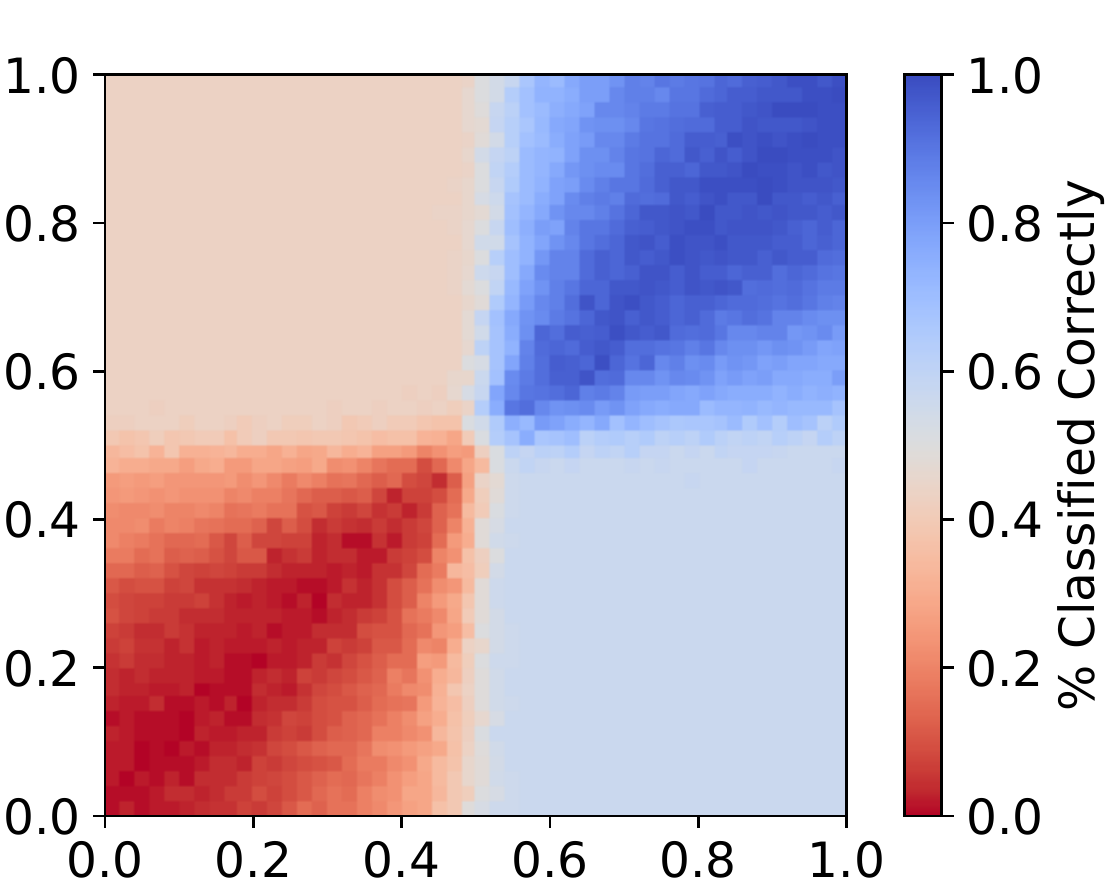}};
  \node[below=of img, node distance=0cm, yshift=1.1cm,font=\color{black}] {\hspace{-2.5em}\large{$p_{11}$}};
  \node[left=of img, node distance=10cm, rotate=90, anchor=center,yshift=-0.9cm,font=\color{black}] {\large{$p_{00}$}};
\end{tikzpicture}
\caption{Misclassification percentage as a result of (b) measurement noise as a function of probabilities $\{p_{00}, p_{11}\}$. If either $p_{00}$ or $p_{11}$ is less than $1/2$, then half the correctly classified points will be misclassified, with the probability increasing as expected, with the number of misclassified points increasing as the off diagonal terms, $p_{01}, p_{10} \rightarrow 1$, as expected from \theref{thm:measurement_noise_robustness}. By classified `correctly' in this context, we mean the fraction of points which are classified \emph{the same} with and without noise. 
}
\label{fig:measurement_noise_results}
\end{figure}

 %%%%%%%%%%%%%%%%%%%%%%%%%%%%%%%%%%%%%%%%%%%%%%%%%%%%%%%%%%%%%%%%%%%%%%%%%%%%%%
 % FACTORIZABLE NOISE
 %%%%%%%%%%%%%%%%%%%%%%%%%%%%%%%%%%%%%%%%%%%%%%%%%%%%%%%%%%%%%%%%%%%%%%%%%%%%%%
 
\subsection{Robustness for Factorizable Noise} \label{app:factorizable-noise}

\begin{theorem}\label{thm:factorizable-noise-before-meas}
If $\E$ is any noise channel which factorizes into a single qubit channel, and a multiqubit channel as follows:
\begin{align}
    \E(\rho) = \E_{\Bar{c}}(\rhotildex^{\Bar{c}}) \otimes \E_c(\rhotildex^c)
\end{align}
where WLOG $\E_c$ acts only on the classification qubit $(\rhotildex^c = \Tr_{\Bar{c}}(\rhotildex))$ after encoding and unitary evolution, and $\E_{\Bar{c}}$ acts on all other qubits arbitrarily, $(\rhotildex^{\Bar{c}} = \Tr_c(\rhotildex))$. Further assume the state meets the robust classification requirements for the single qubit error channel $\E_c$. Then the classifier will be robust to $\E$.
\end{theorem}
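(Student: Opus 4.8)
The plan is to reduce the multi-qubit statement to the single-qubit robustness hypothesis already assumed for $\E_c$, exploiting the fact that the decision rule~\eqref{eqn:decision_rule} reads out only the classification qubit $c$. First I would record the elementary observation that, since $\Pi_0^c = \ketbra{0}{0}_c \ot I_{\bar{c}}$, the classification probability depends on the processed state $\rhotildex$ solely through its reduced state $\rhotildex^c = \Tr_{\bar{c}}(\rhotildex)$ on the classification qubit:
\begin{equation}
    \Tr[\Pi_0^c \rhotildex] = \Tr\!\left[ \ketbra{0}{0}\, \Tr_{\bar{c}}(\rhotildex) \right] = \Tr[\ketbra{0}{0}\, \rhotildex^c] .
\end{equation}
Hence $\yhat[\rhotildex]$ is a function of $\rhotildex^c$ alone, and the same is true of $\yhat$ applied to any state obtained from $\rhotildex$ by a further channel.

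The key step --- and the only one carrying real content --- is to show that applying the factorized channel $\E = \E_{\bar{c}} \ot \E_c$ and then tracing out the non-classification qubits yields exactly $\E_c$ acting on $\rhotildex^c$. Expanding $\rhotildex = \sum_k A_k \ot B_k$ with $A_k$ supported on $\bar{c}$ and $B_k$ on $c$, linearity gives $\E(\rhotildex) = \sum_k \E_{\bar{c}}(A_k) \ot \E_c(B_k)$, and tracing out $\bar{c}$ together with trace preservation of $\E_{\bar{c}}$ (so that $\Tr[\E_{\bar{c}}(A_k)] = \Tr[A_k]$) gives
\begin{equation}
    \Tr_{\bar{c}}[\E(\rhotildex)] = \sum_k \Tr[A_k]\, \E_c(B_k) = \E_c\!\left( \Tr_{\bar{c}}(\rhotildex) \right) = \E_c(\rhotildex^c) .
\end{equation}
The only property of $\E_{\bar{c}}$ I would need is that it is trace preserving; its detailed action is irrelevant, which is precisely why it may act arbitrarily on the remaining qubits. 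I expect this reduced-state identity to be the mild obstacle, in the sense that once it is in place the rest is immediate.

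Finally I would combine the two displays. Since $\Tr[\Pi_0^c \E(\rhotildex)] = \Tr[\ketbra{0}{0}\, \E_c(\rhotildex^c)]$, the noisy classification of $\rhotildex$ under $\E$ coincides with the single-qubit classification of $\rhotildex^c$ under $\E_c$. The robustness hypothesis for $\E_c$ states precisely that $\Tr[\ketbra{0}{0}\rhotildex^c] \ge 1/2 \Rightarrow \Tr[\ketbra{0}{0}\E_c(\rhotildex^c)] \ge 1/2$, and likewise with the reversed strict inequality. Checking the two cases $\yhat[\rhotildex] = 0$ and $\yhat[\rhotildex] = 1$ separately then yields $\yhat[\E(\rhotildex)] = \yhat[\rhotildex]$, which is robustness in the sense of Def.~\ref{def:robust-point}. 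No calculation beyond these trace manipulations is needed, so the argument is short once the reduced-state identity is established.
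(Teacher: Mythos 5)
Your proposal is correct and follows essentially the same route as the paper's proof: both reduce the claim to the identity $\Tr[\Pi_0^c\,\E(\rhotildex)] = \Tr[\ketbra{0}{0}\,\E_c(\rhotildex^c)]$, using trace preservation of $\E_{\bar{c}}$, and then invoke the single-qubit robustness hypothesis for $\E_c$ in the two cases $\yhat=0$ and $\yhat=1$. The one (welcome) refinement is your operator decomposition $\rhotildex = \sum_k A_k \otimes B_k$, which establishes $\Tr_{\bar{c}}[\E(\rhotildex)] = \E_c(\rhotildex^c)$ for arbitrary, possibly entangled $\rhotildex$, whereas the paper's displayed computation takes the product form $\E_{\bar{c}}(\rhotildex^{\bar{c}})\otimes\E_c(\rhotildex^c)$ at face value.
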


\begin{proof}
The correct classification again depends on the classification qubit measurement probabilities:
$\Tr(\ketbra{0}{0}^{(c)}\otimes I^{\otimes n-1}\rhotildex), \Tr(\ketbra{1}{1}^{(c)}\otimes I^{\otimes n-1} \rhotildex)$. If $\rhotildex$ is robust to the single qubit error channel $\E_c$, this means 
\begin{align*}
\Tr(\Pi_0^c\rhotildex) \geq 1/2 &\implies \Tr(\Pi_0^c \E_c(\rhotildex)) \geq 1/2 \\
\Tr(\Pi_1^c \rhotildex) < 1/2  &\implies \Tr(\Pi_1^c \E_c(\rhotildex)) < 1/2
\end{align*}
Then WLOG, assume the point $\vec{x}$ classified as $y(\rhotildex) = 0$ before the noise, then:
\begin{align*}
\Tr\left(\Pi_0^c\E(\rhotildex)\right)&= \Tr\left(\Pi_0^c \left[\E_{\Bar{c}}(\rhotildex^{\Bar{c}} \otimes \E_c(\rhotildex^c)\right]\right) \\
&=\Tr_{\Bar{c}}\left(\E_{\Bar{c}}(\rhotildex^{\Bar{c}})\right) \Tr_{c}\left(\ketbra{0}{0} \E_c(\rhotildex^c)\right) \\
&=\Tr \left(\ketbra{0}{0} \E_c(\rhotildex^c)\right) \geq 1/2 \qedhere
\end{align*}
\end{proof}

The above theorem is a simple consequence of causality in the circuit, only errors which have to happen before the measurement can corrupt the outcome. As such, outside of single qubit errors, we only need to consider errors before the measurement which specifically involve the classification qubit.

\subsection{Fidelity Bound} \label{app_ssec:fidelity_bound}
Here we derive the fidelity bound, \eqref{eqn:fidelity_bound_average} in a similar fashion to \eqref{eqn:fidelity_bound_mixed_state}:
\begin{align}
    \Delta_\E C &:= \left| C_\E - C \right| \nonumber \\[1.0ex]
    &= \left|  \Tr [ D ( \E (\tilde{\sigma} ) - \tilde{\sigma}) ] \right| \nonumber\\   
    &\leq \frac{1}{M}\sum\limits_{i=1}^M| \Tr(D\left[ \E(\rhotildexi)\otimes\ketbra{y_i}{y_i} - \rhotildexi\otimes\ketbra{y_i}{y_i}\right])| \nonumber\\
    &\leq \frac{1}{M}\sum\limits_{i=1}^M||D||_{\infty} ||\left[\E(\rhotildexi)- \rhotildexi\right]\otimes\ketbra{y_i}{y_i}||_1 \nonumber\\
    &\leq \frac{2}{M}\sum\limits_{i=1}^M \sqrt{1-F(\E(\rhotildexi), \rhotildexi)} \label{eqn:fidelity_bound_average_appendix}
\end{align}
Again, we use H\"{o}lders, the Fuchs-van de Graaf and the triangle inequalities, with $||D||_\infty := \max_j | \lambda_j(D) | = 1$.

%%%%%%%%%%%%%%%%%%%%%%%%%%%%%%%%%%%%%%%%%%%%%%%%%%%%%%%%%%%%%%%%%
% Appendix C: Additional Numerical Results
%%%%%%%%%%%%%%%%%%%%%%%%%%%%%%%%%%%%%%%%%%%%%%%%%%%%%%%%%%%%%%%%%
% \input{robust-data-encoding/appendix_c_further_numerical_results.tex}

% =============================================================================
% =============================================================================
\section{More Details on Numerical Results} \label{app:numerical_results}
% =============================================================================
% =============================================================================

In this section, we present supplementary numerical results to those in the main text. Firstly, in \figref{fig:single_qubit_datasets}, we illustrate the three single qubit datasets we employ here, namely the ``vertical'', ``diagonal'' and ``moons''. The former two are linearly separable, whereas the ``moons'' dataset is nonlinear.
\begin{figure}
\subfloat[\label{subfig:random_vertical_boundary_dataset}]{
         \begin{tikzpicture}
              \node (img)  {\includegraphics[width = 0.13\textwidth, height = 0.12\textwidth]{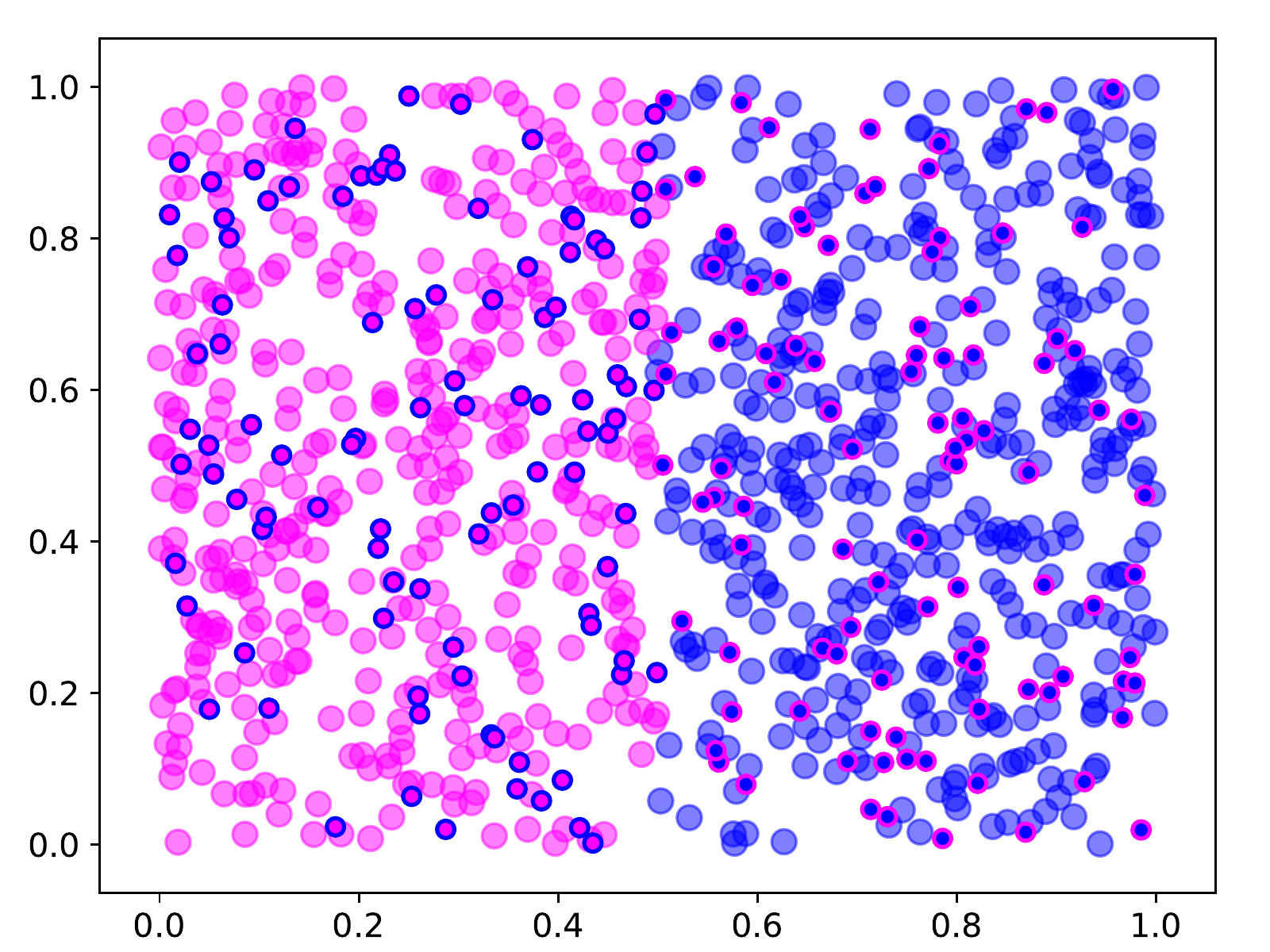}};
              \node[below=of img, node distance=0cm, yshift=1.2cm,font=\color{black}] {$x_1$};
              \node[left=of img, node distance=0cm, rotate=90, anchor=center,yshift=-0.95cm,font=\color{black}] {$x_2$};
        \end{tikzpicture}
    }\kern-1.5em
    \subfloat[ \label{subfig:random_diagonal_boundary_dataset_}]{
         \begin{tikzpicture}
      \node (img)  {\includegraphics[width=0.13\textwidth, height = 0.12\textwidth]{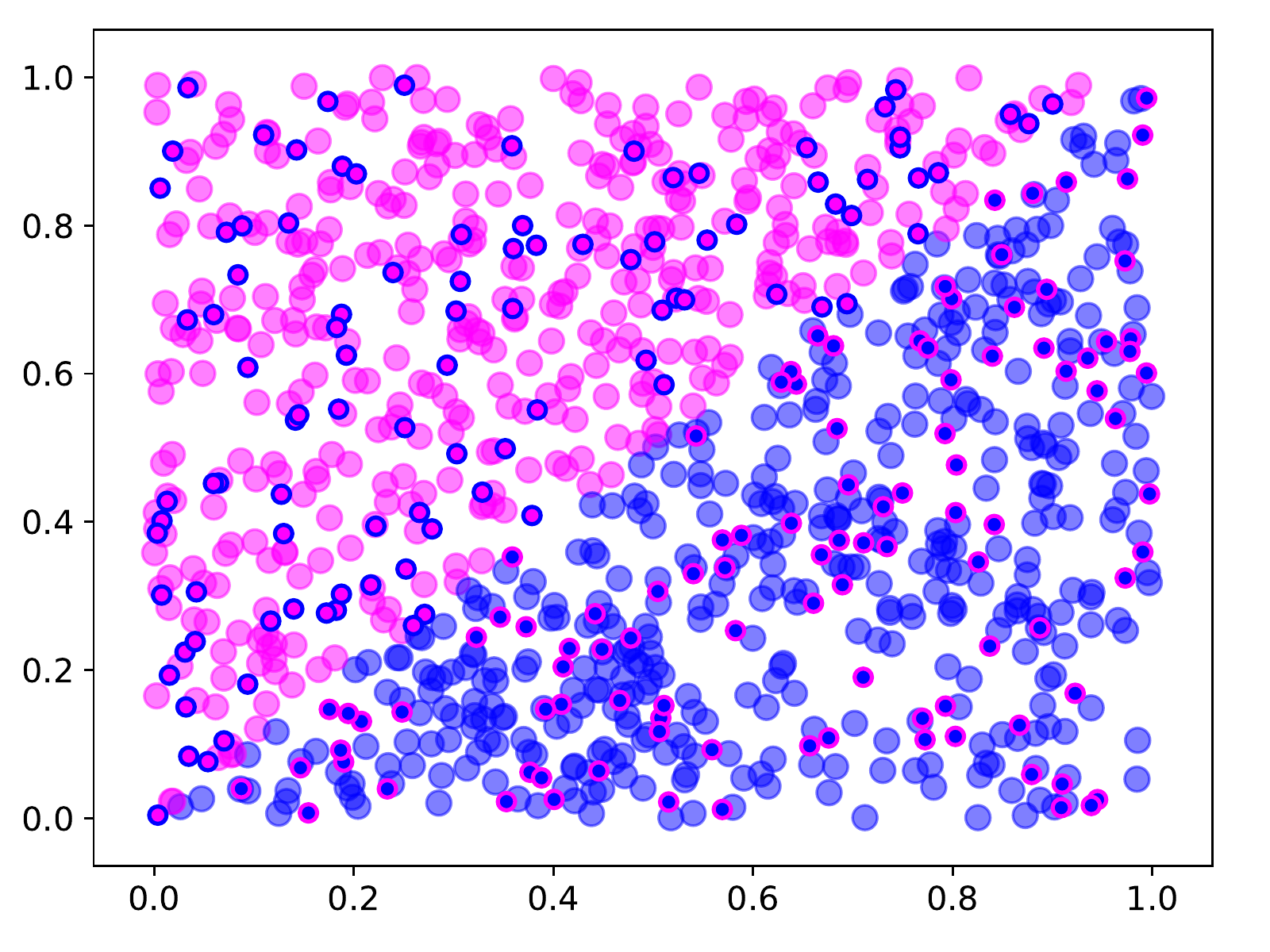}};
      \node[below=of img, node distance=0cm, yshift=1.2cm,font=\color{black}] {$x_1$};
      \node[left=of img, node distance=0cm, rotate=90, anchor=center,yshift=-0.95cm,font=\color{black}] {};
    \end{tikzpicture}
    }\kern-1.5em
    \subfloat[\label{subfig:moons_dataset}]{
             \begin{tikzpicture}
      \node (img)  {\includegraphics[width=0.13\textwidth, height = 0.12\textwidth]{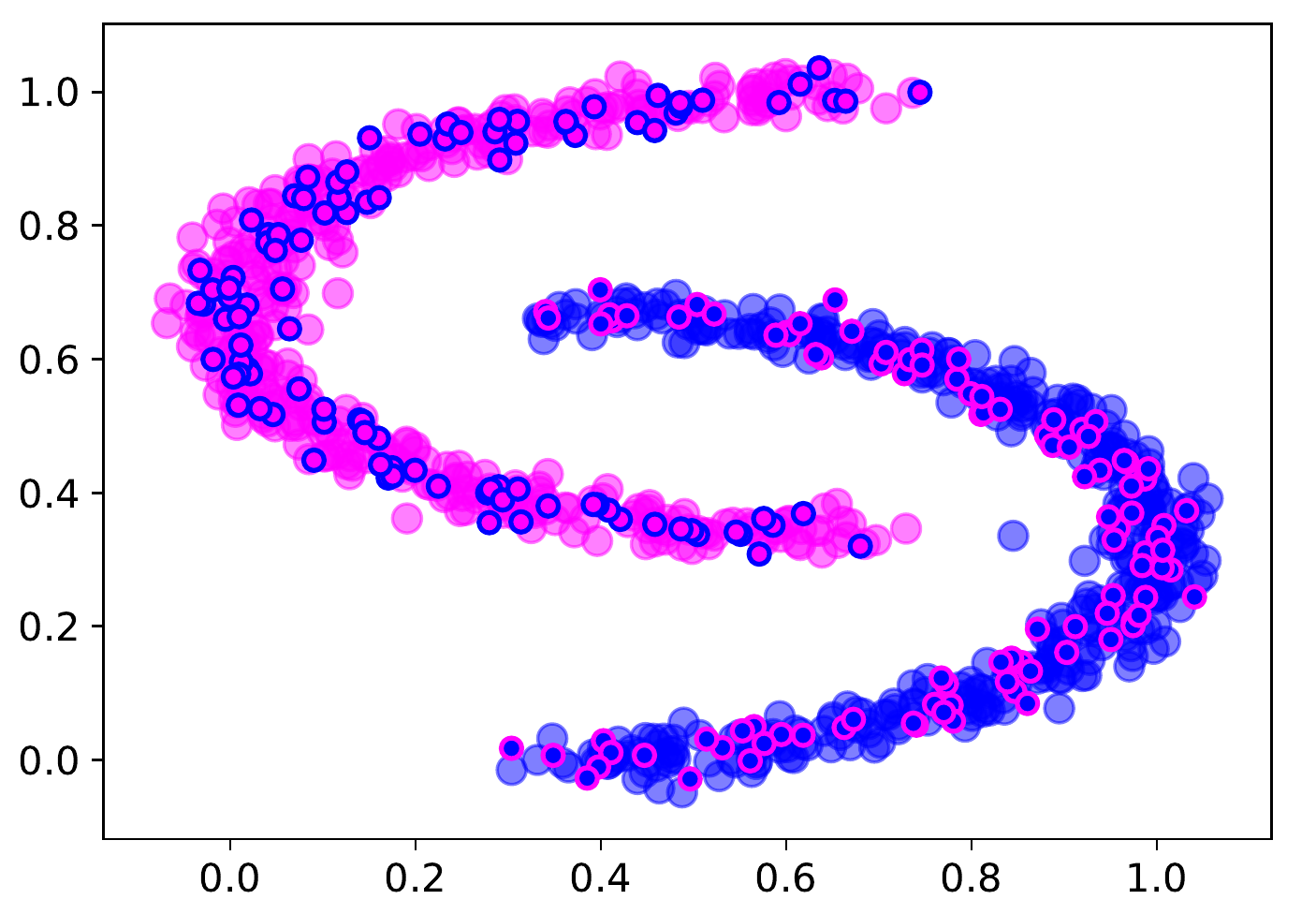}};
      \node[below=of img, node distance=0cm, yshift=1.2cm,font=\color{black}] {$x_1$};
      \node[left=of img, node distance=0cm, rotate=90, anchor=center,yshift=-0.95cm,font=\color{black}] {};
    \end{tikzpicture}
}
\caption{Three single qubit (two dimensional) datasets which we use. (a) ``vertical'', (b) ``diagonal'' and (c) the ``moons''  dataset from scikit-learn \cite{pedregosa_scikit-learn_2011} rotated by $90^{\circ}$ with a noise level of $0.05$. 20\% of each set is test data, indicated by the points circled with the opposite color. We chose the latter two due to the fact that the ``moons'' and ``vertical'' datasets can be well classified by the dense angle encoding, while the ``diagonal'' dataset can be well classified by the wavefunction encoding, which can be seen by studying the decision boundaries generated in \figref{fig:random_decision_boundaries_encodings}.}
\label{fig:single_qubit_datasets}
\end{figure}

Secondly, to complement the results of \figref{fig:dae_encoding_learnability_versus_robustness} in the main text, in Table~\ref{table:vertical_boundary_encodin_params_plot} we provide the best parameters found in the procedure. Each set of parameters (each row, measured in radians) performs optimally in one of three areas. The first is the noiseless environment, in which a $\theta\approx 2.9$ parameter performs optimally. The second is the amplitude-damped environment, in which $\theta\approx 1.6$ achieves the best accuracy, and finally, $\theta=0$ is the most robust point to encode in, for the whole dataset. For each of these parameter sets, we also test them in the other scenarios, for example, the best parameters found in the noisy environment ($[\theta, \phi] = [1.6, 3.9]$) have a higher $\delta$-Robustness ($81\%$) than those in the noiseless environment ($70\%$), since these parameters force points to be encoded closer to the $\ket{0}$ state, i.e., the fixed point of the channel in question.  
\begin{table}[ht]
\centering
 \begin{tabular}{||c | c | c | c||} 
 \hline
 Parameters &  Accuracy & Accuracy  & $\delta$-Robustness\\
  &   w/o Noise & w/ Noise & \\ [0.5ex] 
 \hline\hline
 $[\theta, \phi] = [2.9, 2.9]$ & $100\%$ & $84\%$ & $70\%$ \\ 
 \hline
 $[\theta, \phi] = [1.6, 3.9]$ & $49\%$ & $100\%$ & $81\%$ \\ 
 \hline
 $[\theta, \phi] = [0, 0]$ & $43\%$ & $43\%$ & $100\%$\\ 
 \hline
\end{tabular}
\caption{Optimal parameters $[\theta, \phi]$ for dense angle encoding (with parameters in $U(\vec{\alpha})$ trained in noiseless environment) in (a) noiseless environment, (b) noisy environment (i.e.\@ amplitude damping channel is added) and (c) for maximal robustness. Optimal parameters in noisy environment are closer to fixed point of amplitude damping channel ($\ket{0}$, i.e.\@ $\theta \equiv 0$) and give a higher value of $\delta$-robustness. }\label{table:vertical_boundary_encodin_params_plot}
\end{table}

\figref{fig:specific_classifier_circuits} illustrates the specific decompositions for the single and two qubit classifiers we utilize for the numerical results in the text. For the matrix representation of the circuit shown in \figref{fig:specific_classifier_circuits}(a), see~\eqref{eqn:single_qubit_unitary_decompostion}.

\begin{figure}[t!]
    \includegraphics[width = 0.45\textwidth, height = 0.15\textwidth]{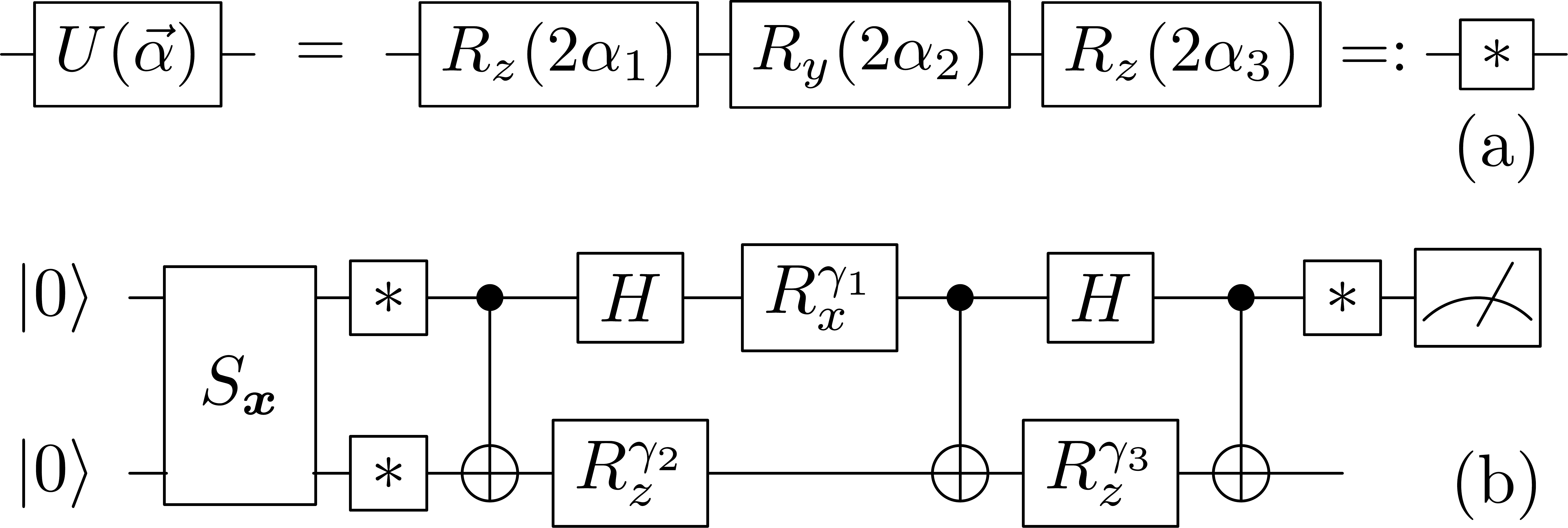}
    \caption{Circuit diagrams for the QNN ansatze we use to obtain numerical results. (a) Ansatz for the single qubit classifier. Here, $R_z^\alpha$ and $R_y^\beta$ denote rotations around the $z$-axis and $y$-axis with angles $\alpha, \beta$, respectively, of the Bloch sphere. We note that any element of $U(2)$ can be represented by this ansatz~\cite{nielsen_quantum_2010}. (b) Ansatz for the two qubit classifier, with $12$ parameters, $\{\alpha_k\}_{k=1}^{12}$. The first $6$ parameters, $\alpha_1,\dots,\alpha_6$, are contained in the first two single qubit unitaries, and $\alpha_{10}, \dots, \alpha_{12}$ are the parameters of the final single qubit gate. The parameters of the intermediate rotations are defined by $\{\gamma_1, \gamma_2, \gamma_3\} := \{2\alpha_7+\pi, 2\alpha_8, 2\alpha_9\}$. This decomposition can realize any two qubit unitary~\cite{vidal_universal_2004} up to global phase with the addition of a single qubit rotation on the bottom qubit at the end of the circuit. We omit this rotation since we only measure the first qubit for classification. As such, we reduce the number of trainable parameters ($\vec{\alpha}$) from $15$ to $12$.}
\label{fig:specific_classifier_circuits}
\end{figure}

To illustrate the results of \theref{thm:robustness_pauli_noise_xy}, we focus on the dense angle encoding, which can achieve nearly 100\% accuracy on the ``vertical'' dataset. We then compute the percentage which would be misclassified as a function of and Pauli noise parameters. The results are seen in \figref{fig:pauli_noise_results}, similar to that observed in \figref{fig:measurement_noise_results}. We note here, that for values of $p_X + p_Y > 1/2$, one has two strategies to achieve robustness. The first is to adjust the measurement basis as per \corrref{corr:pauli_x_robustness} and requires changing the model itself. Alternatively, one can apply an extra step of post processing and relabel every output `$\yhat = 0$' to `$\yhat = 1$', and vice versa.

\begin{figure}[!ht]
\begin{tikzpicture}
  \node (img)  {\includegraphics[width = 0.3\textwidth, height = 0.2\textwidth]{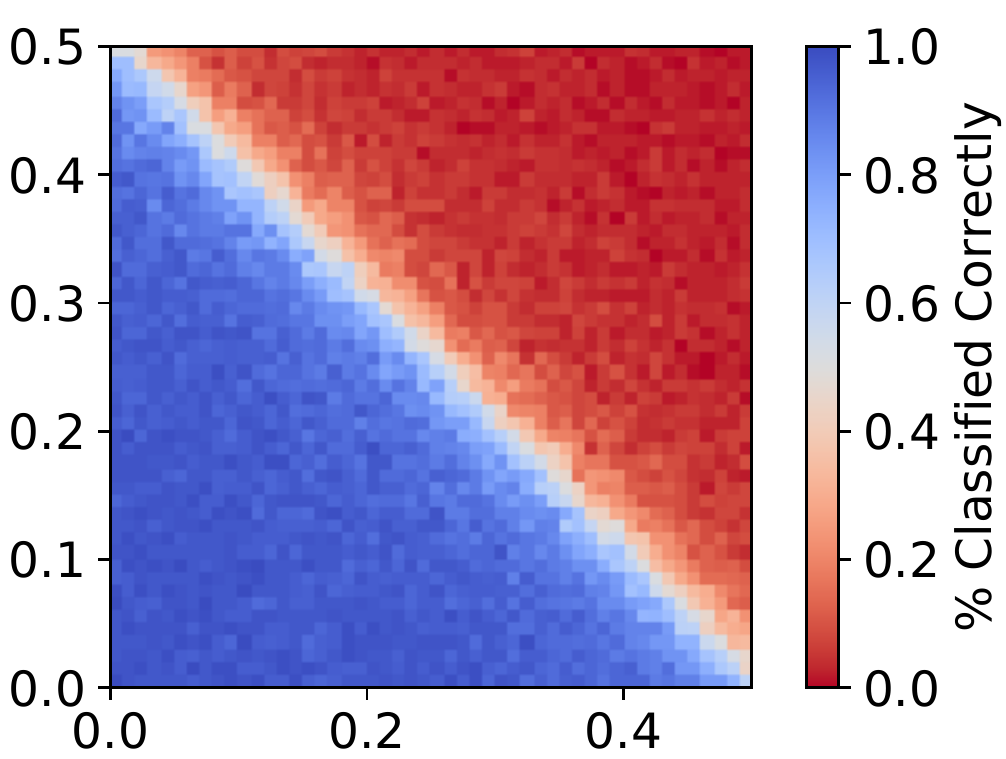}};
  \node[below=of img, node distance=0cm, yshift=1.2cm,font=\color{black}] {\hspace{-2.5em}\large{$p_X$}};
  \node[left=of img, node distance=10cm, rotate=90, anchor=center,yshift=-0.95cm,font=\color{black}] {\large{$p_Y$}};
\end{tikzpicture}
    \caption{Misclassification percentage as a result of Pauli noise with strengths $\{p_X, p_Y, p_Z = 0\}$ As expected, classification is robust for $p_X + p_Y < 1/2$ based on Theorem \ref{thm:robustness_pauli_noise_xy}, and a sharp transition occurs when this contraint is violated to give maximal misclassification. By classified `correctly' in this context, we mean the fraction of points which are classified \emph{the same} with and without noise.  
}
\label{fig:pauli_noise_results}
\end{figure}

Algorithm~\ref{alg:robust-algorithm} contains pseudocode for the encoding learning algorithm discussed in the main text.

\begin{algorithm}[H] \label{alg:robust-algorithm}
\SetAlgoLined
\SetKwInOut{Input}{Input}\SetKwInOut{Output}{output}
\Input{Noise parameters, $\vec{p}$, parameterized quantum circuit, $U(\boldsymbol\alpha)$, $M$ Labelled data examples $(\vec{x}_i, y_i)_{i=1}^M$, encoding set $\{f_l, g_l\}_{l=1}^K$, cost function C.}
\KwResult{Optimized encoding for noise and dataset.}
 Initialize encoding, $(f^*, g^*) \leftarrow \{f_l, g_l\}_{l=1}^K$ and parameters, $\{\vec{\theta}_j\} \leftarrow (0, 2\pi ]_j ~\forall j$ heuristically or at random. Initialize $C^*= M$\; 
 \For{j=1\dots K}{ \do
  Select subset of data: $(\vec{x}_i, y_i)_{i=1}^D \leftarrow(\vec{x}_i, y_i)_{i=1}^M$\;
  Encode each sample using encoding choice, $(f_j, g_j)$: prepare $\{\rhotildexi^{\boldsymbol\alpha, \vec{\theta}_j}\}_{i=1}^D$\;
  $\boldsymbol\alpha^* \leftarrow  \argmin_{\boldsymbol\alpha} C_D(\boldsymbol\alpha, \vec{\theta}_j)$\;
    Add noise with parameters $\vec{p}$: $\{\rhotildex^{\boldsymbol\alpha^*, \vec{\theta}_j}\}_{i=1}^D \leftarrow \{\E_{\vec{p}}(\rhotildex^{\boldsymbol\alpha^*, \vec{\theta}_j^*})\}_{i=1}^D$\;
    $\{\vec{\theta}^*_j\}  \leftarrow \argmin_{\vec{\theta}_j} C_D(\boldsymbol\alpha^*, \vec{\theta}_j)$\; 
    $C^*_j \leftarrow  C_D(\boldsymbol\alpha^*, \vec{\theta}_j^*)$\;
  \If{$C^*_j \leq C^*$}{
  $C^* \leftarrow C^*_j$\;
  $(f^*, g^*) \leftarrow f_j(\vec{\theta}_j^*), g_j(\vec{\theta}_j^*$) \;
  }
  }
 \Output{$C^*, \vec{\alpha}^*, f^*, g^*$}
 \caption{Quantum Encoding Learning Algorithm (QELA)}
\end{algorithm}

% \section{Additional Data Encodings}

% \begin{definition}[Generalized Wavefunction Encoding ($\WF$)] \label{def:generalized_wf_encoding}
%     Given a feature vector $\vec{x} = [x_1, ..., x_N]^T \in \mathbb{R}^N$ with encoding parameters, $\{\theta_i \in \left(0, 1\right]\}$, the generalized wavefunction encoding maps $\vec{x} \mapsto E(\vec{x})$ given by
%     %
%     \begin{equation} \label{eqn:gwf-general}
%         |\vec{x}\> = \bigotimes_{i=1}^{\ceil*{N / 2}} \frac{\sqrt{1 + \theta_i x_{2i-1}}}{||\vec{x}||_2^2}  x_{2i}\ket{0} + \frac{\sqrt{1 - \theta_i x_{2i}}}{||\vec{x}||_2^2}  x_{2i-1}\ket{1} .
%     \end{equation}
% \end{definition}
% %
% This is a parameterized version of the wavefunction encoding~\eqref{eqn:wavefunction_encoding_amplitude} in qubit encoding~\eqref{eqn:qubit-encoding-general} form.
% %

\end{document}